\newtheorem{theorem}{Theorem}
\newtheorem{lemma}{Lemma}
\newtheorem{example}{Example}
\newtheorem{proposition}{Proposition}
\newtheorem{corollary}{Corollary}
\newtheorem{definition}{Definition}
\newcommand{\Rmnum}[1]{\uppercase\expandafter{\romannumeral #1}}
    \newcommand{\rev}[1]{{\color{blue}#1}}
    \newcommand\outline[1]{{\color{RoyalBlue}#1}}
    \newcommand{\com}[2]{\textbf{\color{blue} (COMMENT from [#1]: #2)}}
    \newcommand{\com}[2]{}
\newcommand*\bigcdot{\mathpalette\bigcdot@{.5}}
\newcommand*\bigcdot@[2]{\mathbin{\vcenter{\hbox{\scalebox{#2}{$\m@th#1\bullet$}}}}}
\begin{document}
%
\title{Optimal~$(0,1)$-Matrix Completion \\with Majorization Ordered Objectives}
%
%
%

\author{Yanfang~Mo,~\IEEEmembership{Member,~IEEE,}
        Wei~Chen,~\IEEEmembership{Member,~IEEE,}
        Keyou~You,~\IEEEmembership{Senior~Member,~IEEE}
        and~Li~Qiu,~\IEEEmembership{Fellow,~IEEE\\ \textbf{To the memory of Pravin Varaiya}}
\thanks{This work was partially supported by the Shenzhen-Hong Kong-Macau
Science and Technology Innovation Fund under project~(SZSTI21EG08), the Research Grants Council of Hong Kong, China, under the Theme-Based Research Scheme~(T23-701/14-N), Schneider Electric, Lenovo Group~(China) Limited, and the Hong Kong Innovation and Technology Fund~(ITS/066/17FP) under the HKUST-MIT Research Alliance Consortium. } 
\thanks{Y. Mo is with the School of Data Science, City University of Hong Kong, Tat Chee Avenue, Kowloon, Hong Kong, China (e-mail: yanfang.mo@cityu.edu.hk, ymoaa@connect.ust.hk).}
\thanks{W. Chen is with the Department of Mechanics and Engineering Science \& State Key Laboratory for Turbulence and Complex Systems, Peking University, Beijing 100871, China (e-mail: w.chen@pku.edu.cn).}
\thanks{K. You is with the Department of Automation, and Beijing National Research Center for Information Science and Technology, Tsinghua University, Beijing 100084, China (e-mail: youky@tsinghua.edu.cn).}
\thanks{L. Qiu is with the Department of Electronic and Computer Engineering, The Hong Kong University of Science and Technology, Clear Water Bay, Kowloon, Hong Kong, China (e-mail: eeqiu@ust.hk).}
}

\maketitle

\begin{abstract}
We propose and examine two optimal~$(0,1)$-matrix completion problems with majorization ordered objectives. They elevate the seminal study by Gale and Ryser from feasibility to optimality in partial order programming (POP), referring to optimization with partially ordered objectives. We showcase their applications in electric vehicle charging, portfolio optimization, and secure data storage. Solving such integer POP (iPOP) problems is challenging because of the possible non-comparability among objective values and the integer requirements. Nevertheless, we prove the essential uniqueness of all optimal objective values and identify two particular ones for each of the two inherently symmetric iPOP problems. Furthermore, for every optimal objective value, we decompose the construction of an associated optimal~$(0,1)$-matrix into a series of sorting processes, respectively agreeing with the rule of thumb ``peak shaving'' or ``valley filling.'' We show that the resulting algorithms have linear time complexities and verify their empirical efficiency via numerical simulations compared to the standard order-preserving method for POP.
\end{abstract}

\begin{IEEEkeywords}
Integer matrix completion, majorization, partial order programming, resource allocation, energy systems
\end{IEEEkeywords}

%
\IEEEpeerreviewmaketitle

\section{Introduction}\label{sec_introduction}

Zero-one matrix completion plays a prominent role in many areas like network construction~\cite{dau2014simple}, experimental block design~\cite{giovagnoli1981optimum}, bidimensional election~\cite{lari2014bidimensional}, discrete tomography~\cite{batenburg2005evolutionary}, and {indivisible resource allocation~\cite{mo2022Tensor}.} This line of research originates from the class of~$(0,1)$-matrices with specified row and column sums, namely
\begin{equation} \label{matrixclass}
\left\{[a_{ij}]\in\{0,1\}^{m\times n}~\middle\vert~ \sum_{j=1}^{n}a_{ij}=r_i,  \sum_{i=1}^{m}a_{ij}=c_j\right\},
\end{equation}
where~$r_i$ and~$c_j$ are respectively the prescribed line sums of the~$i$th row and~$j$th column for the~$m$ rows and~$n$ columns. Gale~\cite{gale1957theorem} and Ryser~\cite{ryser1957combinatorial} independently derived an inequality under the majorization order to characterize the existence of a matrix satisfying~\eqref{matrixclass}. After that, there have been a number of extensions in the literature~\cite{mo2022Tensor,nelson20150,mo2018staircase,fulkerson1959network,brualdi2019gale}. Most of them focus on completing matrices that belong to a subset of the mentioned matrix class, e.g., prescribing zeros/ones in certain positions~\cite{brualdi2003matrices,chen2016constrained,2007BrualdiDahl,mo2020market,anstee1982properties,fulkerson1960zero} and imposing structural constraints~\cite{chen1966realization,erdosgallai1960degree,berger2011dag}. See a more detailed discussion in Section~\ref{sec_relatedwork}.


However, in practice, we often wonder which completion is better.  
This fact motivates us to take the study of~$(0,1)$-matrix completion from feasibility to optimality. In this work, {we aim to find a~$(0,1)$-matrix~$[a_{ij}]\in\{0,1\}^{m\times n}$ to optimize the flatness of~$$\bm d\mp \left[\sum_{i=1}^{m}a_{i1}~\sum_{i=1}^{m}a_{i2}~\cdots~\sum_{i=1}^{m}a_{in}\right]',$$ subject to~$\sum_{j=1}^{n}a_{ij}=r_i$, where~$\bm r\in \mathbb{R}^m$ prescribes the row sums and~$\bm d\in \mathbb{R}^n$ is a given reference vector to the column sums of the optimized~$(0,1)$-matrix.}

The flatness objective comes from electric vehicle (EV) charging, where we require the smoothest remaining supply or aggregated load profile to ease the supply/demand balance. Other applications include portfolio optimization~\cite{markowitz2010portfolio} and secure data storage~\cite{robling1982cryptography}, where the flatness of the objective vector reflects the risk level of a portfolio of assets or the security level of a distributed data storage service.

{Mathematically, we can apply the majorization order to measure the flatness of a vector~\cite{yue1973optimality,olkin2016inequalities,capponi2016liability}.}
This practice is because majorization describes how evenly a quantity is dispersed in a vector and has been popular in measuring statistical dispersion~\cite{chang1993rearrangement} or economic disparity~\cite{mosler1994majorization}. Meanwhile, the majorization inequality in the Gale-Ryser theorem motivates us to study optimal~$(0,1)$-matrix completion with majorization ordered objectives~\cite{gale1957theorem,ryser1957combinatorial}. Then, we complement the existing studies by using majorization to evaluate the performance of matrix completion (optimality), in addition to characterizing the existence of particular matrices (feasibility).

To the best of our knowledge, we are the first to study optimal~$(0,1)$-matrix completion problems with vector-valued objectives ordered by majorization. The majorization ordered objectives bring unique challenges. Like other partial order programming (POP) problems, the optimization under majorization suffers from the question of whether the optimal objective value is unique or not because the attainable objectives are partially ordered rather than totally ordered, and two objective values may not be comparable. Even worse, the integer constraints lead to integer POP (iPOP) and further complicate the optimal matrix completion. Overall, it is challenging to find one of the optimal solutions to an iPOP problem, not to mention all of them. Nevertheless, we propose and address two iPOP problems under majorization, elevating the study in~\cite{gale1957theorem,ryser1957combinatorial} and generalizing several results in \cite{kim1998simple,wan2000global,mo2022Tensor}.

We show that all the optimal objective values for each proposed iPOP problem are essentially unique in the sense that they are rearrangements of each other. Although not every rearrangement of an optimal objective value is attainable, we identify two particular ones characterized by the order of elements in the objective value or the corresponding column sum vector. After that, we propose a ``peak-shaving'' or ``valley-filling'' approach to every optimal objective value for each iPOP problem. Notably, the resulting algorithms decompose the construction of an associated optimal~$(0,1)$-matrix into a series of sorting processes. Specifically, they sequentially construct the rows of the optimal~$(0,1)$-matrix respectively by decreasing the largest elements or increasing the smallest ones of a vector, agreeing with the rule of thumb ``peak shaving'' or ``valley filling.'' This fact provides a fine-grained perspective on the inherent symmetry of the two iPOP problems. More importantly, the algorithms can avoid round-off errors, have linear time complexities, and are thus more useful in large-scale matrix completion instances with many rows/columns than conceivable alternatives, as substantiated by simulations. 

From a historical perspective, our approach inherits the merits of the Ryser-like algorithms, which are generally designed to construct~$(0,1)$-matrices with given row/column sums and can help check their existence  numerically~\cite[Section~3]{brualdi2006algorithms,brualdi2006combinatorial}. From this point of view, our partially ordered objectives and algorithms jointly answer the intriguing question: which~$(0,1)$-matrix does a Ryser-like algorithm return? This observation coincides with the study of another interesting iPOP involving the Nobel prize-winning work on stable matchings; see Section~\ref{sec_relatedwork} for more details. 

Branch and bound~(B\&B) and scalarization are two common methods for iPOP. The former usually leads to time-consuming algorithms in worst cases~\cite{morrison2016branch,norkin2019b,przybylski2017multi}. Besides, it is challenging to develop branching strategies and find the bounds for a set of elements under majorization; thus, the direct use of B\&B cannot work for our iPOP problems. The latter features converting the partially ordered vector-valued objectives into real-valued ones by order-preserving functions (referring to Schur-convex functions for majorization), ignoring the partially ordered structures of feasible solutions and objective values. Notably, we show that the integer programs after scalarization are tractable due to their separable convex objective functions and totally unimodular constraint matrices. Nevertheless, we still need a computation-consuming process to approach an exact integer solution~\cite{meyer1977class}. Even worse, these methods hardly clarify the (essential) uniqueness of optimal objective values.


Bearing the motivations and challenges mentioned above, we summarize our contributions as follows:

$\mbox{\ensuremath{\rhd}}$ We propose, in Section~\ref{sec_problemformulation}, two iPOP problems that are optimal~$(0,1)$-matrix completion problems with majorization ordered objectives, arising in EV charging, portfolio optimization, and secure data storage.

$\mbox{\ensuremath{\rhd}}$ We characterize the essential uniqueness of all optimal objective values and identify two particular ones of interesting features for each proposed iPOP problem in Section~\ref{sec_uniqueness}. Meanwhile, we relate our iPOP to optimization over lattices~\cite{zimmermann2011linear}. 

$\mbox{\ensuremath{\rhd}}$ We respectively develop a ``peak-shaving'' and a ``valley-filling'' algorithm in Section~\ref{sec_solutions} to construct an optimal~$(0,1)$-matrix associated with an arbitrary optimal objective value for each of the two iPOP problem. The algorithms are efficient and insightful, e.g., having linear time complexities and uncovering the inherent symmetry of the two problems.


$\mbox{\ensuremath{\rhd}}$ We verify the efficiency of our approach compared to the order-preserving scalarization in Section~\ref{sec_comparison}. Moreover, we present natural extensions of the two iPOP problems, further showing the strength of our solution method.

In Section~\ref{sec_relatedwork}, we introduce the related work and preliminaries. In Section~\ref{sec_conclusion}, we conclude this paper and show future directions. For fluency, we defer most proofs to the Appendix. 

Our study of iPOP can be traced back to the conference paper~\cite{mo2017coordinating}, touching the uniqueness of optimal objective values and an optimal solution algorithm for one iPOP problem (see Problem~\eqref{basicmajmatrix2} herein) without proofs. This paper completes the study of the iPOP problem and complements it with an additional one (see Problem~\eqref{basicmajmatrix1}). We derive the main results for the additional problem and unravel the inherent symmetry of the two iPOP problems. We also augment the study of attainable and feasible sets by characterizing particular (optimal) objective values and solutions. We further prove that our approach can find all optimal objective values instead of one and add extensive comparisons to conventional methods.


\subsubsection*{Notation}
Let~$\underline{n}$ denote the index set~$\{1,\ldots,n\}$, for~$n\in \mathbb{N}$. Let~$\bm 0$~or~$\bm 1$ respectively denote a vector of all zero or one elements. Let~$\bm {e_i}$ be a $(0,1)$-vector of all zero elements except the~$i$th one. Each of~$\bm 0$, $\bm 1$, and~$\bm {e_i}$ has a compatible dimension. For an index sequence~$\left(1,\ldots,m\right)$, a permutation~$\sigma: \underline{m}\rightarrow \underline{m}$ is a bijective function which rearranges the original sequence. Given~$\bm x =[x_{1}~\cdots~x_{n}]'$, we denote its nonincreasing rearrangement by~$\bm x^{\downarrow}=\big[x_{[1]}~\cdots ~x_{[n]}\big]'$, where~$x_{[1]}\geq  \cdots \geq x_{[n]}$. Also, define~$\mathbb{R}^n_{\downarrow}=\left\{\bm x^{\downarrow}\mid\bm x\in \mathbb{R}^n\right\}$ and~$\mathbb{N}^n_{\downarrow}=\left\{\bm x^{\downarrow}\mid\bm x\in \mathbb{N}^n\right\}$. For two vectors of the same length, $\bm x$ and~$\bm y$, let~$\bm x+\bm y$ and~$\bm x - \bm y$ denote the elementwise addition and subtraction, respectively. 
The {H\"{o}lder}~$1$-norm of~$\bm x\in \mathbb{R}^n$ is denoted by~$\|\bm x\|_1$, which sums the absolute values of all the elements in~$\bm x$, namely,~$\|\bm x\|_1=\sum_{i=1}^{n}|x_i|$. The indicator function~$\mathds{1}(\cdot)$ maps an assertion to one if it is true and zero otherwise. For a matrix~$A=[a_{ij}]\in \mathbb{R}^{m\times n}$, its transpose is denoted by~$A'$, while its~$i$th row and~$j$th column are respectively specified by~$\bm a_{i\bigcdot}\in \mathbb{R}^{1\times n}$ and $\bm a_{\bigcdot j}\in \mathbb{R}^{m}$. The vectorization of a matrix~$A$, denoted by $\textbf{vec}(A)$, is the $mn$-dimensional vector obtained by stacking the columns of~$A$:~$\textbf{vec}(A)=\begin{bmatrix}\bm a_{\bigcdot 1}'&\bm a_{\bigcdot 2}'&\cdots&\bm a_{\bigcdot n}'\end{bmatrix}'.$

\section{Related Works and Preliminaries}\label{sec_relatedwork}
\subsection{$(0,1)$-Matrix Completion}
Most results in the literature generalize the seminal Gale-Ryser works~\cite{gale1957theorem,ryser1957combinatorial} by focusing on the matrix feasibility problem with additional or modified constraints. A typical practice is to prescribe certain zeros in addition to given row/column sums. For example, the work~\cite{brualdi2003matrices} considers a fixed zero block. 
The Fulkerson-Chen-Anstee theorem addresses the case where each column has at most one prefixed zero~\cite{fulkerson1960zero,chen1966realization,anstee1982properties}. {Our previous results involve the cases where unassigned positions form a staircase~\cite{chen2016constrained}, a banded~\cite{mo2020market}, or even an arbitrary pattern~\cite{mo2022Tensor}.} By further requiring no two consecutive~$1$'s in every column, the authors of~\cite{nelson20150} derived an existence condition with a series of majorization inequalities, including the single one in the Gale-Ryser theorem. Moreover, the two papers~\cite{fulkerson1959network,mo2018staircase} examine the case with bounded row/column sums, instead of exact ones, while the paper~\cite{brualdi2019gale} studies~$(0,1)$-matrices with given row and column sums modulo~$k$. Unlike these existing results, our work extends the feasibility study by Gale and Ryser to the challenging yet useful optimization study~--~POP.

{We note that several~$(0,1)$-matrix feasibility problems~\cite{chen1966realization,erdosgallai1960degree,berger2011dag} are relevant to graph realization, laying  the foundation for control-related applications like network generation, controllability, and synchronization~\cite{olshevsky2013degree,hsu2019laplacian,siami2016fundamental}. We envision that the advanced optimization study will provide a more innovative perspective on these applications than the feasibility study.}
\subsection{Partial Order Programming}
A binary relation $\preccurlyeq$ on a set $\mathcal{S}$ is a partial order if it is~$\mathit{1}$)~reflexive:~{$x\preccurlyeq x$}, $\mathit{2}$)~transitive: $x\preccurlyeq y $ and~$y\preccurlyeq z $ imply~{$x\preccurlyeq z$}, and~$\mathit{3}$)~antisymmetric:~$x\preccurlyeq y $ and~$y\preccurlyeq x $ imply~$x = y$, for all~$x, y, z\in \mathcal{S}$. Unlike in a total order, two elements in a partially ordered set, or a poset~$(\mathcal{S}, \preccurlyeq)$ may not always be comparable (neither~$x \preccurlyeq y$ nor~\mbox{$y \preccurlyeq x$}). Here is a typical poset~$(\mathbb{R}^n,\leq)$: for~$\bm x,\bm y \in \mathbb{R}^n$, we write~\mbox{$\bm x \leq \bm y$}, saying~$\bm x$ is no more than~$\bm y$ in the elementwise order, if~\mbox{$x_i\leq y_i$}, for all~$i\in \underline{n}$.

POP refers to optimization problems whose objectives are partially ordered; moreover, the optima to a POP problem are those that no other can majorize and may not be unique. An earlier study of POP is vector optimization~\cite[Section 4.7]{boyd2004convex}, where the objective values that are optimal under a cone-induced partial order are usually non-unique and form a Pareto frontier. Another example is the classic stable matching problem originally studied by Gale and Shapley\cite{gale1962college}. While the seminal Gale-Shapley algorithm can find one stable matching, Knuth and other researchers further described all stable matchings by a partial order and pinpointed the optimality of the classic algorithm under the partial order, namely, it generates the (unique) optimal one (the best for all women or all men)~\cite{knuth1997stable}. Analogously, we uplift the seminal works of Gale and Ryser to POP problems with the majorization order. 


\subsection{Majorization}
\begin{definition}\label{defmaj}
Given~$\bm x, \bm y \in \mathbb{R}^n$, we write
\begin{enumerate}
  \item[$\mathit{1}$)] $\bm x \prec_w \bm y$, saying that~$\bm x$ is weakly submajorized by~$\bm y$, if~$\sum_{i=1}^{k}\!x_{[i]}\leq  \sum_{i=1}^{k}y_{[i]}$, for all~$k\in \underline{n}$;
  \item[$\mathit{2}$)] $\bm x \prec^w \bm y$, saying that~$\bm x$ is weakly supermajorized by~$\bm y$, if~$\sum_{i=k}^{n}x_{[i]}\geq  \sum_{i=k}^{n}y_{[i]}$, for all~$k\in \underline{n}$;
  \item[$\mathit{3}$)] $\bm x \prec \bm y$, saying that~$\bm x$ is majorized by~$\bm y$, if~$\bm x \prec_w \bm y$ and~$\bm x \prec^w \bm y$ together.
\end{enumerate}
\end{definition}


Majorization is a powerful tool in many applications~\cite{olkin2016inequalities}. Apart from the existence of a~$(0,1)$-matrix with given line sums~\cite{gale1957theorem,ryser1957combinatorial}, researchers use majorization to characterize the existence of a {series} of pairwise disjoint partial transversals
~\cite{mirsky1966systems}, the conditions for a set of polynomials to be the invariant polynomials of a linear time-invariant system with state feedback~\cite[Section~4.2]{flamm1980new,rosenbrock1970state}, the convergence analysis of distributed Kalman filtering~\cite{del2011majorization}, the networked stabilizability~\cite{liu2018stabilization,chen2018majorization}, and optimal strategies for remote estimation~\cite{nayyar2013optimal,chakravorty2019remote}. Our work, which uses majorization to evaluate the objective values of optimal~$(0,1)$-matrix completion, facilitates the generalization of majorization inequalities to POP with majorization ordered objectives. Particularly, {there are two studies~\cite{kuvcera2022assignment} and~\cite{kim1998simple} dealing with optimization via majorization}. The former concerns a combinatorial problem whose solution set is a {poset of nonnegative integer vectors} with a given sum, ordered by majorization. Meanwhile, the latter examines a class of optimization problems whose optimization criterion is given by two majorization inequalities. Later, we shall show that we are essentially studying optimization over a majorization ordered {poset} with majorization ordered objectives and generalize the study of~\cite{kim1998simple}.

Strictly speaking, majorization is just a preorder in~$\mathbb{R}^n$, not respecting antisymmetry. However, we can bridge the gap by defining the following equivalence relation and canonical set.
\begin{definition} \label{defeql}
For $\bm x, \bm y \in \mathbb{R}^n$, we say $\bm x$ is equivalent to $\bm y$, writing $\bm x \sim \bm y$, if~$\bm x \prec \bm y$ and~$\bm y\prec \bm x$. For a subset~$\mathcal{X}$  of~$\mathbb{R}^n$, we define its canonical set as~$\mathcal{X}_\downarrow=\{\bm x^\downarrow\mid \bm x\in \mathcal{X}\}$.
\end{definition}

We see that~$\bm x \sim \bm y$ if and only if~$\bm x^\downarrow=\bm y^\downarrow$, or in other words, they are rearrangements of each other. 
We see that majorization is a partial order in a canonical set. 

\section{Problem Formulation and Applications}\label{sec_problemformulation}
In this section, we shall propose two optimal~$(0,1)$-matrix completion problems with majorization ordered objectives from EV charging~\cite{chen2016constrained,mo2020market,negrete2016rate}. They generalize the seminal study by Gale and Ryser from feasibility to optimality in iPOP. We also present two illustrative applications.

{
\subsection{From EV Charging to Optimal~$(0,1)$-Matrix Completion}
}
Consider a finite time horizon evenly segmented into~$n$ time slots, while the charging rates of EVs are uniform at one unit per time slot. There come~$m$ EVs, the~$i$th of which requires to be charged in~$r_i$ out of~$n$ slots, for~$i\in \underline{m}$, and we call~$\bm r \in \mathbb{N}^m$ the~\emph{duration profile} as in~\cite{negrete2016rate}. Then, a coordination of the EVs can be denoted by an~\mbox{$m\times n$~$(0,1)$}-matrix~$A$, where~\mbox{$a_{ij}=1$} means that the~$i$th EV gets charged at time slot~$j$ and we require~\mbox{$\|\bm a_{i\bigcdot}\|_1=r_i$}, for all~$i\in\underline{m}$, to fully charge the EVs. The coordination matrix is usually not unique and we aim to find the best one in a reasonable sense. Next, we present two optimality criteria separately for the optimal EV coordination, {leading to two optimal~$(0,1)$-matrix completion problems. } 

{
In the first case, we have the~\emph{supply profile}~$\bm c\in \mathbb{N}^n$, where~$c_j$ denotes the number of available electrical energy units at time slot~$j$, for all~$j\in \underline{n}$. We aim at a flat~\emph{remaining supply profile}~(i.e.,~the difference between the supply profile~$\bm c$ and the column sum vector of a coordination matrix). 
Therefore, given~$\bm c\in \mathbb{N}^n$ and~$\bm r \in \mathbb{N}^m$, we formulate the first optimal~$(0,1)$-matrix completion problem as

\noindent \begin{equation}
\begin{split}\label{basicmajmatrix1}
\underset{A}{\text{~~minimize}_{\prec}} &~~~~\bm c-\sum_{i=1}^{m}\bm a_{i\bigcdot}'\ \\
\text{subject to~}&~~~~A\in \{0,1\}^{m\times n};\;\|\bm a_{i\bigcdot}\|_1=r_i, \forall i\in\underline{m};\\
&~~~~ \|\bm a_{\bigcdot j}\|_1\leq c_j, \forall j\in \underline{n}. 
\end{split}
\end{equation}

By writing~$\text{minimize}_{\prec}$, we aim to find a minimal element or the set of minimal elements among attainable objective values.

In the following, we clarify why we expect a flat remaining supply profile and obtain a more compact form of the above iPOP problem. To this end, we review several useful concepts and the Gale-Ryser theorem~\cite[Section~7]{olkin2016inequalities}. 

\begin{definition}
  The partition conjugate of a vector~$\bm x\in \mathbb{N}^n$ is a vector~$\bm x^*$, whose $j$th element is the number of elements no less than~$j$ in~$\bm x$, namely,~$x_j^*=\sum\nolimits_{i=1}^n\mathds{1}(x_i\geq j)$.
\end{definition}
\begin{figure}[t]
\centering
\includegraphics[scale=0.36]{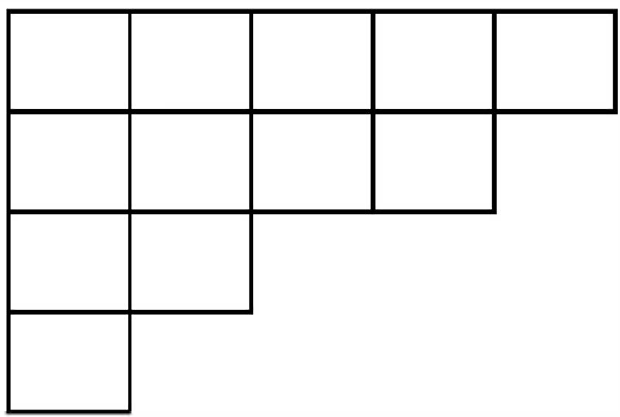}
\caption{The Young diagram regarding~$x=[5~4~2~1]'$ and its partition conjugate~$x^*=[4~3~2~2~1~0~\cdots~0]'$.}\label{fig: Youngdi}
\end{figure}

We see~$x^*_j=0$ if $j>x_{[1]}$. By adjusting the number of zeros, we give~$\bm x^*$ a required dimension that is no less than~$x_{[1]}$. In Fig.~\ref{fig: Youngdi}, we graphically illustrate the partition conjugate, which relates to a Young diagram~\cite{young1901quantitative} consisting of a collection of left-justified cells of equal size. Moreover, the number of cells in each row corresponds to each element of~$\bm x$, while that of each column corresponds to each element of~$\bm x^*$.


Let~$\mathcal{A}(\bm r,\bm {x})$ denote the set of~$m\times n$~$(0,1)$-matrices whose row and column sum vectors are~\mbox{$\bm r\in\mathbb{N}^m$} and~$\bm {x}\in \mathbb{N}^n$, respectively. The Gale-Ryser theorem states that~$\mathcal{A}(\bm r,\bm {x})$ is nonempty if and only if $\bm {x}\prec \bm r^*$~\cite{gale1957theorem,ryser1957combinatorial}, suggesting that~\mbox{$\left\{\bm r \mid \mathcal{A}(\bm r,\bm{y})\neq \emptyset \right\}\subseteq \left\{\bm r \mid \mathcal{A}(\bm r,\bm{x})\neq \emptyset \right\}$} for~$\bm{x},\bm{y}\in \mathbb{N}^n$ and~$\bm{x}\prec \bm{y}$. Thus, requiring the remaining supply profile to be flat enables us to accommodate more new EVs~\cite{mo2020market,negrete2016rate}. 
More interestingly, this requirement also suggests charging the EVs at the slots with more supplies~(lower prices), consistent with the ``peak-shaving'' behavior in smart grids~\cite{chakrabortty2011control}, which will be made more clear later.

In addition, the Gale-Ryser theorem implies that~\mbox{$\bm {x}\in \mathbb{N}^n$} is the column sum vector of a coordination matrix if and only if~\mbox{$\bm {x}\prec \bm r^*$}. Hence, the optimal remaining supply profiles are the minimal elements of~\mbox{$\{\bm c-\bm {x}\mid \bm {x} \in \mathbb{N}^n, \bm {x} \prec \bm r^*\text{, and }\bm {x} \leq \bm c\}$} under majorization. Then, we can reformulate Problem~(\ref{basicmajmatrix1}) as
\begin{equation}
\begin{split}\label{basicmaj1}
\underset{\bm {x}}{\text{~~minimize}_{\prec}} &~~~~\bm c-\bm {x}\ \\
\text{subject to~}&  ~~~~\bm {x}\in \mathbb{N}^n, \bm {x} \prec \bm r^*\text{, and } \bm {x} \leq \bm c.
\end{split}
\end{equation}

}

In the second case, we consider base loads like road lamps, consuming~$b_j\in \mathbb{N}$ units of electrical energy at slot~$j$, for all~$j\in \underline{n}$. We aim to coordinate the EV charging so that the combined power consumption of the base loads and EVs is as smooth as possible for the following reasons~\cite{luh1982load,chakrabortty2011control,khemakhem2019impact,casini2021chance,mo2021infocom}. First, peaks in demand increase the infrastructure costs by requiring additional generators and ramping capacities. Second, a fluctuant load may aggravate emission costs, voltage deviations, and power losses. Finally, the changes in demands over time usually intensify undesirable market volatility. Also, we show that majorization can properly evaluate the smoothness level of the power consumption from two aspects. First, the smaller a vector is under majorization, the smaller the variance of the vector is~\cite[Section~1.C]{olkin2016inequalities}. Second, using majorization as a measure of smoothness generalizes the~\emph{valley-filling} behavior, whose core is shifting coordinated loads like EV charging to time slots with lower existing load and higher network capacity~\cite{chakrabortty2011control}. Thus, given~$\bm b\in \mathbb{N}^n$ and~$\bm r \in \mathbb{N}^m$, we formulate the second optimal~$(0,1)$-matrix completion problem as
\begin{equation}
\begin{split}\label{basicmajmatrix2}
\underset{A}{\text{~~minimize}_{\prec}} &~~~~\bm b+\sum_{i=1}^{m}\bm a_{i\bigcdot}'\ \\
\text{subject to~}&~~~~A\in \{0,1\}^{m\times n};\;\|\bm a_{i\bigcdot}\|_1=r_i, \forall i\in\underline{m}.
\end{split}
\end{equation}

Note that in the above formulation, we do not enforce inequality constraints on the column sums as in Problem~\eqref{basicmajmatrix1}. By the Gale-Ryser theorem, we note that the smoothest combined consumption profiles correspond to the minimal elements of~\mbox{$\{\bm b+\bm {x}\mid \bm {x} \in \mathbb{N}^n\text{ and }\bm {x}\prec \bm r^*\}$} under majorization. Similarly, we can reformulate the above iPOP problem into a more compact form:
\begin{equation}
\begin{split}\label{basicmaj2}
\underset{\bm {x}}{\text{~~minimize}_{\prec}} &~~~~\bm b+\bm {x}\ \\
~~~~~~\text{subject to~}&  ~~~~\bm {x}\in \mathbb{N}^n \text{ and } \bm {x} \prec \bm r^*.~~~~~~~~~~~~~~~
\end{split}
\end{equation}

So far, we have formulated the two iPOP problems of our interest. We assume~$\bm b,\bm c\in \mathbb{N}^n_{\downarrow}$ unless specified otherwise, respectively called a base and a ceiling vector. {We do not assume the order of elements in the row sum vector~$\bm r$.} The two iPOP problems seem to be symmetric in terms of elementwise addition and subtraction (see the objectives of Problems~(\ref{basicmaj1}) and~(\ref{basicmaj2})), so we shall study them synchronously and wonder whether there are dual properties between them in later sections. Particularly, we prove the essential uniqueness of optimal objective values, which facilitates finding all the optimal solutions. Moreover, we propose a specialized approach to each proposed optimal~$(0,1)$-matrix completion problem with majorization ordered objectives, agreeing with the rule of thumb ``peak-shaving'' or ``valley-filling.'' The resulting algorithms provide a fine-grained perspective on the inherent symmetry of the two iPOP problems.
\subsection{More Illustrative Applications}
There are many applications of the proposed problems in the literature, e.g., load assignment in crossbar switches~\cite{kim1998simple} and biological sequence analysis~\cite{wan2000global}. Next, we present two more, which directly lead to the compact reformulations (namely, Problems~(\ref{basicmaj1}) and~(\ref{basicmaj2})). This fact corroborates the significance of our study beyond optimal matrix completion. 

\emph{$\mathit{1}$) Portfolio optimization.} A broker sells $n$ kinds of assets, which are functional substitutes of each other. The ceiling vector $\bm c$ discloses the original asset inventory of the broker. An investor asks for a portfolio of assets given his/her risk tolerance. Denote the portfolio by $\bm {x}$, where~${x}_k$ is the quantity of the $k$th asset in this portfolio. The broker partitions the available assets into two parts,~$\bm {x}$ and~$\bm c-\bm {x}$, so that the risk tolerance of the investor is satisfied and the broker undertakes as little risk from the remaining assets as possible~\cite{markowitz2010portfolio}. The proverb says that you should not put all your eggs in one basket. Following this idea, we quantitatively describe the risk tolerance as, at most, how many bad assets the broker/investor is willing to possess in the case that there are $k$ kinds of assets (bad assets) suffering deep losses, for all~$k\in \underline{n}$~\cite{egozcue2010gains,ogryczak2002dual}. In the worst case, the kinds of bad assets are exactly those reserved/bought the most by the broker/investor. Thus, the investor's risk tolerance can be described by a threshold vector~$\bm r^*$, while the risk of the remaining assets corresponds to~$\bm c-\bm {x}$. Clearly, the broker requires to solve Problem~(\ref{basicmaj1}). 

\emph{$\mathit{2}$) Secure data storage.} A direct way to keep a confidential file safe is to break it into distinct data centers~\cite{redlich2006data,robling1982cryptography,yue1973optimality}. Assume that there are~$n$ data centers, and the current loads of these data centers are described by a base vector~$\bm b$. In the first step, we properly encrypt a confidential file in such a way that the total information is divided into several pieces of data and each piece contains almost the same amount of information. Then, we separate these data pieces into the~$n$ data centers, saying there are ${x}_k$ pieces stored in the $k$th center. The security requirement of the data scattering for the file can be described by a threshold vector~$\bm r^*$, which reveals at most how many pieces of data from the file can be exposed, for all~$k\in \underline{n}$, provided that $k$ data centers are attacked. Apart from fulfilling the security requirement, the storage service provider should balance the loads of the~$n$ data centers. In other words, it expects that the storage profile~$\bm {x}$ is majorized by the security threshold vector~$\bm t$ and the combined load profile~$\bm b+\bm {x}$ is as smooth as possible. Thus, such a secure data storage problem can be mathematically formulated as Problem~(\ref{basicmaj2}).

\section{Uniqueness of Optimal Objective Values}\label{sec_uniqueness}
The optimal objective value of a POP problem is usually not unique, and it is challenging to identify all the optimal objective values for an iPOP. We show, in this section, that the optimal objective values for each proposed iPOP problem are essentially unique in the sense that they are rearrangements of each other. Note that not all rearrangements of an optimal objective value are attainable; nevertheless, we identify two particular ones of interesting properties for each iPOP problem by analyzing the attainable and feasible sets defined later. These results lay the foundations for the next section, where we develop an efficient and insightful approach, either generating an arbitrary optimal objective value or justifying the infeasibility of the iPOP problem.

\subsection{Attainable Sets and Essential Uniqueness}

Throughout this subsection, we assume Problems~(\ref{basicmajmatrix1}) and~(\ref{basicmajmatrix2}) are feasible. We respectively denote the sets of the column sum vectors of feasible~$(0,1)$-matrices in Problems~(\ref{basicmajmatrix1}) and~(\ref{basicmajmatrix2}) by
\begin{equation*}
\begin{split}
    \mathcal{{X}}^\ominus=\{\bm {x}\in \mathbb{N}^n \mid A\in \{0,1\}^{m\times n};\|\bm a_{i\bigcdot}\|_1=r_i, \forall i\in\underline{m};\\ x_j=\|\bm a_{\bigcdot j}\|_1\leq c_j, \forall j\in \underline{n}\} \text{ and }\\
    \mathcal{{X}}^\oplus=\{\bm {x} \in \mathbb{N}^n\mid A\in \{0,1\}^{m\times n};\|\bm a_{i\bigcdot}\|_1=r_i, \forall i\in\underline{m};\\ x_j=\|\bm a_{\bigcdot j}\|_1, \forall j\in \underline{n}\},
    \end{split}
\end{equation*}
amounting to the feasible sets of Problems~(\ref{basicmaj1}) and~(\ref{basicmaj2}).

Moreover, we respectively denote the attainable sets of Problems~(\ref{basicmajmatrix1}) and~(\ref{basicmajmatrix2}) by $$\mathcal{V}^\ominus=\left\{\bm c - \bm {x} \mid \bm {x} \in \mathcal{{X}}^\ominus\right\} \text{ and }\mathcal{V}^\oplus=\left\{\bm b + \bm {x} \mid \bm {x} \in \mathcal{{X}}^\oplus \right\}.$$ Accordingly, their canonical attainable sets are respectively
$$\mathcal{V}^\ominus_\downarrow\!=\!\left\{(\bm c - \bm {x})^\downarrow \!\mid \bm {x} \in \mathcal{{X}}^\ominus\right\} \text{ and }\mathcal{V}^\oplus_\downarrow\!=\!\left\{(\bm b + \bm {x})^\downarrow \!\mid \bm {x} \in \mathcal{{X}}^\oplus \right\}.$$

Before proceeding, we first examine the relationship between the attainable and canonical attainable sets. Then, we derive Proposition~\ref{propcan}, which implies the existence of a particular optimal objective value regarding the order of elements.
\begin{proposition}\label{propcan}
Suppose~$\bm b,\bm c\in \mathbb{N}^n_{\downarrow}$.
\begin{enumerate}
    \item[$\mathit{1}$)] If~$\bm v \in \mathcal{V}^\ominus$, then~{$\bm v^\downarrow \in \mathcal{V}^\ominus$}.
    \item[$\mathit{2}$)] If~$\bm v \in \mathcal{V}^\oplus$, then $\bm v^\downarrow \in \mathcal{V}^\oplus$.
\end{enumerate}
\end{proposition}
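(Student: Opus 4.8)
The plan is to combine the Gale--Ryser description of the feasible sets with a sorting (bubble-sort) argument. Recall from the discussion preceding the statement that, by the Gale--Ryser theorem, $\mathcal{X}^\oplus=\{\bm x\in\mathbb{N}^n\mid \bm x\prec\bm r^*\}$ and $\mathcal{X}^\ominus=\{\bm x\in\mathbb{N}^n\mid \bm x\prec\bm r^*,\ \bm x\le\bm c\}$. Since $\bm r^*$ is fixed and majorization is transitive, both sets are \emph{downward closed} under majorization among the nonnegative integer vectors (respectively, those bounded by $\bm c$): if $\bm x$ lies in the set and $\bm x'\prec\bm x$ with $\bm x'\in\mathbb{N}^n$ (and $\bm x'\le\bm c$ for $\mathcal{X}^\ominus$), then $\bm x'$ lies in the set as well. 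Hence, to prove $\bm v^\downarrow\in\mathcal{V}^\oplus$ (resp. $\bm v^\downarrow\in\mathcal{V}^\ominus$) for $\bm v=\bm b+\bm x$ (resp. $\bm v=\bm c-\bm x$), it suffices to produce an $\bm x'$ in the relevant feasible set with $\bm b+\bm x'=\bm v^\downarrow$ (resp. $\bm c-\bm x'=\bm v^\downarrow$); the natural candidate is $\bm x'=\bm v^\downarrow-\bm b$ (resp. $\bm x'=\bm c-\bm v^\downarrow$).

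Next I would realize the passage from $\bm v$ to $\bm v^\downarrow$ as a bubble sort, which is legitimate because $\bm b,\bm c\in\mathbb{N}^n_\downarrow$ are already nonincreasing. Holding $\bm b$ (resp. $\bm c$) fixed in place and tracking the induced change of $\bm x$, each adjacent transposition that corrects an inversion $v_i<v_{i+1}$ replaces the pair $(x_i,x_{i+1})$ by $(x_{i+1}-\delta,\,x_i+\delta)$ in the $\oplus$ case and by $(x_{i+1}+\delta,\,x_i-\delta)$ in the $\ominus$ case, where $\delta=b_i-b_{i+1}\ge 0$ (resp. $\delta=c_i-c_{i+1}\ge 0$). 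The inversion condition $v_i<v_{i+1}$ translates into $x_{i+1}-x_i>\delta$ (resp. $x_i-x_{i+1}>\delta$), which is exactly the inequality needed to control the move.

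The core of the argument is then a single-swap verification. At each swap: the pair sum is preserved, integrality is preserved (as $\delta\in\mathbb{N}$), and nonnegativity is preserved because the strict inequality above forces the subtracted entry to stay positive (e.g. $x_{i+1}-\delta>x_i\ge 0$ in the $\oplus$ case); in the $\ominus$ case one additionally checks $x'_i\le c_i$ and $x'_{i+1}\le c_{i+1}$, which reduce to the standing bounds $\bm x\le\bm c$. Most importantly, the new pair is a $T$-transform of the old one, i.e. a convex combination of $(x_i,x_{i+1})$ and its reversal, so its larger entry does not exceed that of the old pair while the sum is unchanged; leaving all other coordinates fixed, this shows the updated vector is majorized by the previous one. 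Composing the finitely many swaps of the bubble sort and using transitivity yields $\bm x'\prec\bm x\prec\bm r^*$, with $\bm x'\in\mathbb{N}^n$ and $\bm x'\le\bm c$ when required. Thus $\bm x'\in\mathcal{X}^\oplus$ (resp. $\mathcal{X}^\ominus$), and $\bm v^\downarrow=\bm b+\bm x'\in\mathcal{V}^\oplus$ (resp. $\bm v^\downarrow=\bm c-\bm x'\in\mathcal{V}^\ominus$).

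I expect the main obstacle to be conceptual rather than computational: one must resist the temptation to argue that $\bm x'$ is merely a rearrangement of $\bm x$, which is false in general (for instance, with $\bm b=[3~2~1]'$ and $\bm x=[0~0~2]'$ one gets $\bm v^\downarrow=[3~3~2]'$ and $\bm x'=[0~1~1]'$, not a permutation of $\bm x$). The genuine content is that sorting the objective vector can only \emph{shrink} the companion vector $\bm x$ in the majorization order, which is precisely what preserves feasibility, and the $T$-transform reading of each adjacent swap is the step that makes this rigorous. The two parts are genuinely symmetric, the sign of $\delta$ in the swapped pair being the only difference; as an alternative, nonnegativity and the ceiling bound can be verified in one stroke via the monotonicity $\bm u\le\bm w\Rightarrow\bm u^\downarrow\le\bm w^\downarrow$, applied to $\bm u=\bm b+\bm x\ge\bm b=\bm w$ (resp. $\bm u=\bm c-\bm x\le\bm c=\bm w$).
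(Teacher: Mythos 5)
Your proposal is correct and follows essentially the same route as the paper: the paper also reduces the claim to showing that sorting $\bm v$ preserves the elementwise bounds and shrinks the companion vector in the majorization order (its Lemma~\ref{comadjswap} and Lemma~\ref{comadjswap2}), then invokes transitivity with $\bm x\prec\bm r^*$. The only difference is presentational — you prove the two auxiliary facts inline via the adjacent-swap/$T$-transform argument, which is precisely the technique the paper uses to establish those lemmas.
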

Proposition~\ref{propcan}, proven in Appendix~\ref{PMPOMapdA4}, states that~$\mathcal{V}^\ominus_\downarrow\subseteq \mathcal{V}^\ominus$ and~$\mathcal{V}^\oplus_\downarrow\subseteq \mathcal{V}^\oplus$, when~$\bm b,\bm c\in \mathbb{N}^n_{\downarrow}$. Such inclusion relations suggest that there exists a particular optimal objective value whose elements have the same order as those of~$\bm b$ in Problem~(\ref{basicmajmatrix2}) or~$\bm c$ in Problem~(\ref{basicmajmatrix1}). Since~$\bm b,\bm c\in \mathbb{N}^n_{\downarrow}$, the particular optimal objective value is in its nonincreasing rearrangement. 

Before characterizing the essential uniqueness of optimal objective values, let us clarify several concepts. Note that the attainable sets are preordered by majorization, while the canonical attainable sets are partially ordered by majorization. In a preordered set or poset, we differentiate a minimal element from a least one due to the possible non-comparability. Formally, an element is said to be minimal if it does not majorize another in the preordered set or poset; moreover, it is said to be a least element if all others in the set majorize it. The maximal or greatest element is defined similarly. The least element in a preordered set or a poset may not exist. If existing, the least element in a poset is unique. The least ones in a preordered set may not be unique; however, they are said to be essentially unique by defining a simple equivalence relation as in Definition~\ref{defeql}. Precisely, the least elements in a set preordered by majorization, if existing, are essentially unique in the sense that they are rearrangements of each other.

Recall that the minimal elements of~$\mathcal{{V}}^\ominus$ (or~$\mathcal{{V}}^\oplus$) constitute the optimal objective values of Problem~(\ref{basicmajmatrix1}) (or Problem~(\ref{basicmajmatrix2})). We wonder whether the optimal objective values are the least elements of the corresponding attainable set. The following theorem answers this critical question affirmatively.
\begin{theorem}
  If~$\bm{u}$ and~$\bm{v}$ are minimal in~$(\mathcal{V}^\ominus,\prec)$, then they are the least elements of~$(\mathcal{V}^\ominus,\prec)$~and~$\bm{u}\sim \bm{v}$. The same is true if we replace~$\mathcal{V}^\ominus$ with~$\mathcal{V}^\oplus$.
  \label{leastthm}
\end{theorem}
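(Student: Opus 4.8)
The plan is to derive both conclusions --- that a minimal element is a least element, and that any two minimal elements are equivalent --- from a single order-theoretic principle, and then to concentrate all the real work on one structural property of the attainable sets. The principle I would isolate first is: \emph{in a preorder that is downward directed}, meaning every pair of elements admits a common lower bound in the set, \emph{every minimal element is a least element.} Indeed, if $\bm m$ is minimal and $\bm a$ is arbitrary, a common lower bound $\bm w$ with $\bm w\prec\bm m$ and $\bm w\prec\bm a$ forces $\bm w\sim\bm m$ (otherwise $\bm w$ would be strictly below $\bm m$, contradicting minimality), hence $\bm m\prec\bm w\prec\bm a$ by transitivity, so $\bm m\prec\bm a$; two least elements then majorize each other and are equivalent by Definition~\ref{defeql}. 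Since each attainable set is finite, minimal elements exist, so the whole theorem reduces to establishing downward directedness of $(\mathcal V^\ominus,\prec)$ and $(\mathcal V^\oplus,\prec)$, which I would treat in parallel because of their symmetry.

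The crux is therefore downward directedness, which I would obtain in the sharper form of \emph{meet-closure}. First observe that every element of $\mathcal V^\oplus$ shares the fixed sum $\|\bm b\|_1+\|\bm r\|_1$, because $\bm x\prec\bm r^*$ forces $\|\bm x\|_1=\|\bm r^*\|_1=\|\bm r\|_1$; likewise every element of $\mathcal V^\ominus$ has the fixed sum $\|\bm c\|_1-\|\bm r\|_1$. Thus both canonical sets $\mathcal V^\oplus_\downarrow$ and $\mathcal V^\ominus_\downarrow$ live inside the ambient poset of nonnegative integer vectors of a fixed sum ordered by majorization, namely the classical dominance lattice. In that lattice the greatest lower bound $\bm u\wedge\bm v$ of two sorted vectors exists and is again an integer vector --- writing $S_k(\cdot)$ for the sum of the $k$ largest entries, its partial sums form the largest concave integer minorant of $\min\{S_k(\bm u),S_k(\bm v)\}$. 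Using Proposition~\ref{propcan} to pass freely to sorted representatives, it then suffices to show that this meet is still \emph{attainable}, i.e.\ lies in $\mathcal V^\oplus_\downarrow$ (resp.\ $\mathcal V^\ominus_\downarrow$); granting that, $\bm u\wedge\bm v$ is the common lower bound required by the directedness principle.

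I expect the attainability of the meet to be the main obstacle, precisely because the objective maps $\bm x\mapsto\bm b+\bm x$ and $\bm x\mapsto\bm c-\bm x$ are \emph{not} monotone under majorization. Consequently, although the feasible sets $\mathcal X^\oplus=\{\bm x\in\mathbb N^n:\bm x\prec\bm r^*\}$ and $\mathcal X^\ominus=\{\bm x\in\mathbb N^n:\bm x\prec\bm r^*,\ \bm x\le\bm c\}$ are principal order ideals of the dominance lattice and hence closed under meet and join, this closure does not transfer to $\mathcal V^\oplus_\downarrow$ and $\mathcal V^\ominus_\downarrow$ automatically. To bridge the gap I would argue at the level of the underlying column-sum vectors: writing $\bm u=\bm b+\bm x$ and $\bm v=\bm b+\bm y$ with $\bm x,\bm y\in\mathcal X^\oplus$, I would exhibit a feasible $\bm z\in\mathcal X^\oplus$ realizing $(\bm b+\bm z)^\downarrow=\bm u\wedge\bm v$ through a smoothing, or elementary-transfer (``Robin Hood''), argument --- so long as $\bm b+\bm z$ is strictly above the target, move one unit from a coordinate where $\bm b+\bm z$ is large to one where it is small while preserving $\bm z\ge\bm 0$ and $\bm z\prec\bm r^*$. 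Keeping the constraint $\bm z\prec\bm r^*$ (and, for $\mathcal V^\ominus$, the additional bound $\bm z\le\bm c$) intact under these transfers is the delicate point, and is where I would spend the bulk of the argument. The $\ominus$ case is then handled symmetrically, with ``peak-shaving'' transfers replacing ``valley-filling'' ones, which is consistent with the inherent symmetry of the two problems emphasized throughout the paper.
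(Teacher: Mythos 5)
Your order-theoretic reduction is sound and is in fact the same skeleton the paper uses: both arguments boil down to showing that $(\mathcal V^\ominus,\prec)$ and $(\mathcal V^\oplus,\prec)$ are downward directed, after which minimality forces any common lower bound of a minimal element and an arbitrary element to be equivalent to the minimal one, giving leastness and pairwise equivalence at once. The problem is the way you propose to establish directedness. You commit to the ``sharper form of meet-closure,'' i.e.\ to showing that the dominance-lattice meet $\bm u\wedge\bm v$ of two attainable objective values is itself attainable, and you plan to realize it by elementary transfers. That target statement is false, and the paper itself supplies the counterexample (Appendix~G): with $\bm c=[8~6~6~6~4~4~4]'$ and $\bm r^*=[2~2~1~1~0~0~0]'$ (so $\bm r=[4~2]'$), the vectors $\bm u=[7~6~5~4~4~4~2]'$ and $\bm v=[6~6~6~5~4~3~2]'$ both lie in $\mathcal V^\ominus_\downarrow$, their meet in $\mathcal N_\tau$ is $[6~6~6~4~4~4~2]'$ (the pointwise minimum of partial sums, which happens to be sorted), yet this meet is \emph{not} attainable; the set only contains strictly smaller lower bounds such as $[6~6~6~4~4~3~3]'$ and $[6~6~5~5~4~4~2]'$. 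So no amount of Robin Hood transfers will land you on $\bm u\wedge\bm v$, and the attainable sets are directed without being meet-closed --- this is precisely why the paper separately proves that $\mathcal V^\ominus_\downarrow$ and $\mathcal V^\oplus_\downarrow$ are in general not lattices.

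The fix is to aim for \emph{some} common lower bound rather than the greatest one, and this is what the paper's proof does constructively: it inducts on the number of rows $m$. Given two feasible matrices $P,Q$ producing $\bm u$ and $\bm v$, the inductive hypothesis yields a matrix $A$ on the first $k-1$ rows whose objective $\hat{\bm w}$ is majorized by both truncated objectives; one then appends a $k$th row whose $r_k$ ones sit at the positions of the $r_k$ largest entries of $\hat{\bm w}$, and Lemma~4 (the statement that subtracting, resp.\ adding, indicator vectors at earlier, resp.\ later, positions of sorted vectors preserves majorization) shows the extended objective is still below both $\bm u$ and $\bm v$. Your proposal is missing exactly this mechanism --- you never execute the transfer argument, and the place you would have spent ``the bulk of the argument'' is spent proving something that cannot be proved. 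The finiteness observation and the passage to sorted representatives via Proposition~1 are fine, but as written the proof has a fatal gap at its central step.
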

Theorem~\ref{leastthm}, proven in Appendix~\ref{PMPOMapdA6}, pinpoints that all the optimal objective values of each iPOP problem are essentially unique because they share the same non-increasing rearrangement. In general, not all rearrangements of an optimal objective value are attainable. Combining Proposition~\ref{propcan} and Theorem~\ref{leastthm}, we obtain the following theorem.

\begin{theorem}\label{leastthmcan}
 The least element of~$(\mathcal{V}^\ominus_\downarrow,\prec)$ or~$(\mathcal{V}^\oplus_\downarrow,\prec)$ exists, respectively. Moreover, it is an optimal objective value of Problem~(\ref{basicmajmatrix1}) or Problem~(\ref{basicmajmatrix2}) when~$\bm b,\bm c\in \mathbb{N}^n_{\downarrow}$.
\end{theorem}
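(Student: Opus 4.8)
The plan is to leverage the two preceding results directly: Proposition~\ref{propcan} guarantees that the nonincreasing rearrangement of any attainable objective value remains attainable, and Theorem~\ref{leastthm} guarantees that a least element of the (preordered) attainable set exists. I treat the two problems in parallel, writing the argument for $\mathcal{V}^\ominus$; the case of $\mathcal{V}^\oplus$ is identical after swapping Proposition~\ref{propcan}(1) for Proposition~\ref{propcan}(2) and using the corresponding half of Theorem~\ref{leastthm}.

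First I would record two structural facts. Since $\mathcal{V}^\ominus_\downarrow$ consists of vectors in nonincreasing order, majorization is antisymmetric on it (as noted in Section~\ref{sec_relatedwork}), so $(\mathcal{V}^\ominus_\downarrow,\prec)$ is a genuine poset and a least element, once shown to exist, is automatically unique. Second, because the problem is assumed feasible, $\mathcal{V}^\ominus$ is a finite nonempty set, so minimal elements exist; by Theorem~\ref{leastthm} every such minimal element is in fact a least element of $(\mathcal{V}^\ominus,\prec)$. Fix one and call it $\bm v^*$, so that $\bm v^* \prec \bm v$ for every $\bm v \in \mathcal{V}^\ominus$.

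The core step is to identify $(\bm v^*)^\downarrow$ as the least element of the canonical set. By Proposition~\ref{propcan}(1), using $\bm c \in \mathbb{N}^n_\downarrow$, the rearrangement $(\bm v^*)^\downarrow$ again lies in $\mathcal{V}^\ominus$, and it lies in $\mathcal{V}^\ominus_\downarrow$ by definition. Now take any $\bm w \in \mathcal{V}^\ominus_\downarrow$ and write $\bm w = \bm v^\downarrow$ for some $\bm v \in \mathcal{V}^\ominus$. Since majorization depends only on the sorted entries of its arguments, $\bm v^* \prec \bm v$ is equivalent to $(\bm v^*)^\downarrow \prec \bm v^\downarrow = \bm w$; hence $(\bm v^*)^\downarrow$ is majorized by every element of $\mathcal{V}^\ominus_\downarrow$, which is exactly the assertion that it is the least element. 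It is well defined independently of the choice of $\bm v^*$, since Theorem~\ref{leastthm} makes all least elements of $\mathcal{V}^\ominus$ rearrangements of one another and hence sharing the common value $(\bm v^*)^\downarrow$.

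Finally, for the ``moreover'' claim I would observe that $(\bm v^*)^\downarrow \sim \bm v^*$, so $(\bm v^*)^\downarrow$ inherits the least-element property of $\bm v^*$ in the full attainable set $(\mathcal{V}^\ominus,\prec)$ and is therefore minimal there; by the definition of the iPOP objective this makes it an optimal objective value of Problem~(\ref{basicmajmatrix1}). I expect no serious obstacle in this argument: the only point demanding care is the clean use of the rearrangement-invariance of majorization to transport the ordering $\bm v^* \prec \bm v$ down to the sorted representatives, together with the correct invocation of Proposition~\ref{propcan} to ensure that the sorted candidate is genuinely attainable rather than merely a formal rearrangement.
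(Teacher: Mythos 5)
Your argument is correct and follows exactly the route the paper intends: the paper gives no separate proof of this theorem beyond the remark that it follows by combining Proposition~\ref{propcan} with Theorem~\ref{leastthm}, and your write-up simply fills in those details (finiteness gives a minimal element, Theorem~\ref{leastthm} upgrades it to a least element, rearrangement-invariance of majorization transports the ordering to the canonical set, and Proposition~\ref{propcan} certifies that the sorted representative is genuinely attainable, yielding the ``moreover'' claim). No gaps.
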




Theorem~\ref{leastthmcan} explicitly clarifies that the particular optimal objective value mentioned before is the least element of the canonical set for Problem~(\ref{basicmajmatrix1}) or Problem~(\ref{basicmajmatrix2}) when~\mbox{$\bm b,\bm c\in \mathbb{N}^n_{\downarrow}$}.
Note that the least element in~$(\mathcal{V}^\ominus_\downarrow,\prec)$ or~$(\mathcal{V}^\oplus_\downarrow,\prec)$ is unique and solely determined by the setup~$(\bm c,\bm r)$ or~$(\bm b, \bm r)$, respectively. Motivated by this fact, we respectively define a ``subtraction'' operation and an ``addition'' operation.
\begin{definition}
  The least elements of~$(\mathcal{V}^\ominus_\downarrow,\prec)$ and~$(\mathcal{V}^\oplus_\downarrow,\prec)$ are defined as~$\bm c \ominus \bm r$ and~$\bm b \oplus \bm r$, respectively.
\end{definition}

We observe that~$\bm c \ominus \bm r$ or~$\bm b \oplus \bm r$ is well-defined if Problem~(\ref{basicmajmatrix1}) or Problem~(\ref{basicmajmatrix2}) is feasible.  The two operations will facilitate the interpretations of our solution approach in the next section. 

\subsection{Feasible Sets and Lattices}
This subsection is for the feasible sets of our iPOP problems and analyzing their structural properties. These results give us more insights into our iPOP problems and benefit future studies. Notably, we prove that our iPOP problems belong to the optimization over a lattice that has attracted attention in system design~\cite{kuvcera2022assignment} and code construction~\cite{parker1999construction}. Meanwhile, we identify another particular optimal objective value, the elements in whose corresponding column sum vector respectively have the same order as those in~$\bm c$ for Problem~(\ref{basicmajmatrix1}) or the reverse order as those in~$\bm b$ for Problem~(\ref{basicmajmatrix2}). This fact shows that our iPOP problems usually have multiple optimal objective values, and it is challenging to identify them all. Nevertheless, we develop a solution approach in the next section to find all optimal objective values, including the two particular ones (See Proposition~\ref{propcan} before and Proposition~\ref{rescanon} later). As a by-product, the proposed algorithms also provide a numerical way to check the feasibility of the considered iPOP problems.

We see that Problem~(\ref{basicmajmatrix2}) or Problem~(\ref{basicmaj2}) is always feasible. Next, let us study the feasibility of Problem~(\ref{basicmajmatrix1}) or Problem~(\ref{basicmaj1}).
\begin{proposition}
  Problem~(\ref{basicmajmatrix1}) or Problem~(\ref{basicmaj1}) is feasible if and only if~$\bm c\prec^w \bm r^*$, or equivalently,~$\bm r \prec_w \bm c^*$.\label{feacond}
\end{proposition}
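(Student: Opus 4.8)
The plan is to argue on the compact reformulation, Problem~\eqref{basicmaj1}, since the excerpt has already shown via the Gale--Ryser theorem that Problem~\eqref{basicmajmatrix1} is feasible exactly when there exists $\bm x\in\mathbb{N}^n$ with $\bm x\prec\bm r^{*}$ and $\bm x\leq\bm c$. I would treat $\bm c\prec^w\bm r^{*}$ as the primary characterization and then obtain the equivalent form $\bm r\prec_w\bm c^{*}$ by conjugate duality. Throughout I may assume $\bm c\in\mathbb{N}^n_{\downarrow}$ and recall that $\bm r^{*}\in\mathbb{N}^n_{\downarrow}$ with $\|\bm r^{*}\|_1=\|\bm r\|_1$.

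For the \emph{only if} direction (the easy one), suppose a witness $\bm x$ exists. The key observation is that for any $\bm v$ the tail sum equals a minimum over subsets, $\sum_{i=k}^{n}v_{[i]}=\min_{|S|=n-k+1}\sum_{i\in S}v_i$. Fixing $k\in\underline{n}$ and choosing $S$ attaining this minimum for $\bm c$, the elementwise bound $\bm x\leq\bm c$ gives $\sum_{i=k}^{n}c_{[i]}=\sum_{i\in S}c_i\geq\sum_{i\in S}x_i\geq\sum_{i=k}^{n}x_{[i]}$. Chaining this with the tail inequalities $\sum_{i=k}^{n}x_{[i]}\geq\sum_{i=k}^{n}(r^{*})_{[i]}$ supplied by $\bm x\prec^w\bm r^{*}$ yields $\bm c\prec^w\bm r^{*}$.

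The hard part will be the \emph{if} direction: assuming $\bm c\prec^w\bm r^{*}$, I must \emph{construct} an integer witness $\bm x\leq\bm c$ with $\bm x\prec\bm r^{*}$. Note $\|\bm r^{*}\|_1=\|\bm r\|_1\leq\|\bm c\|_1$, the $k=1$ instance of $\prec^w$, so I would obtain $\bm x$ by lowering $\bm c$ in a ``peak-shaving'' fashion: repeatedly remove a single unit from a current largest admissible entry until the total drops to $\|\bm r^{*}\|_1$. Each unit decrement preserves $\bm x\leq\bm c$, and since it always shaves the top it forces the sorted partial sums under those of $\bm r^{*}$; the hypothesis $\bm c\prec^w\bm r^{*}$ is precisely what certifies that the terminal vector satisfies $\bm x\prec_w\bm r^{*}$, whence $\bm x\prec\bm r^{*}$ because the sums are equal. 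This is exactly the nontrivial direction of the classical characterization of weak supermajorization, and I expect it to be the crux, as it foreshadows the algorithm of the next section. Two points need care: integrality, which is automatic for unit decrements, and matching the elementwise order, which is handled by $\bm c\in\mathbb{N}^n_{\downarrow}$ together with Proposition~\ref{propcan}, allowing the witness to be taken in nonincreasing form.

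Finally, I would deduce the equivalence $\bm c\prec^w\bm r^{*}\Leftrightarrow\bm r\prec_w\bm c^{*}$ from the conjugate duality $\bm a\prec^w\bm b\Leftrightarrow\bm b^{*}\prec_w\bm a^{*}$ (a standard Young-diagram transpose symmetry). Applying it with $\bm a=\bm c$ and $\bm b=\bm r^{*}$ gives $(\bm r^{*})^{*}\prec_w\bm c^{*}$; since the double conjugate satisfies $(\bm r^{*})^{*}=\bm r^{\downarrow}$ and $\prec_w$ is invariant under rearrangement, this is the same as $\bm r\prec_w\bm c^{*}$, completing the argument.
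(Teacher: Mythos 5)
Your overall plan is sound, and two of the three pieces are fine: the \emph{only if} direction via the subset-minimum characterization of tail sums is a correct, self-contained proof of the implication the paper simply delegates to Lemma~\ref{realfeacondcombineprecminusmaj}, and the conjugate-duality step $\bm c\prec^w\bm r^*\Leftrightarrow\bm r\prec_w\bm c^*$ is treated at the same level of detail as the paper's ``by simple algebra.'' The problem is the \emph{if} direction, which you correctly identify as the crux and then do not actually prove. The sentence ``the hypothesis $\bm c\prec^w\bm r^*$ is precisely what certifies that the terminal vector satisfies $\bm x\prec_w\bm r^*$'' is an assertion of the conclusion, not an argument, and the two natural ways to back it up both fail. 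First, a step-by-step induction maintaining $\bm x^{(j)}\prec^w\bm r^*$ after each unit decrement does not go through: decrementing one of several tied maxima can strictly decrease a tail partial sum $\sum_{i=k}^n x^{(j)}_{[i]}$ (e.g.\ $[3\,3\,3]\to[3\,3\,2]$ drops the last tail sum from $3$ to $2$), so the invariant is not preserved. Second, arguing that the peak-shaved output is the majorization-least element of $\{\bm x\in\mathbb{N}^n:\bm 0\leq\bm x\leq\bm c,\ \|\bm x\|_1=\|\bm r^*\|_1\}$ and is therefore below $\bm r^*$ is circular: it presupposes that this set already contains a vector majorized by $\bm r^*$, which is exactly the existence statement being proved. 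A genuine completion requires characterizing the ``water-level'' structure of the terminal vector ($x_i=\min(c_i,\mu)$ with some entries lowered to $\mu-1$) and then verifying the head-sum inequalities $\sum_{i=1}^k x_{[i]}\leq\sum_{i=1}^k r^*_i$ by a case analysis that exploits both integrality and the monotonicity of $\bm r^*$; this is a nontrivial half-page of work that is entirely missing.

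It is worth noting how the paper sidesteps this difficulty: its Algorithm~\ref{GethAlg} shaves $\bm c$ from the \emph{small} end (index $n$ downward), decreasing each entry only until one of the tail constraints $\sum_{i=j}^{n}x_i\geq\sum_{i=j}^{n}r^*_i$ becomes tight, so these inequalities are maintained as explicit invariants and the final verification ($\bm x$ nonincreasing, hence $\bm x\prec\bm r^*$) is essentially immediate. Your peak-shaving construction does produce a valid witness — it is in effect the row-aggregated version of Algorithm~\ref{MAlgo1}, whose correctness the paper establishes separately in Theorem~\ref{algthm} via Lemma~\ref{deducadditionseq} — but that correctness is a theorem, not an observation, and your proposal neither proves it nor cites a result that does.
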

By Problem~(\ref{basicmaj1}), Proposition~\ref{feacond} shows how the elementwise order influences a majorization ordered set involving nonnegative integers. It is essentially an analogy of an existing result involving real numbers (see Lemma~\ref{realfeacondcombineprecminusmaj} in Appendix~\ref{PMPOMapdA0}), which was originally reported in~\cite[Section~5.A]{olkin2016inequalities}. The proposition follows from the Gale-Ryser theorem and the adequacy theorem in~\cite{negrete2016rate} focusing on Problem~\eqref{basicmajmatrix1}. We give a more concise proof directly concerning Problem~(\ref{basicmaj1}) in Appendix~\ref{PMPOMapdA2}.

Similar to before, we consider the canonical feasible sets of Problems~(\ref{basicmaj1}) and~(\ref{basicmaj2}), which are respectively
$$\mathcal{{X}}^\ominus_\downarrow=\left\{\bm {x}^\downarrow \mid \bm {x} \in \mathcal{{X}}^\ominus\right\}~\text{and}~\mathcal{{X}}^\oplus_\downarrow=\left\{\bm {x}^\downarrow \mid \bm {x} \in \mathcal{{X}}^\oplus \right\}.$$

Furthermore, to characterize a particular feasible solution, we  define~$\mathcal{{X}}^\oplus_\uparrow=\left\{\bm {x}^\uparrow \!\mid \bm {x} \in \mathcal{{X}}^\oplus\!\right\}$, where~$\bm {x}^{\uparrow}$ denotes the nondecreasing rearrangement of~$\bm {x}$. Note that there is a homomorphism between~$\mathcal{{X}}^\oplus_\downarrow$ and~$\mathcal{{X}}^\oplus_\uparrow$. The proposition below, proven in Appendix~\ref{PMPOMapdA3}, shows that, to obtain an optimal solution, it suffices to consider the feasible solutions in the nonincreasing rearrangement for Problem~(\ref{basicmaj1}) or nondecreasing rearrangement for Problem~(\ref{basicmaj2}).
\begin{proposition}
Suppose~$\bm b,\bm c\in \mathbb{N}^n_{\downarrow}$.
\begin{enumerate}
  \item[$\mathit{1}$)] If $\bm {x} \in \mathcal{{X}}^\ominus$, then $\bm {x}^\downarrow\in \mathcal{{X}}^\ominus$ and $\bm c-\bm {x}^\downarrow \prec \bm c-\bm {x}$.
  \item[$\mathit{2}$)] If~$\bm {x} \in \mathcal{{X}}^\oplus$, then $\bm {x}^\uparrow\in \mathcal{{X}}^\oplus$ and $\bm b+\bm {x}^\uparrow \prec \bm b+\bm {x}$.
\end{enumerate}
\label{rescanon}
\end{proposition}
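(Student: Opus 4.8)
The plan is to handle the two items by a common two-step template, exploiting the symmetry already advertised in the statement: for Problem~(\ref{basicmaj1}) we sort~$\bm {x}$ into the \emph{nonincreasing} order~$\bm {x}^\downarrow$ (concordant with the nonincreasing ceiling~$\bm c$), whereas for Problem~(\ref{basicmaj2}) we sort into the \emph{nondecreasing} order~$\bm {x}^\uparrow$ (discordant with the nonincreasing base~$\bm b$). Recalling that, by the Gale--Ryser theorem,~$\mathcal{{X}}^\ominus=\{\bm {x}\in\mathbb{N}^n\mid \bm {x}\prec\bm r^*,\ \bm {x}\le\bm c\}$ and~$\mathcal{{X}}^\oplus=\{\bm {x}\in\mathbb{N}^n\mid \bm {x}\prec\bm r^*\}$, each item reduces to showing (a)~the sorted vector is still feasible, and (b)~the induced objective value is no larger under majorization.

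\textbf{Step (a): feasibility of the sorted vector.} Since~$\bm {x}^\downarrow$ and~$\bm {x}^\uparrow$ are rearrangements of~$\bm {x}$, they share its nonincreasing rearrangement, so~$\bm {x}\prec\bm r^*$ immediately yields both~$\bm {x}^\downarrow\prec\bm r^*$ and~$\bm {x}^\uparrow\prec\bm r^*$. This settles feasibility for Problem~(\ref{basicmaj2}), where~$\bm {x}\prec\bm r^*$ is the only constraint, so~$\bm {x}^\uparrow\in\mathcal{{X}}^\oplus$. For Problem~(\ref{basicmaj1}) we must additionally verify~$\bm {x}^\downarrow\le\bm c$, i.e.,~$x_{[j]}\le c_j$ for every~$j$. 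Here I would use a counting argument: because~$\bm {x}\le\bm c$, any index~$i$ with~$x_i>c_j$ also satisfies~$c_i\ge x_i>c_j$; as~$\bm c$ is nonincreasing, such indices lie in~$\{1,\dots,j-1\}$, so fewer than~$j$ entries of~$\bm {x}$ exceed~$c_j$, whence the~$j$th largest entry~$x_{[j]}$ cannot exceed~$c_j$. Thus~$\bm {x}^\downarrow\in\mathcal{{X}}^\ominus$.

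\textbf{Step (b): the majorization inequality.} For this I would run a bubble-sort argument built from majorization-reducing transpositions. Consider indices~$j<k$; for Problem~(\ref{basicmaj1}) suppose~$x_j<x_k$ (an inversion relative to the target nonincreasing order), and recall~$c_j\ge c_k$. Swapping~$x_j$ and~$x_k$ changes only the~$j$th and~$k$th entries of~$\bm c-\bm {x}$, from~$(c_j-x_j,\,c_k-x_k)$ to~$(c_j-x_k,\,c_k-x_j)$, preserving their sum; since~$c_j-c_k\ge0$ and~$x_k-x_j>0$, one checks that both new entries lie in the closed interval spanned by the old pair, so the new pair is a convex combination of the old pair and its transpose. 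Hence the swap is a~$T$-transform acting on coordinates~$j,k$, and the resulting difference vector is majorized by~$\bm c-\bm {x}$. Sorting~$\bm {x}$ into~$\bm {x}^\downarrow$ by finitely many such transpositions and invoking transitivity of~$\prec$ (only the chain among successive objective vectors is needed, not feasibility of intermediate rearrangements) gives~$\bm c-\bm {x}^\downarrow\prec\bm c-\bm {x}$. The argument for Problem~(\ref{basicmaj2}) is identical after replacing~$\bm c-\bm {x}$ by~$\bm b+\bm {x}$ and reversing the target order: for~$j<k$ with~$x_j>x_k$ and~$b_j\ge b_k$, swapping sends~$(b_j+x_j,\,b_k+x_k)$ to~$(b_j+x_k,\,b_k+x_j)$ with the same sum and a smaller spread, again a~$T$-transform, yielding~$\bm b+\bm {x}^\uparrow\prec\bm b+\bm {x}$.

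\textbf{Main obstacle.} The routine bookkeeping is mild; the two delicate points are getting the orientation right (concordant with~$\bm c$ for subtraction, discordant with~$\bm b$ for addition, which is exactly what distinguishes~$\bm {x}^\downarrow$ from~$\bm {x}^\uparrow$) and the counting lemma~$x_{[j]}\le c_j$ in Step~(a), the only place where the elementwise bound~$\bm {x}\le\bm c$ interacts nontrivially with the nonincreasing structure of~$\bm c$. As an alternative to the explicit~$T$-transform accounting in Step~(b), one could instead invoke the standard rearrangement--majorization inequality of Marshall and Olkin, but I prefer the transposition argument, as it keeps the proof self-contained and makes the ``peak-shaving''/``valley-filling'' intuition transparent.
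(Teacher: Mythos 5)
Your proof is correct and follows essentially the same route as the paper's: the paper sorts the feasible vector and then invokes Lemma~\ref{comadjswap} (sorting preserves elementwise inequalities, giving $\bm x^\downarrow\le\bm c^\downarrow=\bm c$) for feasibility and Lemma~\ref{comadjswap2} (the rearrangement--majorization inequalities $\bm x^\downarrow-\bm y^\downarrow\prec\bm x^\downarrow-\bm y$ and $\bm x^\downarrow+\bm y^\uparrow\prec\bm x^\downarrow+\bm y$) for the objective, both of which are themselves established by exactly the kind of transposition accounting you carry out explicitly via $T$-transforms. The only real deviation is your counting argument for $x_{[j]}\le c_j$, which replaces the paper's swap-based Lemma~\ref{comadjswap} and is equally valid.
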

By Proposition~\ref{rescanon}, we have~$\mathcal{X}^\ominus_\downarrow\subseteq \mathcal{X}^\ominus$ and~$\mathcal{X}^\oplus_\uparrow\subseteq \mathcal{X}^\oplus$, when~$\bm b,\bm c\in \mathbb{N}^n_{\downarrow}$. Accordingly, we conclude that there exists a particular optimal objective value, the elements in whose corresponding column sum vector respectively have the same order as those in~$\bm c$ for Problem~(\ref{basicmajmatrix1}) or the reverse order as those in~$\bm b$ for Problem~(\ref{basicmajmatrix2}).
Such particular optimal objective value may not be unique, as exemplified in the next section.

Recall that~$\mathcal{{X}}^\ominus_\downarrow$ and~$\mathcal{{X}}^\oplus_\uparrow$ are partially ordered by majorization. We next show that they have more subtle structures. To this end, let us clarify several necessary preliminaries.

For a poset~$(\mathcal{S}, \preccurlyeq)$ and a subset~$\mathcal{T}$ of~$\mathcal{S}$, we respectively use~$\inf_{\mathcal{S}}\mathcal{T}$ and~$\sup_{\mathcal{S}}\mathcal{T}$ to denote the infimum and supremum of~$\mathcal{T}$ in~$(\mathcal{S}, \preccurlyeq)$. If~$\mathcal{T}$ consists of only two elements, the infimum and supremum are, respectively, called the meet and join of the two elements. A poset is called a lattice if every pair of elements has a meet~and~a join~\cite{birkhoff1940lattice}. Given a subset~$\mathcal{T}$ of a set~$\mathcal{S}$, we say that~$\mathcal{T}$ is a sublattice of~$\mathcal{S}$ under a partial order~$\preccurlyeq$ if~$(\mathcal{S}, \preccurlyeq)$ and~$(\mathcal{T}, \preccurlyeq)$ are both lattices, and~for every pair of elements~$x,y$ in~$\mathcal{T}$, it holds that~$\sup_{\mathcal{T}}\{x,y\}=\sup_{\mathcal{S}}\{x,y\}$ and~$\inf_{\mathcal{T}}\{x,y\}=\inf_{\mathcal{S}}\{x,y\}$~\cite{szsz1964introduction}. According to the definition, a poset that is both a lattice and a subset of a larger lattice may not necessarily be a sublattice of the larger lattice, different from the concepts of linear subspaces and subgroups.

It is clear that majorization is a partial order on~$\mathbb{R}^n_{\downarrow}$. Neither~$\mathbb{R}^n_{\downarrow}$ nor~$\mathbb{N}^n_{\downarrow}$ is a lattice under majorization, but for~$\tau\in \mathbb{R}$, the set~\mbox{$\mathcal{R}_{\tau}=\big\{\bm x\in \mathbb{R}^{n}_{\downarrow}\mid \sum_{i=1}^{n}x_i=\tau\big\}$} is proven to be a lattice under majorization~\cite{cicalese2002supermodularity}. A partition of a nonnegative integer~$\tau$ is a sequence of nonnegative integers whose sum is~$\tau$. For notational convenience, we define the partition set~$\mathcal{N}_\tau=\big\{\bm x\in \mathbb{N}^{n}_{\downarrow}\mid \|\bm x\|_1=\tau \big\}$, and the majorization order regarding~$\mathcal{N}_\tau$ is also known as the dominance order~\cite[Section~3]{james2006representation}.
 Although~$(\mathcal{N}_\tau,\prec)$ is a subset of~$\mathcal{R}_{\tau}$ and a lattice~\cite[Section~5.E]{olkin2016inequalities}, it is not generally a sublattice of~\mbox{$(\mathcal{R}_{\tau},\prec)$} since the join of two distinct elements in~$\mathcal{N}_\tau$ may be different from that in~$\mathcal{R}_{\tau}$. For example, if $\bm x=[5~2~2~2]'$ and $\bm y=[4~3~3~1]'$, then we have $\sup\nolimits_{\mathcal{N}_{11}}\{\bm x, \bm y\}=[5~3~2~1]'$ while~\mbox{$\sup\nolimits_{\mathcal{R}_{11}}\{\bm x, \bm y\}=[5~2.5~2.5~1]'$}. The meet and join of two distinct elements in~$\mathcal{R}_{\tau}$ can be efficiently calculated by the methods in~\cite{cicalese2002supermodularity}, while the methods for calculating those in~$\mathcal{N}_{\tau}$ will be given later. Note that~$\mathcal{{X}}^\ominus_\downarrow$ and~$\mathcal{{X}}^\oplus_\downarrow$ are subsets of~$\mathcal{N}_{\|\bm t\|_1}$, and we wonder whether they inherit the lattice structure from their common superset~$\mathcal{N}_{\|\bm t\|_1}$. See the answer in Proposition~\ref{feaslattice}.

\begin{proposition}
  The canonical feasible set~$\mathcal{{X}}^\ominus_\downarrow$ or~$\mathcal{{X}}^\oplus_\downarrow$ with over two elements is a sublattice of~$\mathcal{N}_{\|\bm t\|_1}$ under majorization.\label{feaslattice}
\end{proposition}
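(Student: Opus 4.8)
The plan is to reduce the sublattice claim to a closure property and then verify it through the meet and join formulas in $\mathcal{N}_{\|\bm t\|_1}$. I will use two standard lattice facts: a nonempty subset $\mathcal{T}$ of a lattice $(\mathcal{L},\preccurlyeq)$ is a sublattice exactly when it is closed under the meet and join of $\mathcal{L}$; and an intersection of two sublattices is again a sublattice. Since $(\mathcal{N}_{\|\bm t\|_1},\prec)$ is already known to be a lattice, it suffices to exhibit $\mathcal{X}^\oplus_\downarrow$ and $\mathcal{X}^\ominus_\downarrow$ as subsets of $\mathcal{N}_{\|\bm t\|_1}$ closed under its meet and join. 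First I would use the Gale--Ryser theorem and Proposition~\ref{rescanon} to identify $\mathcal{X}^\oplus_\downarrow=\{\bm z\in\mathcal{N}_{\|\bm t\|_1}\mid \bm z\prec\bm r^*\}$ and $\mathcal{X}^\ominus_\downarrow=\{\bm z\in\mathcal{N}_{\|\bm t\|_1}\mid \bm z\prec\bm r^*,\ \bm z\le\bm c\}$, where passing to nonincreasing rearrangements (and, for the $\ominus$ case, preserving $\bm z\le\bm c$ under sorting) is precisely what Proposition~\ref{rescanon} licenses when $\bm b,\bm c\in\mathbb{N}^n_\downarrow$.

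The set $\mathcal{M}:=\mathcal{X}^\oplus_\downarrow$ is then the principal ideal generated by $\bm r^*$ in $\mathcal{N}_{\|\bm t\|_1}$, and principal ideals are automatically sublattices: if $\bm x,\bm y\prec\bm r^*$, their meet satisfies $\bm x\wedge\bm y\prec\bm x\prec\bm r^*$ by transitivity, while their join is the least upper bound and hence $\bm x\vee\bm y\prec\bm r^*$ because $\bm r^*$ is itself an upper bound. This settles the $\oplus$ case and, via the intersection principle, reduces the $\ominus$ case to showing that $\mathcal{C}:=\{\bm z\in\mathcal{N}_{\|\bm t\|_1}\mid\bm z\le\bm c\}$ is a sublattice, since $\mathcal{X}^\ominus_\downarrow=\mathcal{M}\cap\mathcal{C}$.

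For $\mathcal{C}$ I would invoke the explicit meet and join formulas in $\mathcal{N}_{\|\bm t\|_1}$. Writing $X_k=\sum_{i=1}^{k}x_i$ and $Y_k=\sum_{i=1}^{k}y_i$, the meet $\bm x\wedge\bm y$ has partial sums $\min(X_k,Y_k)$ (which form a nonincreasing, nonnegative integer difference sequence summing to $\|\bm t\|_1$, so it lands in $\mathcal{N}_{\|\bm t\|_1}$); a short case analysis on which of $X,Y$ attains the minimum at $k-1$ and at $k$ shows its $k$-th part never exceeds $\max(x_k,y_k)$, hence is at most $c_k$ because $\bm x,\bm y\le\bm c$. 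Thus $\bm x\wedge\bm y\le\bm c$ and $\mathcal{C}$ is closed under meets. The crux is the join: $\bm x\vee\bm y$ is the least nonincreasing integer vector whose partial sums form a concave majorant of $S_k:=\max(X_k,Y_k)$, and I must show that its $k$-th part is again bounded by $\max(x_k,y_k)\le c_k$.

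The key lemma is therefore $(\bm x\vee\bm y)_k\le\max(x_k,y_k)$ for all $k$. For the real relaxation, i.e. the least concave majorant $\widehat{Z}$ of $S$ in $\mathcal{R}_{\|\bm t\|_1}$, this admits a clean chord argument: on each maximal linear segment $[a,b]$ of $\widehat{Z}$ the slope is $s=(S_b-S_a)/(b-a)$, and if $Y$ attains the maximum at the right endpoint ($S_b=Y_b$), then the majorant inequality $\widehat{Z}_{b-1}\ge Y_{b-1}$ reads $Y_b-s\ge Y_b-y_b$, giving $s\le y_b\le y_k$ for every $k\in(a,b]$ since $\bm y$ is nonincreasing; the symmetric case yields $s\le x_b$, so the segment slope is at most $\max(x_k,y_k)$ throughout. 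The remaining and main obstacle is to carry this bound from the real least concave majorant to the integer join, whose parts are rounded and need not be constant along a segment. I expect to handle this either by showing the integer join is produced by an integer concave-majorant construction to which the same touch-point reasoning applies, or by a minimality/exchange argument that contradicts the leastness of $\bm x\vee\bm y$ whenever some part exceeds $\max(x_k,y_k)$. Granting the lemma, $\bm x\vee\bm y\le\bm c$, so $\mathcal{C}$ is closed under joins, $\mathcal{X}^\ominus_\downarrow=\mathcal{M}\cap\mathcal{C}$ is a sublattice, and the proposition follows; the hypothesis of more than two elements merely guarantees that the poset in question is a genuine nonempty lattice, so that the sublattice notion is nonvacuous.
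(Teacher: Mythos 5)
Your overall architecture coincides with the paper's: both reduce the claim to closure of $\mathcal{X}^\ominus_\downarrow$ and $\mathcal{X}^\oplus_\downarrow$ under the meet and join of $\mathcal{N}_{\|\bm t\|_1}$, dispose of $\mathcal{X}^\oplus_\downarrow$ (and of the majorization constraint in the $\ominus$ case) by transitivity, handle the meet through the identity $(\bm x\wedge\bm y)_k=\min(X_k,Y_k)-\min(X_{k-1},Y_{k-1})\le\max(x_{[k]},y_{[k]})\le c_k$, and rest everything on the key lemma $(\bm x\vee\bm y)_k\le\max(x_{[k]},y_{[k]})$. That lemma, however, is exactly where your argument stops being a proof: you establish it only for the least concave majorant in $\mathcal{R}_{\|\bm t\|_1}$ via the chord/touch-point argument, and then state that you ``expect to handle'' the integer join by one of two unexecuted routes. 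Since the paper itself stresses (with $\bm x=[5~2~2~2]'$, $\bm y=[4~3~3~1]'$) that $\sup_{\mathcal{N}_{\tau}}\{\bm x,\bm y\}$ and $\sup_{\mathcal{R}_{\tau}}\{\bm x,\bm y\}$ genuinely differ, the transfer from the real relaxation is not automatic, and as written the join-closure of $\{\bm z\le\bm c\}$ --- the crux of the $\ominus$ case --- is left unproven. This is a genuine gap.

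The paper closes precisely this gap by the first of your two proposed routes. It characterizes $\bm z=\sup_{\mathcal{N}_{\|\bm t\|_1}}\{\bm x,\bm y\}$ greedily: $z_k$ is the least $\alpha\in\mathbb{N}$ with $\sum_{i=1}^{k-1}z_i+j\alpha\ge\max\bigl\{\sum_{i=1}^{k-1+j}x_{[i]},\sum_{i=1}^{k-1+j}y_{[i]}\bigr\}$ for all $1\le j\le n-k+1$. Minimality of $z_k$ produces an index $p$ at which $\alpha=z_k-1$ fails; combining this with $\sum_{i=1}^{k-1}z_i\ge\max\bigl\{\sum_{i=1}^{k-1}x_{[i]},\sum_{i=1}^{k-1}y_{[i]}\bigr\}$ gives $p(z_k-1)<\max\bigl\{\sum_{i=k}^{k-1+p}x_{[i]},\sum_{i=k}^{k-1+p}y_{[i]}\bigr\}\le p\max\{x_{[k]},y_{[k]}\}$, whence $z_k\le\max\{x_{[k]},y_{[k]}\}\le c_k$ by integrality and monotonicity of $\bm x,\bm y$. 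This is the discrete analogue of your touch-point reasoning; inserting it completes your proof. Your repackaging of the $\ominus$ case as the intersection of the principal ideal below $\bm r^*$ with the down-set $\{\bm z\in\mathcal{N}_{\|\bm t\|_1}\mid\bm z\le\bm c\}$ is a tidy cosmetic improvement over the paper's direct verification, but it does not change the substance of what must be proved.
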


Proposition~\ref{feaslattice}, proven in Appendix~\ref{PMPOMapdAf}, generalizes the result in~\cite{wan2000global} for a special case where~$n=4$ and the elements of~$\bm c$ are large enough. Its proof shows that obtaining~$\inf_{\mathcal{N}_{\tau}}\{\bm x, \bm y\}$ is easier than~$\sup_{\mathcal{N}_{\tau}}\{\bm x, \bm y\}$, since we derive the former from the minima of the corresponding leading partial sums of~$\bm x^\downarrow$ and~$\bm y^\downarrow$, while we cannot necessarily obtain the latter by the maxima accordingly. A similar phenomenon was observed for~\mbox{$\inf_{\mathcal{R}_{\tau}}\{\bm x, \bm y\}$} and~$\sup_{\mathcal{R}_{\tau}}\{\bm x, \bm y\}$~\cite{cicalese2002supermodularity}. From an optimization viewpoint, the reason is that the pointwise maximum of concave functions is not necessarily concave~\cite[Section 3.2]{boyd2004convex}. {Nevertheless, unlike~$\sup_{\mathcal{R}_{\tau}}\{\bm x, \bm y\}$, there is an easier way to attain~$\sup_{\mathcal{N}_{\tau}}\{\bm x, \bm y\}$ via partition conjugates involving~$(0,1)$-matrix completion.} Specifically, we show by simple calculations that~$\sup_{\mathcal{N}_{\tau}}\{\bm x, \bm y\}=\left(\inf_{\mathcal{N}_{\tau}}\{\bm x^*, \bm y^*\}\right)^*$, so calculating~$\sup_{\mathcal{N}_{\tau}}\{\bm x, \bm y\}$ is not harder than~$\inf_{\mathcal{N}_{\tau}}\{\bm x, \bm y\}$ except that we need to calculate the partition conjugates three times.

Proposition~\ref{rescanon} and Proposition~\ref{feaslattice} together indicate that the two iPOP problems essentially amount to the optimization over lattices~$\mathcal{{X}}^\ominus_\downarrow$ and~$\mathcal{{X}}^\oplus_\uparrow$ (homomorphic to~$\mathcal{{X}}^\oplus_\downarrow$). Many interesting results have been derived by virtue of optimization over a lattice~\cite{kuvcera2022assignment,parker1989partial,zimmermann2011linear} like constructing Huffman codes~\cite{parker1999construction}. From this perspective, we can expect more potential applications of our iPOP problems that we may neglect herein.

When~$\bm b$ or~$\bm c$ equals~$x\bm 1$, for a certain~$x\in \mathbb{R}$, we conclude by Proposition~\ref{feaslattice} that the canonical attainable set of Problem~(\ref{basicmajmatrix1}) or Problem~(\ref{basicmajmatrix2}) is also a lattice under majorization. However, the sets~$\mathcal{V}^\ominus_\downarrow$ and~$\mathcal{V}^\oplus_\downarrow$ are not lattices under majorization in general. We exemplify this observation in Appendix~\ref{PMPOMapdA5}. To a certain extent, this observation implies the challenge of solving the proposed problem by traditional methods for POP and calls for efficient specialized solution algorithms to be introduced.

\section{``Peak-Shaving'' and ``Valley-Filling'' Solutions}\label{sec_solutions}
In this section, we propose a ``peak-shaving'' and a ``valley-filing'' algorithm (see Algorithms~\ref{MAlgo1} and~\ref{MAlgo2}) respectively for optimal solutions and objective values of Problem~(\ref{basicmajmatrix1}) and Problem~(\ref{basicmajmatrix2}) (see Theorem~\ref{algthm}). Moreover, by carefully tackling ties, our approach can lead us to every optimal objective value (see Theorem~\ref{alloptobj}). As a by-product, the ``peak-shaving'' algorithm can also be used to check the feasibility of Problem~(\ref{basicmajmatrix1}).


We note that Ryser's algorithm is well-known to be efficient in constructing a $(0,1)$-matrix with given row and column sums~\cite[Section~3]{brualdi2006combinatorial}. It iteratively constructs an admissible matrix row by row or column by column, in a greedy manner. The two algorithms we develop inherit such merits of Ryser's algorithm. Specifically, the peak-shaving algorithm sequentially constructs the rows of an optimal~$(0,1)$-matrix by decreasing largest elements of a vector; in contrast, the valley-filling one does so by increasing smallest elements. Thus, the two algorithms also provide a fine-grained perspective on the inherent symmetry of the two proposed iPOP problems.

\begin{theorem}
  Given~$\bm b, \bm c \in \mathbb{N}^n$ and~$\bm r \in \mathbb{N}^m$, Algorithm~\ref{MAlgo1} and Algorithm~\ref{MAlgo2} respectively generate optimal~$(0,1)$-matrices for Problem~(\ref{basicmajmatrix1}) and Problem~(\ref{basicmajmatrix2}) with the time complexity~$\mathcal{O}(mn)$.\label{algthm}
\end{theorem}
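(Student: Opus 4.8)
The plan is to establish the three assertions—feasibility, majorization-optimality, and the $\mathcal{O}(mn)$ running time—one at a time, and to carry out the argument in full only for the peak-shaving Algorithm~\ref{MAlgo1}, recovering Algorithm~\ref{MAlgo2} from the addition/subtraction symmetry between the two iPOP problems (equivalently, the partition-conjugate duality $\sup_{\mathcal{N}_\tau}\{\bm x,\bm y\}=(\inf_{\mathcal{N}_\tau}\{\bm x^*,\bm y^*\})^*$ recorded after Proposition~\ref{feaslattice}). The first move is a reduction: since the Gale--Ryser theorem identifies the attainable column sums of Problem~\eqref{basicmajmatrix1} exactly with the feasible set of the vector program~\eqref{basicmaj1}, it suffices to show that the column-sum vector $\bm x$ returned by the algorithm is feasible for~\eqref{basicmaj1} (that is, $\bm x\in\mathbb{N}^n$, $\bm x\prec\bm r^*$, and $\bm x\le\bm c$) and that $\bm c-\bm x$ is a \emph{least} element of the attainable set $\mathcal{V}^\ominus$.

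For feasibility, each pass places exactly $r_i$ ones in row $i$ at distinct columns, so $A\in\{0,1\}^{m\times n}$ and the row-sum constraints $\|\bm a_{i\bigcdot}\|_1=r_i$ hold by construction; the greedy rule never deposits a one in a column whose running remaining supply has already hit zero, so $\|\bm a_{\bigcdot j}\|_1\le c_j$ is preserved. Proposition~\ref{feacond} (feasibility iff $\bm c\prec^w\bm r^*$) guarantees that all $r_i$ ones can always be seated, so the procedure either terminates with a genuine feasible matrix or certifies infeasibility; since the final column sums satisfy $\bm x\prec\bm r^*$, realizability is confirmed and the reduction is closed.

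For optimality I would argue a loop invariant. Writing $\bm s^{(k)}$ for the column sums after the first $k$ rows and $\bm q^{(k)}=\bm c-\bm s^{(k)}$ for the running remaining-supply profile, the claim is that $(\bm q^{(k)})^\downarrow$ is the least element, under majorization, among all remaining-supply profiles achievable by any legal placement of rows with sums $r_1,\dots,r_k$. The inductive step rests on a single-row \emph{exchange lemma}: subtracting a $0/1$ vector of weight $r_k$ from a nonincreasing vector yields the majorization-least result precisely when the ones hit the $r_k$ largest coordinates, which follows from the standard transfer argument that moving a unit decrement from a smaller to a larger coordinate lowers every leading partial sum. \textbf{The genuine obstacle is upgrading this per-row statement to a global one}, i.e.\ certifying that the greedy choice for row $k$ never forecloses optimality for the remaining rows, since majorization is only a partial order. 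I would resolve this by an extendability argument: any optimal placement can be rearranged row by row into the greedy one without increasing the objective, using Proposition~\ref{rescanon} to keep each intermediate profile sorted and the sublattice structure of Proposition~\ref{feaslattice} to take meets dominating both the greedy partial profile and a competitor. With the invariant in hand at $k=m$, Theorem~\ref{leastthm} (minimal $\Rightarrow$ least, essential uniqueness) and Theorem~\ref{leastthmcan} identify $(\bm c-\bm x)^\downarrow$ with the least element $\bm c\ominus\bm r$, yielding global optimality; the valley-filling case follows from the dual exchange lemma applied to the $r_k$ \emph{smallest} coordinates of $\bm b+\bm s^{(k)}$.

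For the complexity, I would maintain the running profile in nonincreasing order throughout (an initial sort of $\bm c$ or $\bm b$ is a one-time cost, $\mathcal{O}(n)$ by counting sort on bounded integers, and is dominated by the main loop). For each row the $r_i$ target coordinates form the leading block of the sorted profile; decrementing (peak-shaving) or incrementing (valley-filling) this block by one perturbs the order only at the boundary of a tied run, so the sorted order is restored by a single $\mathcal{O}(n)$ local rotation rather than a full re-sort. Each of the $m$ rows thus costs $\mathcal{O}(n)$, for a total of $\mathcal{O}(mn)$, and the integer-only updates incur no round-off. As noted, the delicate point throughout is the global exchange argument in the optimality step, where the partial order makes it genuinely nontrivial to certify that row-wise greediness attains the global majorization minimum.
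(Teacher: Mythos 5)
Your skeleton (induction on rows, local greedy optimality, then a global upgrade) matches the paper's, and your feasibility and $\mathcal{O}(mn)$ arguments are fine. But the step you yourself flag as ``the genuine obstacle'' is left unresolved, and the tools you propose for it do not close the gap. Your single-row exchange lemma only compares two placements applied to the \emph{same} sorted vector. To propagate the loop invariant you need a \emph{two-vector} monotonicity statement: if $\bm u \prec \bm v$ (both sorted) and the greedy step decrements positions $p_1<\dots<p_{r_{k+1}}$ of $\bm u$ while a competitor decrements positions $q_1,\dots,q_{r_{k+1}}$ of $\bm v$ with $p_l\le q_l$, then $\bm u-\sum_l \bm{e_{p_l}} \prec \bm v-\sum_l \bm{e_{q_l}}$. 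This is exactly Lemma~\ref{deducadditionseq} in the paper, and it is the entire engine of the proof: it lets the paper (i) greedify the last row of an arbitrary competitor without increasing its objective, and then (ii) transfer the induction hypothesis $\bm{\bar c}^{(k)}\prec \bm c-\bm{\hat x}$ through one more greedy step. Nothing in your sketch supplies this comparison, and it is not a ``standard transfer argument'': proving it requires a nontrivial analysis of where the decremented index sits inside tied runs of $\bm u^\downarrow$ and $\bm v^\downarrow$ (the paper's proof handles the case $\hat p>\hat q$ by a separate contradiction argument on partial sums).

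The specific repairs you propose do not substitute for this. Taking meets in the sublattice of Proposition~\ref{feaslattice} buys nothing: by your own invariant the greedy partial profile is already the least element of the $k$-row attainable set, so its meet with any competitor is itself, and the lattice structure says nothing about how a $\prec$-relation between two $k$-row profiles survives the $(k{+}1)$th greedy decrement. The ``rearrange any optimal placement row by row into the greedy one'' idea is essentially the interchange argument the paper uses for Theorem~\ref{alloptobj}, but there it is applied \emph{after} optimality of the greedy output is known (the argument invokes ``by the optimality of $A$'' to conclude $v_p\le v_q$); using it to \emph{establish} Theorem~\ref{algthm} would require showing that each $2\times2$ interchange, or each uncompensated single swap, never increases the objective under majorization, and that all greedy runs (which break ties randomly) yield equivalent outputs -- none of which is argued. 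Separately, your claim that Algorithm~\ref{MAlgo2} follows from the partition-conjugate identity $\sup_{\mathcal{N}_\tau}\{\bm x,\bm y\}=(\inf_{\mathcal{N}_\tau}\{\bm x^*,\bm y^*\})^*$ is a non sequitur; the correct route is simply the second half of Lemma~\ref{deducadditionseq}, which is the additive mirror of the first.
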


We defer the proof of Theorem~\ref{algthm} to Appendix~\ref{PMPOMapdA7}. Following is an immediate corollary. It shows that, like Ryser's algorithms, Algorithm~\ref{MAlgo1} either generates an admissible matrix or suggests the infeasibility of the considered matrix completion. 
\begin{corollary}
  Problem~(\ref{basicmajmatrix1}) is feasible if and only if the optimal objective value~$\bm{\bar{c}}$ generated by Algorithm~\ref{MAlgo1} is elementwise nonnegative, namely~$\bm{\bar{c}}\geq \bm 0$.
\end{corollary}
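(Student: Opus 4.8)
The plan is to prove the two implications separately, leaning on Theorem~\ref{algthm} together with the feasibility characterization of Proposition~\ref{feacond}. The starting observation I would record is that, regardless of feasibility, the matrix returned by Algorithm~\ref{MAlgo1} always has exact row sums $\bm r$ by construction; hence its column sum vector $\bm x$ automatically satisfies $\bm x \prec \bm r^*$ by the Gale-Ryser theorem, and the reported objective value is $\bm{\bar c}=\bm c-\bm x$. In other words, of all the constraints defining Problem~\eqref{basicmaj1}, the integrality and the majorization constraint $\bm x\prec \bm r^*$ are met unconditionally, so the \emph{only} requirement that can fail is the ceiling condition $\bm x\leq \bm c$, which is exactly the assertion $\bm{\bar c}\geq \bm 0$. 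This reduces the whole statement to relating feasibility with the sign of $\bm{\bar c}$.

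For sufficiency, I would assume $\bm{\bar c}\geq \bm 0$. Then $\bm x\leq \bm c$, and since $\bm x\in \mathbb{N}^n$ and $\bm x\prec \bm r^*$ already hold, $\bm x$ satisfies every constraint of Problem~\eqref{basicmaj1}; equivalently, the returned matrix lies in $\mathcal{A}(\bm r,\bm x)$ with $x_j=\|\bm a_{\bigcdot j}\|_1\leq c_j$ for all $j$, so it is feasible for Problem~\eqref{basicmajmatrix1}. Feasibility of the iPOP problem then follows immediately, with the returned matrix serving as an explicit witness.

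For necessity, I would assume Problem~\eqref{basicmajmatrix1} is feasible and pick any feasible column sum vector $\hat{\bm x}\in \mathcal{X}^\ominus$, so that $\bm c-\hat{\bm x}\in \mathcal{V}^\ominus$ is an attainable objective value with $\bm c-\hat{\bm x}\geq \bm 0$. By Theorem~\ref{algthm} the output of Algorithm~\ref{MAlgo1} is optimal, hence by Theorem~\ref{leastthm} its value $\bm{\bar c}$ is the least element of $(\mathcal{V}^\ominus,\prec)$; in particular $\bm{\bar c}\prec \bm c-\hat{\bm x}$. The crucial step is that the smallest-entry functional $\bm u\mapsto \min_i u_i$ is symmetric and concave, hence Schur-concave, so majorization can only increase it: $\min_i \bar c_i\geq \min_i(c_i-\hat x_i)\geq 0$, whence $\bm{\bar c}\geq \bm 0$.

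I expect the last step to be the main obstacle to state cleanly. Majorization is only a partial order and in general does \emph{not} preserve elementwise nonnegativity—flattening a nonnegative vector could a priori push individual coordinates negative—so one cannot simply say ``$\bm{\bar c}$ is dominated by a nonnegative vector, therefore nonnegative.'' What rescues the argument is the specific geometry of the objective: since the component sums agree and the minimum entry is Schur-concave, the flattest attainable vector carries the largest minimum, so optimality forces $\bm{\bar c}$ to remain nonnegative the moment a single nonnegative objective is attainable. A secondary point I would settle explicitly is that Algorithm~\ref{MAlgo1} is well defined and returns $\bm{\bar c}$ even on infeasible inputs (where it effectively optimizes the relaxation obtained by dropping $\bm x\leq \bm c$); on feasible inputs this relaxed optimum respects the ceiling and coincides with the optimum of Problem~\eqref{basicmajmatrix1}, which is precisely what licenses the use of Theorem~\ref{algthm} and Theorem~\ref{leastthm} above. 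Once these two points are in place, the equivalence reduces to the elementary sign bookkeeping already described.
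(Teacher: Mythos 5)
Your proof is correct. The paper offers no explicit argument here---it presents the corollary as an immediate consequence of Theorem~\ref{algthm}---so the comparison is against the one-line argument the paper clearly intends: under feasibility, Theorem~\ref{algthm} says Algorithm~\ref{MAlgo1} returns an \emph{optimal, hence feasible}, matrix for Problem~\eqref{basicmajmatrix1}, so its column sums respect the ceiling and $\bm{\bar c}\geq\bm 0$; conversely, if $\bm{\bar c}\geq\bm 0$ the returned matrix is itself a feasibility witness. Your sufficiency direction is exactly this. Your necessity direction, however, takes a genuinely longer route: rather than reading the ceiling constraint off the feasibility of the returned matrix, you pass to the objective value, invoke Theorem~\ref{leastthm} to get $\bm{\bar c}\prec \bm c-\hat{\bm x}$ for some nonnegative attainable vector, and then use Schur-concavity of $\bm u\mapsto\min_i u_i$ (equivalently, the $k=n$ case of weak supermajorization, $\bar c_{[n]}\geq (\bm c-\hat{\bm x})_{[n]}\geq 0$). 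This is valid---the equal-sums condition needed for $\prec$ holds since both column sum vectors total $\sum_i r_i$---and it has the modest virtue of relying only on the optimality of the objective \emph{value} rather than on the feasibility of the output \emph{matrix}, which is useful if one views the algorithm as solving the relaxation without the ceiling constraint. But it is machinery the statement does not require, and your own opening paragraph (``the only constraint that can fail is $\bm x\leq\bm c$, which is exactly $\bm{\bar c}\geq\bm 0$'') already contains the shorter argument.
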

Algorithm~\ref{MAlgo1} is easy to implement because it only involves operations like sorting and subtraction. It decomposes the related iPOP problem into a series of sorting processes, since the key operation of each iteration is to find a number of largest elements~(Line~$3$ in Algorithm~\ref{MAlgo1}). We name it the peak-shaving algorithm becasue it iteratively subtracts ones from the columns with more remaining column sums~(Line~$4$-$5$ in Algorithm~\ref{MAlgo1})~\cite{chakrabortty2011control}. Correspondingly, we call Algorithm~\ref{MAlgo2} the valley-filling algorithm because it adds ones to the columns with fewer aggregated column sums in each iteration.


In Algorithm~\ref{MAlgo1} or Algorithm~\ref{MAlgo2}, we tackle ties randomly, following a uniform distribution. Thus, we may obtain different objective values when running the algorithm twice, though they are equivalent and both optimal. If we tackle ties by giving priority to the positions with larger or smaller indices respectively, we can obtain the particular optimal objective value in Proposition~\ref{propcan}. Differently, if we give priority according to the current aggregated column sums, we can obtain a particular optimal objective value in Proposition~\ref{rescanon}. More interestingly, we find out that we can obtain all optimal objective values of Problem~(\ref{basicmajmatrix1}) or Problem~(\ref{basicmajmatrix2}) by enumerating all priority choices whenever encountering ties. We formally state this result as Theorem~\ref{alloptobj} and give the proof in Appendix~\ref{PMPOMapdAall}.
\begin{theorem}\label{alloptobj}
  Each least element in~$(\mathcal{V}^\ominus,\prec)$ or~$(\mathcal{V}^\oplus,\prec)$ can respectively be the output of Algorithm~\ref{MAlgo1} or Algorithm~\ref{MAlgo2} with a positive probability.
\end{theorem}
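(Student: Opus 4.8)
The plan is to prove a sharper statement from which the theorem is immediate: under Algorithm~\ref{MAlgo1} the set of objective values output with positive probability is \emph{exactly} the set of least elements of $(\mathcal{V}^\ominus,\prec)$ (the $(\mathcal{V}^\oplus,\prec)$ claim then follows by the paper's ``peak/valley'' symmetry, replacing ``largest/decrement'' by ``smallest/increment'' and $\bm c\ominus\bm r$ by $\bm b\oplus\bm r$). First I would strip off the probability. Whenever a tie occurs, Algorithm~\ref{MAlgo1} draws uniformly from finitely many admissible index sets, and it halts after exactly $m$ iterations; hence any prescribed sequence of admissible tie-breaks is realized with strictly positive probability, and it suffices to show each least element is the output of \emph{some} admissible run. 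One inclusion is free: by Theorem~\ref{algthm} every run returns an optimal matrix, whose column-sum vector gives a minimal objective value, which by Theorem~\ref{leastthm} is a least element. The work lies in the reverse inclusion.

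By Theorems~\ref{leastthm} and~\ref{leastthmcan}, the least elements of $(\mathcal{V}^\ominus,\prec)$ are precisely the attainable $\bm v=\bm c-\bm x$ with $\bm x\in\mathcal{X}^\ominus$ and $\bm v^\downarrow=\bm c\ominus\bm r$. Fix such a target $\bm v^\star$ and set $\bm x^\star=\bm c-\bm v^\star$. I would induct on the number of rows $m$, phrasing the induction for an \emph{arbitrary}, not necessarily sorted, ceiling, since after one iteration the working vector $\bm g=\bm c-\sum_{j\in S_1}\bm e_j$ is generally unsorted; this is legitimate because Theorem~\ref{algthm} and the least-element description hold for every $\bm c\in\mathbb{N}^n$. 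The engine is a \emph{peeling lemma}: there is an admissible first choice $S_1$ (a set of $r_1$ indices attaining the largest entries of $\bm c$) with $x^\star_j\geq 1$ for all $j\in S_1$, together with an $(m-1)\times n$ $(0,1)$-matrix of row sums $r_2,\dots,r_m$ and column-sum vector $\bm x^\star-\sum_{j\in S_1}\bm e_j$.

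Granting the peeling lemma, the induction closes cleanly. Writing $\bm c'=\bm c-\sum_{j\in S_1}\bm e_j$ and $\bm x'=\bm x^\star-\sum_{j\in S_1}\bm e_j\geq\bm 0$, one has $\bm c'-\bm x'=\bm v^\star$ and $\bm x'\leq\bm c'$; moreover any greedy continuation on $\bm c'$ with the remaining rows is itself an Algorithm~\ref{MAlgo1} run on $(\bm c,\bm r)$, so by Theorem~\ref{algthm} it reaches the sorted optimum, giving $\bm c'\ominus(r_2,\dots,r_m)=\bm c\ominus\bm r=(\bm v^\star)^\downarrow$. Thus $\bm v^\star$ is attainable and sorted-optimal for the reduced instance, hence a least element of its attainable set, and the induction hypothesis furnishes an admissible run on $\bm c'$ with rows $r_2,\dots,r_m$ that outputs $\bm x'$. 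Prepending the choice $S_1$ yields an admissible run on $\bm c$ returning $\bm x^\star$, i.e. the objective value $\bm v^\star$.

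The hard part is the peeling lemma, specifically securing a greedy-admissible first support \emph{contained in} $\mathrm{supp}(\bm x^\star)$. The threat is a strictly-largest ``peak'' column $j$ that the greedy rule must decrement but at which $\bm x^\star$ places no charge; this would make $\bm v^\star$ unreachable, so the crux is that optimality forbids it. I would establish this by an interchange argument on an arbitrary realization $A$ of $\bm x^\star$: if a size-$r_1$ row is not supported on a top-$\bm c$ set, pick $j\notin\mathrm{supp}(\bm a_{1\bigcdot})$ and $k\in\mathrm{supp}(\bm a_{1\bigcdot})$ with $c_j>c_k$ and swap that row's one from $k$ to $j$ against a compatible partner row, preserving the row sums and the column-sum vector $\bm x^\star$. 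The delicate point is existence of a partner row, which reduces to the optimality fact that an optimal solution ``charges the peaks'' (no higher-$\bm c$ column is left strictly less charged than a lower-$\bm c$ column beyond what the cap $\bm x\prec\bm r^*$ forces); this is the unsorted-ceiling analogue of Proposition~\ref{rescanon} and I expect to prove it by a short majorization exchange combined with the Gale--Ryser criterion. The whole development then transfers verbatim to Algorithm~\ref{MAlgo2} and $(\mathcal{V}^\oplus,\prec)$ under the stated symmetry.
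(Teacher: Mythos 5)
Your proposal is correct and essentially reproduces the paper's argument: the paper's proof likewise fixes an optimal matrix realizing the target least element and converts it into an algorithm-realizable one by $2\times 2$ interchanges, the partner row being supplied by exactly your ``optimality charges the peaks'' fact, which it proves by the same Robin Hood transfer (if the under-charged higher-ceiling column had a strictly larger objective entry, shifting one unit to it would strictly decrease the objective under majorization, so the higher-ceiling column must carry strictly more charge and a partner row exists). Your induction-with-peeling packaging is just the unrolled form of the paper's single pass through the rows in order, and the appeal to Gale--Ryser is unnecessary --- the partner row already follows from a direct count once that inequality on the objective entries is in hand.
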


\begin{algorithm}[t]
\caption{A peak-shaving approach to Problem~(\ref{basicmajmatrix1})}
\label{MAlgo1}
    \KwIn {A ceiling vector~$\bm c\in \mathbb{N}^n_\downarrow$ and~$\bm r\in \mathbb{N}^m$ with~$\bm c\prec^w \bm r^*$.}
    \KwOut  {An optimal objective value~$\bm{\bar{c}}=\bm{\bar{c}}^{(i)} \in \mathcal{V}^\ominus$ and a~$(0,1)$-matrix~$A=[a_{ij}]\in \{0,1\}^{m\times n}$.}
    {{\bf Initialization:} $i=1$, $m=1$, $\bm{\bar{c}}^{(0)}=\bm c$\;}
    \While {$i\leq m-1$}{
    {Identify the positions of the $r_i$ largest elements} {in~$\bm{\bar{c}}^{(i-1)}$. Follow the uniform distribution to tackle ties} such that exactly $r_i$ positions are specified\; {Let~$\bm {c}^{(i)}_{temp}$ be an~$n\times 1$ $(0,1)$-vector whose ones} {appear exactly in the prespecified positions\;}
    {$\bm a_{i\bigcdot} =\bm {c}^{(i)}_{temp}$; $\bm{\bar{c}}^{(i)}=\bm{\bar{c}}^{(i-1)}-\bm c_{temp}^{(i)}$; $i=i+1$\;}
    }
\end{algorithm}
    \begin{algorithm}[t]
\caption{A valley-filling approach to Problem~(\ref{basicmajmatrix2})}
\label{MAlgo2}
    \KwIn {A base vector~$\bm b\in \mathbb{N}^n_\downarrow$ and~$\bm r\in \mathbb{N}^m$.}
    \KwOut {A optimal objective value~$\bm{\bar{b}}=\bm{\bar{b}}^{(i)} \in \mathcal{V}^\oplus$  and a~$(0,1)$-matrix~$A=[a_{ij}]\in \{0,1\}^{m\times n}$.}
    {{\bf Initialization:} $i=1$, $m=1$, $\bm{\bar{b}}^{(0)}=\bm b$\;}
    \While {$i\leq m-1$}{
    {Identify the positions of the $r_i$ smallest elements} {in~$\bm{\bar{b}}^{(i-1)}$. Follow the uniform distribution to tackle ties} such that exactly $r_i$ positions are specified\; {Let~$\bm {b}^{(i)}_{temp}$ be an~$n\times 1$ $(0,1)$-vector whose ones} {appear in the prespecified positions exactly\;}
     {$\bm a_{i\bigcdot} =\bm {b}^{(i)}_{temp}$; $\bm{\bar{b}}^{(i)}=\bm{\bar{b}}^{(i-1)}+\bm b_{temp}^{(i)}$; $i=i+1$\;}
    }
\end{algorithm}

The above theorem suggests an efficient way to obtain all optimal objective values. It remains to show how to find all the~$(0,1)$-matrices with given row and column sums. To our delight, for every two matrices in~$\mathcal{A}(\bm r,\bm {x})$, one can be obtained from another by a series of simple operations, as proved in~\cite{ryser1957combinatorial}. Thus, it is not much harder to obtain all the optimal~$(0,1)$-matrices or objective values once we obtain one of them.

Another celebrating result is that, in either algorithm, we construct the matrices row by row, while the order in which we construct such rows does not matter, as indicated below.

\begin{proposition}\label{reorder}
  Replacing $r_i$ with $r_{\sigma(i)}$ for an arbitrary permutation~$\sigma$ does not change the optimality of the objective value and~$(0,1)$-matrix obtained by Algorithm~\ref{MAlgo1} or~Algorithm~\ref{MAlgo2}.
\end{proposition}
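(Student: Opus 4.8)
The plan is to reduce the claim to Theorem~\ref{algthm}, which already certifies optimality for \emph{every} admissible row sum vector, by observing that Problems~(\ref{basicmajmatrix1}) and~(\ref{basicmajmatrix2}) depend on~$\bm r$ only through its multiset of entries. Processing the rows in the order prescribed by~$\sigma$ is nothing but running Algorithm~\ref{MAlgo1} (resp. Algorithm~\ref{MAlgo2}) on the permuted input~$\bm r_\sigma=(r_{\sigma(1)},\ldots,r_{\sigma(m)})'$. By Theorem~\ref{algthm}, this run returns an optimal objective value and an optimal~$(0,1)$-matrix for the problem whose row sum vector is~$\bm r_\sigma$. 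It therefore remains only to argue that this problem shares its optimal objective values with the original one.

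First I would establish the permutation invariance of the feasible and attainable sets. The key observation is that the partition conjugate is insensitive to the order of its argument: since~$(\bm r_\sigma)^*_j=\sum_{i=1}^m \mathds{1}(r_{\sigma(i)}\geq j)=\sum_{i=1}^m \mathds{1}(r_i\geq j)=r^*_j$ for every~$j$, we have~$(\bm r_\sigma)^*=\bm r^*$. Consequently the compact reformulations~(\ref{basicmaj1}) and~(\ref{basicmaj2}) are literally unchanged, so the feasible sets~$\mathcal{X}^\ominus,\mathcal{X}^\oplus$, the attainable sets~$\mathcal{V}^\ominus,\mathcal{V}^\oplus$, and hence the sets of least (optimal) elements under~$\prec$ coincide for~$\bm r$ and~$\bm r_\sigma$. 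Equivalently, one may argue directly at the matrix level: permuting the rows of any feasible matrix for~$\bm r$ produces a feasible matrix for~$\bm r_\sigma$ with identical column sums, so the objective~$\bm c-\sum_{i=1}^m\bm a_{i\bigcdot}'$ (resp.~$\bm b+\sum_{i=1}^m\bm a_{i\bigcdot}'$) ranges over the same attainable set. Thus the output of the~$\sigma$-ordered run is optimal for the original problem as well.

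It remains to address the~$(0,1)$-matrix itself, whose~$i$th constructed row carries sum~$r_{\sigma(i)}$ rather than~$r_i$. Reindexing the rows by~$\sigma^{-1}$, i.e., placing the row with sum~$r_i$ in position~$i$, yields a matrix that meets the original constraints~$\|\bm a_{i\bigcdot}\|_1=r_i$. Since a row permutation leaves every column sum untouched, the associated objective value is unchanged and hence still optimal by the previous paragraph. The two algorithms are treated symmetrically, so this settles both cases.

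The only genuinely substantive point, and the one I would flag as the crux, is the permutation invariance of the problem data, encapsulated in~$(\bm r_\sigma)^*=\bm r^*$ together with the fact that the objectives depend on the matrix solely through its column sums. Everything else is bookkeeping: once the two runs are recognized as solving the identical optimization problem, optimality transfers verbatim from Theorem~\ref{algthm}, and Theorem~\ref{leastthm} further guarantees that the outputs are rearrangements of one another. If one instead wanted a self-contained combinatorial proof avoiding Theorem~\ref{algthm}, one could show that swapping two consecutive steps~$r_i,r_{i+1}$ preserves the nonincreasing rearrangement of the running column sums and then invoke that every permutation factors into adjacent transpositions; but this exchange lemma is more delicate than the reduction above and, given the results already in hand, redundant.
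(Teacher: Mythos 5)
Your proof is correct, and it takes a genuinely different route from the paper's. The paper argues locally on the algorithm's trajectory: it first treats the case where~$\sigma$ is an adjacent transposition, claiming that applying~$r_p$ then~$r_{p+1}$ or~$r_{p+1}$ then~$r_p$ leaves the nonincreasing rearrangement of the running vector unchanged (via a tie analysis that is asserted rather than spelled out), and then writes a general permutation as a composition of adjacent swaps --- precisely the ``exchange lemma'' you flag as delicate and decline to pursue. Your reduction instead observes that the~$\sigma$-ordered run is literally a run of Algorithm~\ref{MAlgo1} or~\ref{MAlgo2} on the input~$\bm r_\sigma$, invokes Theorem~\ref{algthm} for that input, and transfers optimality back through the permutation invariance of the problem data ($(\bm r_\sigma)^*=\bm r^*$, or equivalently, row permutations of feasible matrices preserve column sums, so~$\mathcal{V}^\ominus$ and~$\mathcal{V}^\oplus$ are unchanged), finishing with a row reindexing by~$\sigma^{-1}$ and an appeal to Theorem~\ref{leastthm} for the equivalence of the two outputs. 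There is no circularity, since the proof of Theorem~\ref{algthm} does not use Proposition~\ref{reorder}. Your argument is shorter, avoids the tie bookkeeping entirely, and is fully rigorous given the results already in hand; what the paper's local argument buys in exchange is a stronger invariant --- equivalence of the intermediate vectors~$\bm{\bar{c}}^{(k)}$ after every prefix of rows --- which is what directly underpins the staged commutativity identities such as~$\left(\bm c\ominus\bm r\right)\ominus\bm s=\left(\bm c\ominus\bm s\right)\ominus\bm r$ stated immediately after the proposition. For the proposition as stated, your reduction suffices.
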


We prove the above proposition in Appendix~\ref{PMPOMapdA8}. We further interpret it with the two operations~$\ominus$ and $\oplus$ as follows.

\begin{proposition}
  Consider $\bm {c}\in \mathbb{N}^n$, and~$\bm {r},\bm s \in \mathbb{N}^m$.
  \begin{enumerate}
    \item[$\mathit{1}$)] $\left(\bm c \oplus\bm{r}\right) \oplus \bm{s} = \left(\bm c \oplus\bm{s}\right) \oplus \bm{r}= \bm c \oplus [\bm r'~\bm s']'$.
    \item[$\mathit{2}$)] If $\bm {c}\prec^w [\bm r'~\bm s']'$, then \newline
    {$\left(\bm {c}\ominus\bm{r}\right)\ominus\bm{s}=\left(\bm {c}\ominus\bm{s}\right)\ominus\bm{r}=\bm {c}\ominus [\bm r'~\bm s']'.$}
  \end{enumerate}
\end{proposition}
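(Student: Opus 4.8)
The plan is to read each operation $\oplus\bm r$ and $\ominus\bm r$ as ``append $m$ rows carrying the row-sum profile $\bm r$ and run the valley-filling (Algorithm~\ref{MAlgo2}) or peak-shaving (Algorithm~\ref{MAlgo1}) procedure,'' and then to exploit that both algorithms process their rows strictly one at a time. Two observations set this up. First, $\bm c\oplus\bm r$ and $\bm c\ominus\bm r$ depend on the base/ceiling only through its nonincreasing rearrangement: for a permutation matrix $P$, substituting $\bm x=P\bm y$ sends each constraint ($\bm x\prec\bm r^*$, and $\bm x\le\bm c$ in the $\ominus$ case) to the same constraint on $\bm y$, while $(\bm c+\bm x)^\downarrow$ and $(\bm c-\bm x)^\downarrow$ are unchanged under $\bm c\mapsto P\bm c$; hence the canonical attainable sets, and with them their least elements, agree for $\bm c$ and $P\bm c$. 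Second, they depend on the row sums only through the partition conjugate, and $[\bm r'~\bm s']^*=\bm r^*+\bm s^*=[\bm s'~\bm r']^*$, so $\bm c\oplus[\bm r'~\bm s']'=\bm c\oplus[\bm s'~\bm r']'$ and likewise for $\ominus$ (this symmetry is exactly Proposition~\ref{reorder} applied to the permutation swapping the two blocks of rows). Consequently it suffices to prove the single identity $(\bm c\oplus\bm r)\oplus\bm s=\bm c\oplus[\bm r'~\bm s']'$, together with its peak-shaving counterpart; swapping $\bm r$ and $\bm s$ then yields the middle equalities.

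For part~1 I would fix any tie-breaking realization and run Algorithm~\ref{MAlgo2} on $(\bm c,[\bm r'~\bm s']')$, tracking the state $\bm{\bar b}^{(m)}$ reached once the rows with sums $r_1,\dots,r_m$ have been processed. Those iterations read only $r_1,\dots,r_m$ and evolve from $\bm c$ exactly as a complete run of Algorithm~\ref{MAlgo2} on $(\bm c,\bm r)$; by Theorem~\ref{algthm} the resulting $\bm{\bar b}^{(m)}$ is an optimal objective value of Problem~\eqref{basicmajmatrix2} for $(\bm c,\bm r)$, so Theorems~\ref{leastthm} and~\ref{leastthmcan} force $(\bm{\bar b}^{(m)})^\downarrow=\bm c\oplus\bm r$. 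The remaining iterations read only $s_1,\dots,s_m$ starting from $\bm{\bar b}^{(m)}$, hence coincide with a complete run of Algorithm~\ref{MAlgo2} on $(\bm{\bar b}^{(m)},\bm s)$, whose output canonicalizes to $\bm{\bar b}^{(m)}\oplus\bm s=(\bm{\bar b}^{(m)})^\downarrow\oplus\bm s=(\bm c\oplus\bm r)\oplus\bm s$ by the first observation. But this same final state is the output of the full run on $(\bm c,[\bm r'~\bm s']')$, which canonicalizes to $\bm c\oplus[\bm r'~\bm s']'$. Equating the two canonicalizations gives $(\bm c\oplus\bm r)\oplus\bm s=\bm c\oplus[\bm r'~\bm s']'$.

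Part~2 repeats this with Algorithm~\ref{MAlgo1} once all three operations are known to be defined. The hypothesis is the feasibility of the concatenated problem, i.e.\ $\bm c\prec^w[\bm r'~\bm s']^*=\bm r^*+\bm s^*$ by Proposition~\ref{feacond}. This already gives $\bm c\prec^w\bm r^*$: since $\bm r^*$ and $\bm s^*$ are nonincreasing, so is their sum, and its $k$-tail sum is $\sum_{i=k}^n r^*_i+\sum_{i=k}^n s^*_i\ge\sum_{i=k}^n r^*_i$, so the supermajorization tail inequalities descend from $\bm r^*+\bm s^*$ to $\bm r^*$; hence $\bm c\ominus\bm r$ exists and is the canonicalization of the first-block state $\bm{\bar c}^{(m)}$. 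By the feasibility characterization (Problem~\eqref{basicmajmatrix1} is feasible iff the peak-shaving output is elementwise nonnegative), the full run on $(\bm c,[\bm r'~\bm s']')$ terminates with a nonnegative vector; as the last block of iterations is identically a run of Algorithm~\ref{MAlgo1} on $(\bm{\bar c}^{(m)},\bm s)$, that sub-run also returns a nonnegative vector, so $(\bm c\ominus\bm r)\ominus\bm s$ is defined. The two-stage bookkeeping of part~1 then applies verbatim.

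The hard part is not any one computation but keeping these reductions honest. The operations $\oplus$ and $\ominus$ are tie-breaking-independent canonical least elements, whereas a single run returns a possibly unsorted optimal vector; so each time I move between a run's raw output and the corresponding operation I must combine the permutation-invariance observation with Theorems~\ref{leastthm} and~\ref{leastthmcan}. The second delicate point, unique to part~2, is carrying feasibility through the stages without an extra lemma: tail-sum monotonicity of $\prec^w$ under adding the nonincreasing vector $\bm s^*$ secures the first stage, and the nonnegativity characterization of the peak-shaving output secures the second.
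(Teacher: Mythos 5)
Your proposal is correct and follows exactly the route the paper intends: the paper states this proposition as an ``interpretation'' of Proposition~\ref{reorder} together with the row-by-row structure of Algorithms~\ref{MAlgo1} and~\ref{MAlgo2}, leaving the details implicit, and your two-stage decomposition of a run on $[\bm r'~\bm s']'$ into a run on $\bm r$ followed by a run on $\bm s$ from the intermediate state, combined with Theorems~\ref{algthm}, \ref{leastthm}, and~\ref{leastthmcan} and the identity $[\bm r'~\bm s']^*=\bm r^*+\bm s^*$, is precisely that argument made rigorous. Your added care about canonicalization versus raw algorithm output and about propagating feasibility through the two stages in part~2 fills genuine gaps the paper glosses over, and both steps check out.
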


%
Thus, we can respectively delineate the decompositions in the peak-shaving and valley-filling algorithms by
\begin{align*}
   \bm c\ominus \bm r=\bm c \ominus r_1\ominus \cdots \ominus r_m \text{ and }\bm b\oplus \bm r=\bm b \oplus r_1\oplus \cdots \oplus r_m.
\end{align*}
These decompositions give a fine-grained perspective on the inherent symmetry of the two iPOP problems. Now, let us use three toy examples to illustrate the decompositions.
\begin{figure}[t!]
  \mbox{\quad}\begin{tabular}{c|ccccc}
    \hline
    $\bm c$&$7$&$6$&$5$&$4$&$4$\\\hline\hline
    $\bm {\bar{c}}^{(0)}$&$7$&$6$&$5$&$4$&$4$\\\hline
    \end{tabular} $\Rightarrow$
   \begin{tabular}{c|ccccc}
    \hline
    $\bm {\bar{c}}^{(0)}$&$7$&$6$&$5$&$4$&$4$\\ \hline
    $r_1$&$1$&$1$&$1$&$1$&$0$\\\hline\hline
    $\bm {\bar{c}}^{(1)}$&$6$&$5$&$4$&$3$&$4$\\\hline
    \end{tabular}\smallskip   \\
    $\Rightarrow$\begin{tabular}{c|ccccc}
    \hline
    $\bm {\bar{c}}^{(1)}$&$6$&$5$&$4$&$3$&$4$ \\ \hline
    $r_2$&$1$&$1$&$1$&$0$&$1$
    \\\hline\hline
    $\bm {\bar{c}}^{(2)}$&$5$&$4$&$3$&$3$&$3$\\\hline
    \end{tabular} $\Rightarrow$
    \begin{tabular}{c|ccccc}
    \hline
    $\bm {\bar{c}}^{(2)}$&$5$&$4$&$3$&$3$&$3$ \\ \hline
    $r_3$&$1$&$1$&$1$&$0$&$0$
    \\\hline\hline
    $\bm {\bar{c}}^{(3)}$&$4$&$3$&$2$&$3$&$3$\\\hline
    \end{tabular}\smallskip  \smallskip  \\
    $\Rightarrow$
    \begin{tabular}{c|ccccc}
    \hline
    $\bm {\bar{c}}^{(3)}$&$4$&$3$&$2$&$3$&$3$ \\ \hline
    $r_4$&$1$&$0$&$0$&$0$&$0$
    \\\hline\hline
    $\bm {\bar{c}}^{(4)}$&$3$&$3$&$2$&$3$&$3$\\\hline
    \end{tabular} $\Rightarrow$\!
    \begin{tabular}{c|ccccc}
    \hline
    $\bm {\bar{c}}^{(4)}$&$3$&$3$&$2$&$3$&$3$\\ \hline
    $r_5$&$0$&$0$&$0$&$0$&$1$
    \\\hline\hline
    $\bm {\bar{c}}$&$3$&$3$&$2$&$3$&$2$\\\hline
    \end{tabular}
    \caption{The second loop of Algorithm~\ref{MAlgo1} in Example~\ref{Malgo1ex1}.}\label{figexMalgo1ex1}
    \end{figure}
\begin{figure}[t!]
   \mbox{\quad}\begin{tabular}{c|ccccc}
    \hline
    $\bm c$&$7$&$6$&$5$&$4$&$4$\\\hline\hline
    $\bm {\bar{c}}^{(0)}$&$7$&$6$&$5$&$4$&$4$\\\hline
    \end{tabular} $\Rightarrow$\!
   \begin{tabular}{c|ccccc}
    \hline
    $\bm {\bar{c}}^{(0)}$&$7$&$6$&$5$&$4$&$4$\\ \hline
    $r_5$&$1$&$0$&$0$&$0$&$0$\\\hline\hline
    $\bm {\bar{c}}^{(1)}$&$6$&$6$&$5$&$4$&$4$\\\hline
    \end{tabular}\smallskip   \\
    $\Rightarrow$ \begin{tabular}{c|ccccc}
    \hline
    $\bm {\bar{c}}^{(1)}$&$6$&$6$&$5$&$4$&$4$ \\ \hline
    $r_4$&$1$&$0$&$0$&$0$&$0$
    \\\hline\hline
    $\bm {\bar{c}}^{(2)}$&$5$&$6$&$5$&$4$&$4$\\\hline
    \end{tabular} $\Rightarrow$\!
    \begin{tabular}{c|ccccc}
    \hline
    $\bm {\bar{c}}^{(2)}$&$5$&$6$&$5$&$4$&$4$\\ \hline
    $r_3$&$1$&$1$&$1$&$0$&$0$
    \\\hline\hline
    $\bm {\bar{c}}^{(3)}$&$4$&$5$&$4$&$4$&$4$\\\hline
    \end{tabular} \smallskip   \\
    $\Rightarrow$ \begin{tabular}{c|ccccc}
    \hline
    $\bm {\bar{c}}^{(3)}$&$4$&$5$&$4$&$4$&$4$\\ \hline
    $r_2$&$1$&$1$&$1$&$1$&$0$
    \\\hline\hline
    $\bm {\bar{c}}^{(4)}$&$3$&$4$&$3$&$3$&$4$\\\hline
    \end{tabular} $\Rightarrow$\!
    \begin{tabular}{c|ccccc}
    \hline
    $\bm {\bar{c}}^{(4)}$&$3$&$4$&$3$&$3$&$4$\\ \hline
    $r_1$&$1$&$1$&$1$&$0$&$1$
    \\\hline\hline
    $\bm {\bar{c}}$&$2$&$3$&$2$&$3$&$3$\\\hline
    \end{tabular}\smallskip 
    \caption{The second loop of Algorithm~\ref{MAlgo1} in Example~\ref{Malgo1ex2}.}\label{figexMalgo1ex2}
    \end{figure}
\begin{example}\label{Malgo1ex1}
  Consider~$\bm c=[7~6~5~4~4]'$~and~$\bm r=[4~4~3~1~1]'$. We demonstrate Algorithm~\ref{MAlgo1} in Fig.~\ref{figexMalgo1ex1}. Moreover, the particular optimal objective value in Proposition~\ref{propcan} is~{$\bm c\ominus \bm r=[3~3~3~2~2]'$} and that in Proposition~\ref{rescanon} can be~$\bm c\ominus \bm r$, $[3~3~2~2~3]'$, $[3~2~3~2~3]'$, $[2~3~3~2~3]'$, or~$[2~3~2~3~3]'$.
  \end{example}
\begin{example}\label{Malgo1ex2}
  We illustrate Proposition~\ref{reorder} with the same~$(\bm c,\bm r)$ as in Example~\ref{Malgo1ex1}. Given a permutation reversing the original order of rows, we show the resulting peak-shaving process in Fig.~\ref{figexMalgo1ex2}. Note that the objective values generated in Example~\ref{Malgo1ex1} and this example are equivalent, agreeing with Proposition~\ref{reorder}.
\end{example}
\begin{example}\label{Malgo2ex1}
  Consider~$\bm b=[8~6~5~2~2]'$~and~$\bm r=[4~3~3~2~1]'$. We demonstrate Algorithm~\ref{MAlgo2} in Fig.~\ref{figexMalgo2ex1}. Moreover, the particular optimal objective value in Proposition~\ref{propcan} is~$\bm b\oplus \bm t=[8~8~7~7~6]'$ and that in Proposition~\ref{rescanon} can be~$[8~7~8~6~7]'$ or~$[8~8~7~6~7]'$.
 \end{example}

\begin{figure}[t!]
  \mbox{\quad}\begin{tabular}{c|ccccc}
    \hline
    $\bm b$&$8$&$6$&$5$&$2$&$2$\\\hline\hline
    $\bm {\bar{b}}^{(0)}$&$8$&$6$&$5$&$2$&$2$\\\hline
    \end{tabular} $\Rightarrow$\!
   \begin{tabular}{c|ccccc}
    \hline
    $\bm {\bar{b}}^{(0)}$&$8$&$6$&$5$&$2$&$2$\\ \hline
    $r_1$&$0$&$1$&$1$&$1$&$1$\\\hline\hline
    $\bm {\bar{b}}^{(1)}$&$8$&$7$&$6$&$3$&$3$\\\hline
    \end{tabular}\smallskip \\
    $\Rightarrow$
    \begin{tabular}{c|ccccc}
    \hline
    $\bm {\bar{b}}^{(1)}$&$8$&$7$&$6$&$3$&$3$ \\ \hline
    $r_2$&$0$&$0$&$1$&$1$&$1$
    \\\hline\hline
    $\bm {\bar{b}}^{(2)}$&$8$&$7$&$7$&$4$&$4$\\\hline
    \end{tabular} $\Rightarrow$\!
    \begin{tabular}{c|ccccc}
    \hline
    $\bm {\bar{b}}^{(2)}$&$8$&$7$&$7$&$4$&$4$ \\ \hline
    $r_3$&$0$&$0$&$1$&$1$&$1$
    \\\hline\hline
    $\bm {\bar{b}}^{(3)}$&$8$&$7$&$8$&$5$&$5$\\\hline
    \end{tabular} \smallskip   \\
    $\Rightarrow$
    \begin{tabular}{c|ccccc}
    \hline
    $\bm {\bar{b}}^{(3)}$&$8$&$7$&$8$&$5$&$5$ \\ \hline
    $r_4$&$0$&$0$&$0$&$1$&$1$
    \\\hline\hline
    $\bm {\bar{b}}^{(4)}$&$8$&$7$&$8$&$6$&$6$\\\hline
    \end{tabular} $\Rightarrow$\!
    \begin{tabular}{c|ccccc}
    \hline
    $\bm {\bar{b}}^{(4)}$&$8$&$7$&$8$&$6$&$6$\\ \hline
    $r_5$&$0$&$0$&$0$&$1$&$0$
    \\\hline\hline
    $\bm {\bar{b}}$&$8$&$7$&$8$&$7$&$6$\\\hline
    \end{tabular} 
    \caption{The second loop of Algorithm~\ref{MAlgo2} in Example~\ref{Malgo2ex1}.}\label{figexMalgo2ex1}
    \end{figure}

As a by-product, the two algorithms uncover that the extra elemenwise inequality in Problem~\eqref{basicmaj1} does not complicate the problem solving. Given this fact, we propose more generalized iPOP problems that can also be solved by our peak-shaving or valleying filling approach in Appendix~\ref{GiPOP}. Furthermore, our results regarding these problems extend several existing results, e.g., Theorem~1 in~\cite{kim1998simple} and Lemma~1 in~\cite{mo2022Tensor}.

\begin{figure*}[t]
 \centering
  \subfloat[Comparison (row number).]{\label{MAlgo1a}\includegraphics[width=0.96\columnwidth, height=0.64\columnwidth]{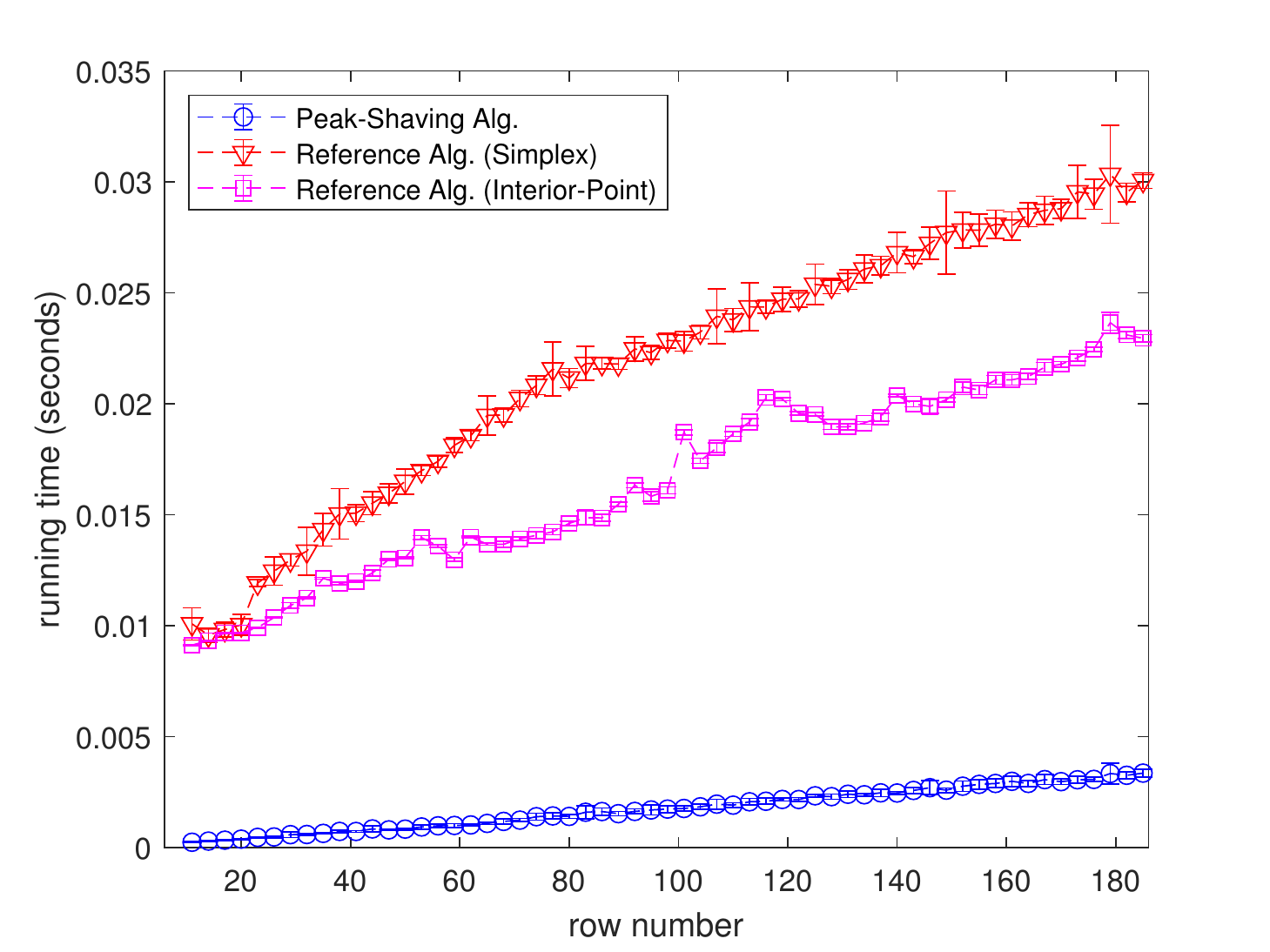}}
  \subfloat[Fitting (row number).]{\label{MAlgo1b}\includegraphics[width=0.96\columnwidth, height=0.64\columnwidth]{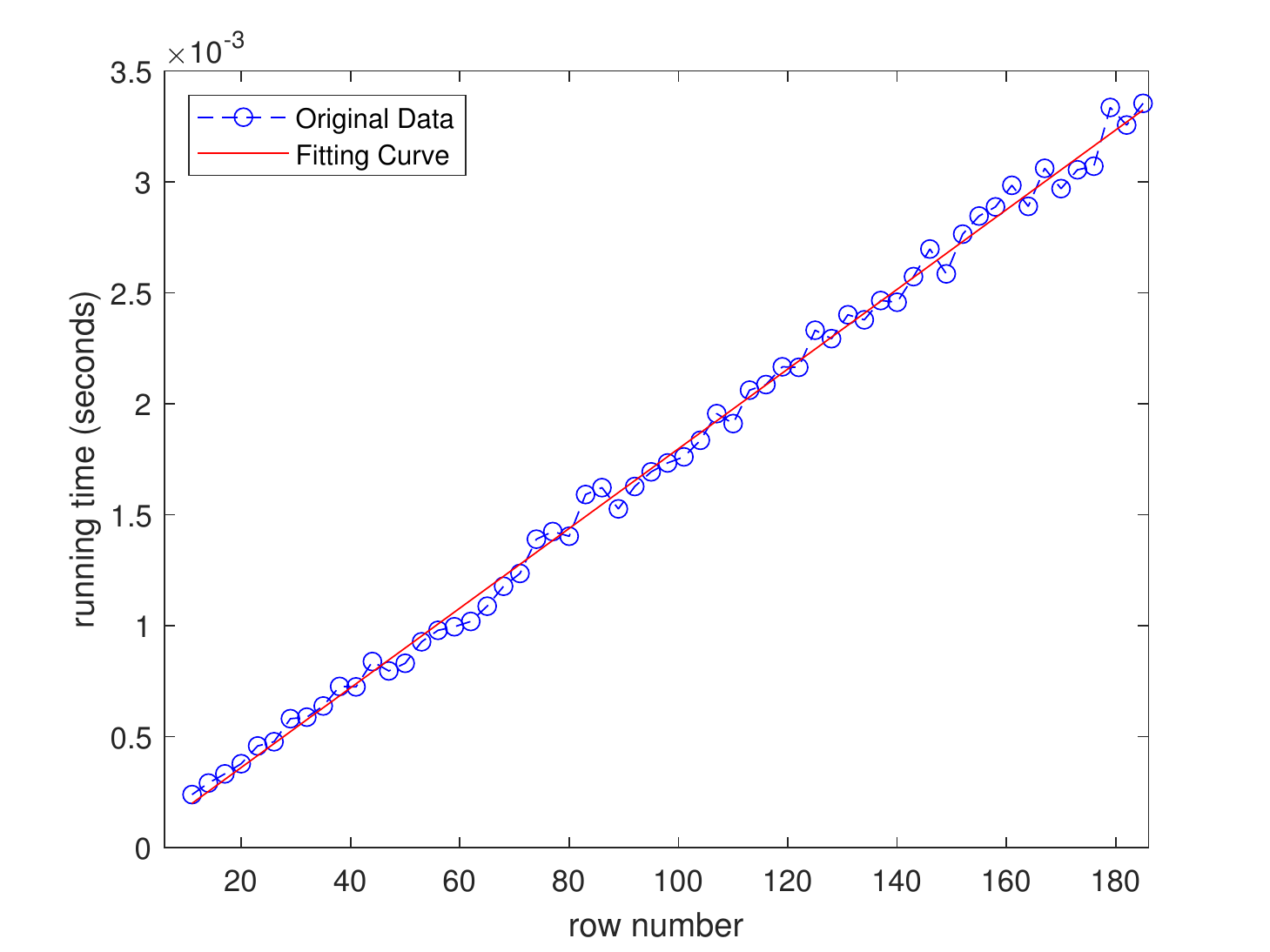}}\\
  \subfloat[Comparison (column number).]{\label{MAlgo1c}\includegraphics[width=0.96\columnwidth, height=0.64\columnwidth]{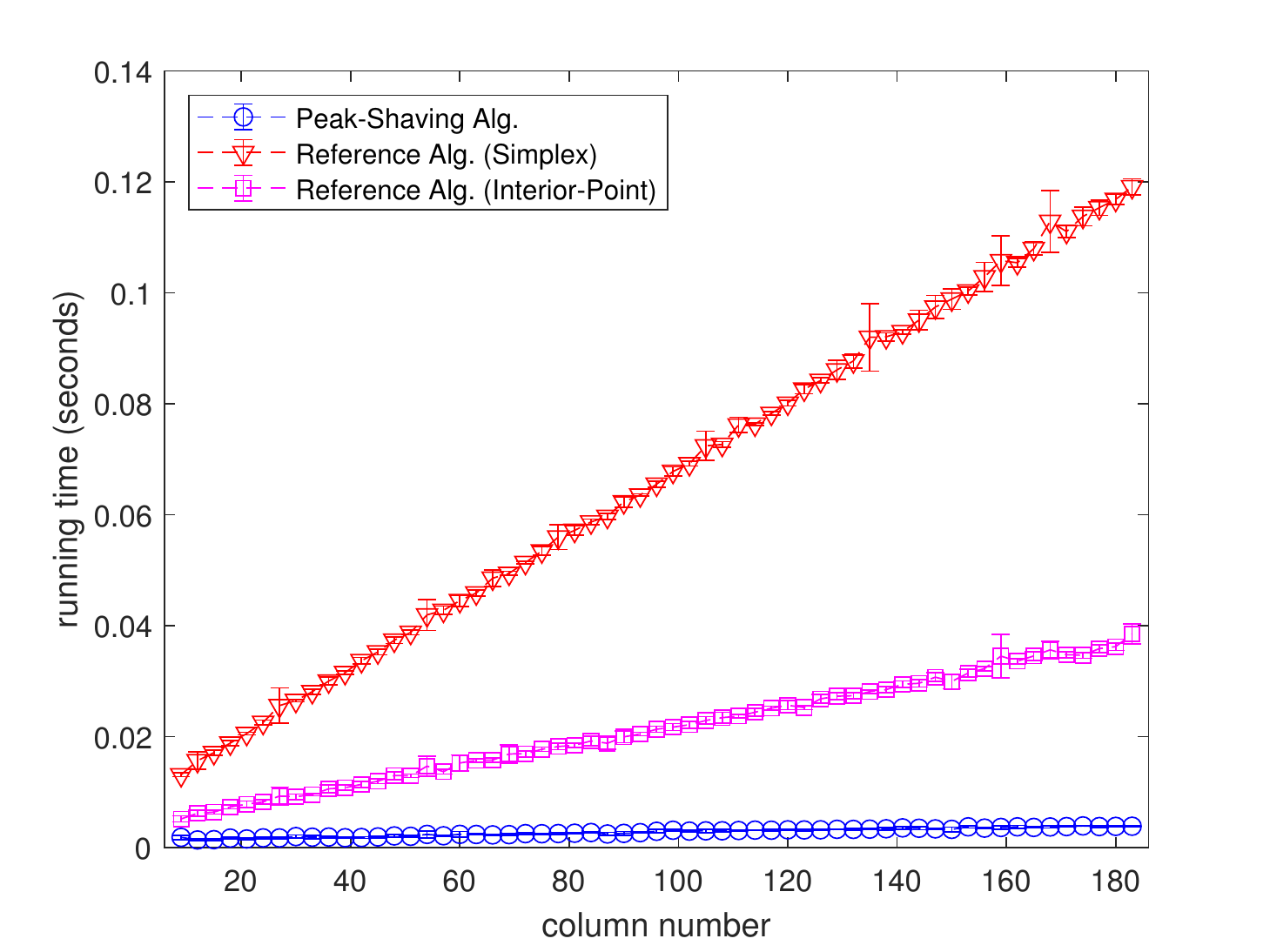}}
  \subfloat[Fitting (column number).]{\label{MAlgo1d}\includegraphics[width=0.96\columnwidth, height=0.64\columnwidth]{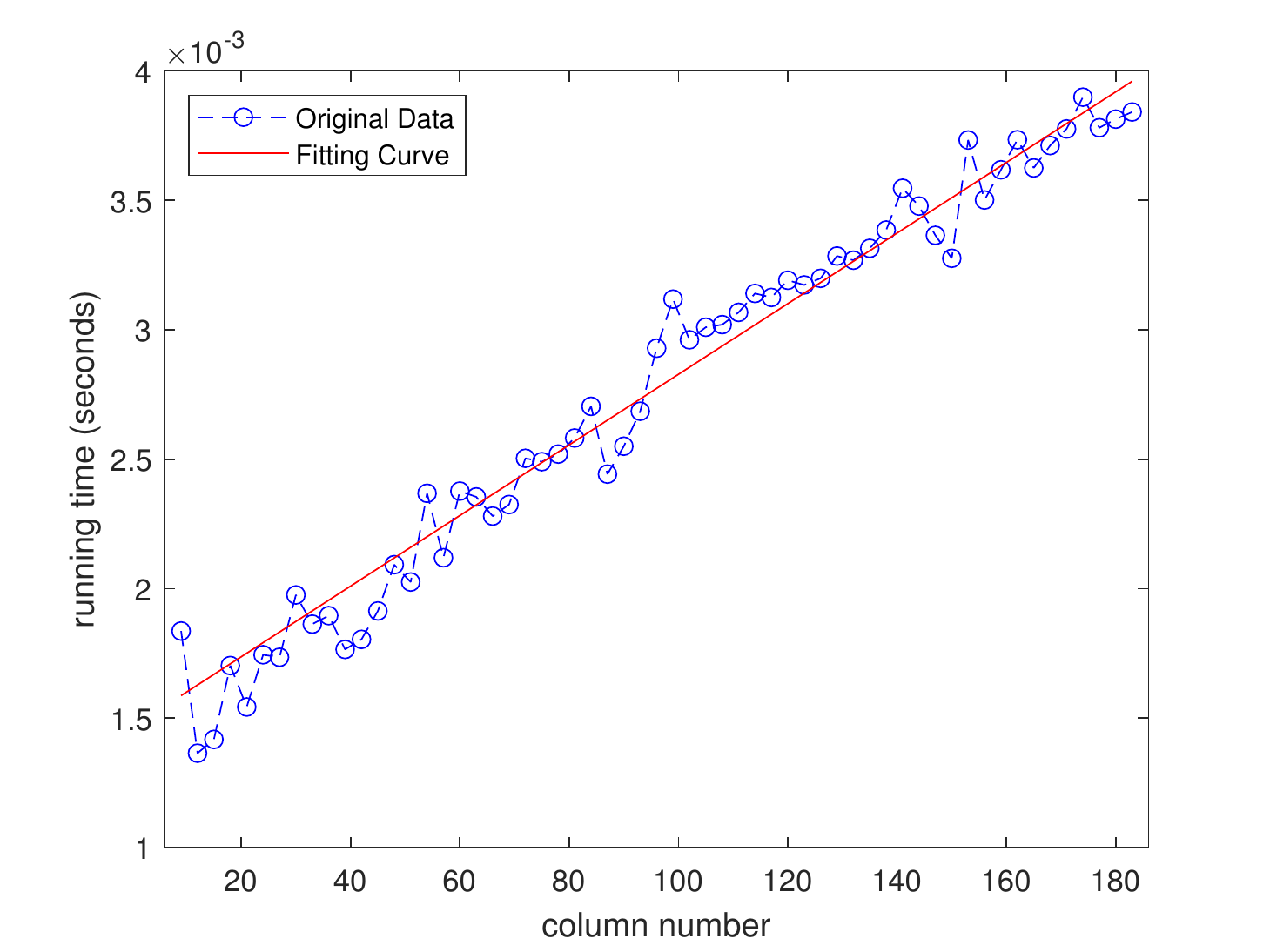}}
  \caption{Numerical simulation regarding Algorithm~\ref{MAlgo1}~--~Peak-shaving.}\label{MAlgo1all}
\end{figure*}

\section{Comparative Analysis and Generalization}\label{sec_comparison}
In this section, we shall show the efficiency of our ``peak-shaving'' or ``valley-filling'' approach in comparison with the traditional order-preserving scalarization method for the proposed iPOP problems. To this end, we also present natural extensions of our iPOP problems, further verifying the strength of our solution method. 

\begin{figure*}[t]
\centering
  \subfloat[Comparison (row number).]{\label{MAlgo2a}\includegraphics[width=0.96\columnwidth, height=0.66\columnwidth]{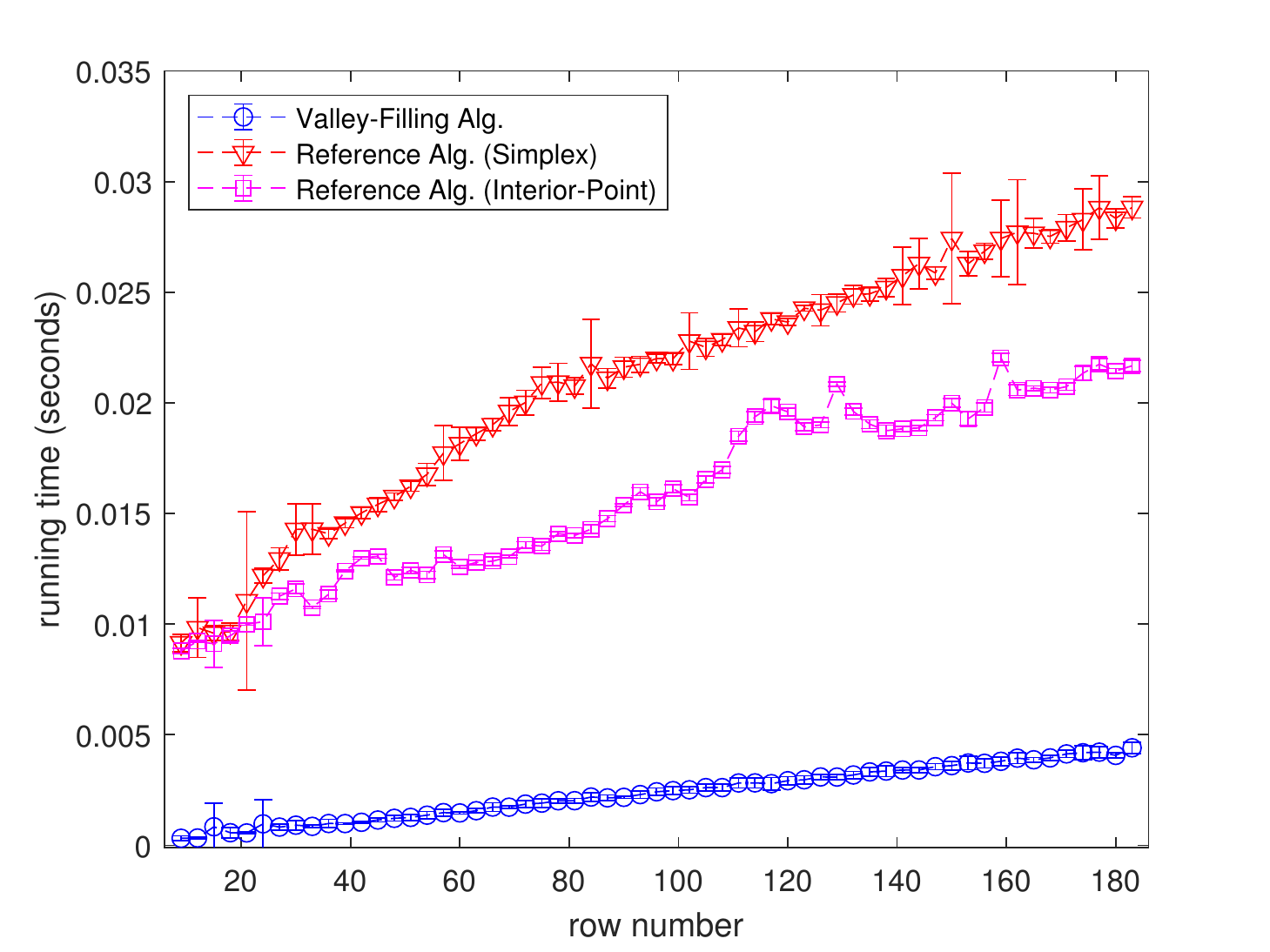}}
  \subfloat[Fitting (row number).]{\label{MAlgo2b}\includegraphics[width=0.96\columnwidth, height=0.66\columnwidth]{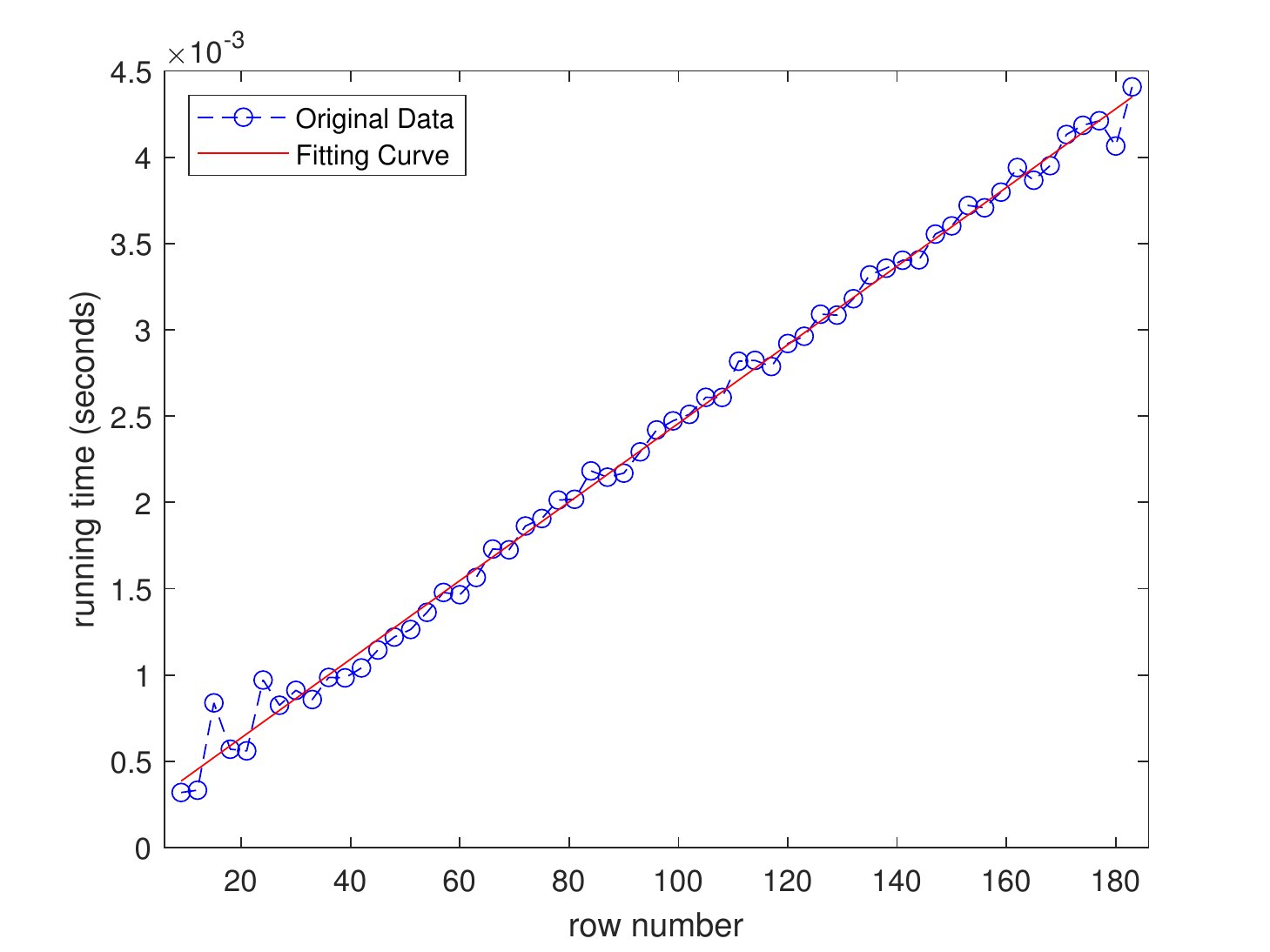}}\\
  \subfloat[Comparison (column number).]{\label{MAlgo2c}\includegraphics[width=0.96\columnwidth, height=0.66\columnwidth]{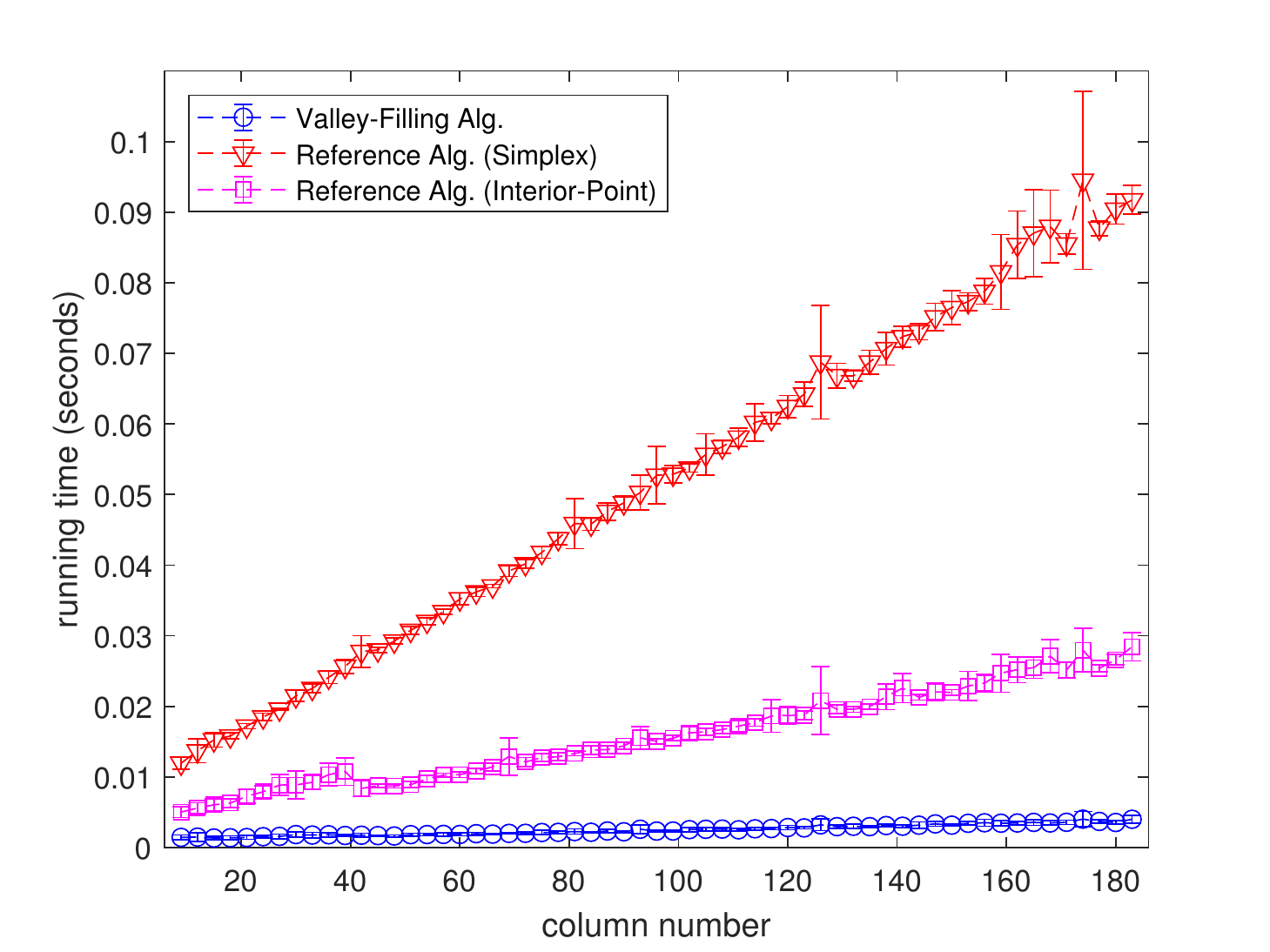}}
  \subfloat[Fitting (column number).]{\label{MAlgo2d}\includegraphics[width=0.96\columnwidth, height=0.66\columnwidth]{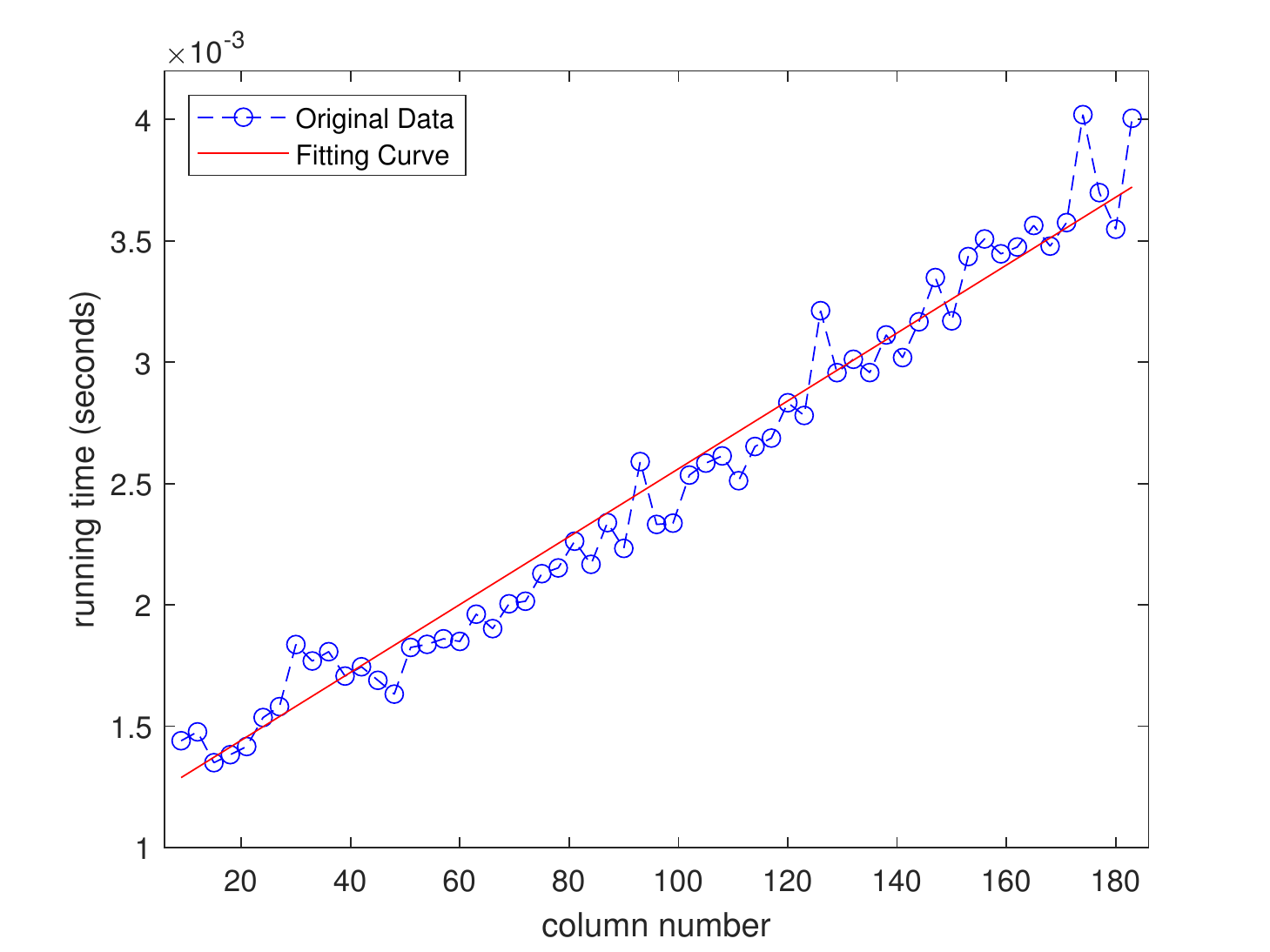}}
  \caption{Numerical simulation regarding Algorithm~\ref{MAlgo2}~--~Valley-filling.}\label{MAlgo2all}
\end{figure*}

\subsection{Order-Preserving Scalarization}
As mentioned in the Introduction, the direct use of B\&B may not work well for our iPOP problems. Thus, we take the order-preserving scalarization for POP as a benchmark. Real-valued functions are said to be order-preserving if they convert partially ordered elements into real numbers without changing the original orders. For the majorization order, such functions are called Schur-convex functions.

Although the problems after scalarization are integer programs, we can exactly solve them by a sequence of associated linear programs~\cite{meyer1977class}. The reason lies in that the objective function is separable convex~(i.e., the sum of convex functions over independent variables) and the constraints are linear, accompanied by a totally unimodular constraint matrix~\cite[Section~19]{schrijver1998theory}. Specifically, we first introduce an auxiliary variable~$\bm p\in \mathbb{N}^n$ by~$\bm p+\sum_{i=1}^{m}\bm a_{i\bigcdot}'=\bm c$ and apply the sum-of-squares function that is strictly Schur-convex to the optimization objective. Then, we obtain a tractable integer program:

\noindent \begin{equation}\label{basicmajintp}
\begin{split}
\min_{\bm p, A} &~~~~\sum\nolimits_{j=1}^{n}p_j^2\ \\
\text{subject to~}& ~~~~ A\in \{0,1\}^{m\times n};\;\bm p\in \mathbb{N}^n;\; \\
& ~~~~\|\bm a_{i\bigcdot}\|_1=r_i\text{, for all }i\in\underline{m};~~\\
& ~~~~\|\bm a_{\bigcdot j}\|_1+p_j=c_j\text{, for all }j\in \underline{n}.
\end{split}
\end{equation}
How to solve Problem~(\ref{basicmajintp}) is out of our scope, and interested readers can refer to~\cite{meyer1977class}. Note that it is harder to solve Problem~(\ref{basicmajintp}) than a linear program with~$(mn+n)$ variables.

\subsection{Comparison and Simulation}
Next, we detailedly compare our specialized approach with the common order-preserving scalarization method. First, our approach reveals the essential uniqueness of all optimal objective values of each iPOP problem (see Theorem~\ref{leastthm}). However, we can hardly draw such a conclusion from the scalarization method. Second, Algorithms~\ref{MAlgo1} and~\ref{MAlgo2} agree with the rules of thumb ``peak shaving'' and ``valley filling,'' respectively. The two iPOP problems require~$\bm c-\bm {x}$ or~$\bm b+\bm {x}$ be as flat as possible by selecting~$\bm {x}$. Our approach uncovers such a fact in both the final status of~$\bm c-\bm {x}$ or~$\bm b+\bm {x}$ and constructing each row of the associated optimal matrices. Third, our approach indicates the minor role of the elementwise inequality constraint~$\bm {x}\leq \bm c$ in finding an optimal solution, as shown in Algorithm~\ref{MAlgo1}. This fact justifies the symmetry of Problem~(\ref{basicmajmatrix1}) and Problem~(\ref{basicmajmatrix2}); moreover, it motivates us to study more generalized iPOP problems in Section~\ref{GiPOP}, extending certain results in~\cite{mo2022Tensor} and~\cite{kim1998simple}. Again, it is hard to make the symmetry characteristic and generalization clear by using the scalarization method. Overall, from a theoretical perspective, we are in favor of our approach to Problems~(\ref{basicmajmatrix1}) and~(\ref{basicmajmatrix2}), since the former reveals the essential properties of the proposed iPOP problems and agrees with intuitive concepts of ``peak shaving'' and ``valley filling.''

From the computational perspective, it is easier to implement Algorithms~\ref{MAlgo1} and~\ref{MAlgo2} than the scalarization approach since our algorithms involve simple operations like sorting and subtraction/addition. This makes us suffer less from the curse of dimensionality. In addition, the scalarization approach involves real numbers, but our approach only deals with integers, avoiding round-off errors. Recall that Algorithms~\ref{MAlgo1} and~\ref{MAlgo2} both have the linear time complexity~$\mathcal{O}(mn)$. In contrast to this, the scalarization approach is much harder than solving a linear program with~$(mn+n)$ variables, and there is no linear programming solver that possesses a time complexity linear to the number of variables.

Bearing these theoretical strengths in mind, we further corroborate the computational efficacy by numerical simulations. Specifically, we compare the running time of Algorithms~\ref{MAlgo1} and~\ref{MAlgo2} with that of characterizing the feasibility of the relaxed versions of the scalarized problems, respectively, via MATLAB. We call the latter the reference algorithm, which is just one of the many steps in the scalarization method. Given setups~$(\bm c, \bm r)$ and~$(\bm b, \bm r)$, note that the column number~$n$ refers to the length of~$\bm b$ or~$\bm c$, while the row number~$m$ refers to that of~$\bm r$ in the figures. The remaining data are randomly generated. 

We display the simulation results on Algorithm~\ref{MAlgo1} in Fig.~\ref{MAlgo1all}. In Fig.~\ref{MAlgo1a} and~Fig.~\ref{MAlgo1c}, we compare the peak-shaving algorithm with the reference algorithm. The two figures indicate that the former is much more efficient than the latter under the simplex and interior-point methods for different row~or~column numbers, because the former has shorter average running times and standard deviations. Moreover, by using a linear function to fit the relationship between the row number and the running time, we observe in Fig.~\ref{MAlgo1b} that we can use a linear function to properly fit the data labeled by circles in Fig.~\ref{MAlgo1a}. This observation means that the running time of Algorithm~\ref{MAlgo1} grows linearly as the row number increases. Similarly, we show in Fig.~\ref{MAlgo1d} that we can use a linear model to fit the data labeled by circles in Fig.~\ref{MAlgo1c}, meaning that the running time of Algorithm~\ref{MAlgo1} grows linearly as the column number increases. Such phenomena are consistent with Theorem~\ref{algthm}.

We present the simulation results of Algorithm~\ref{MAlgo2} in Fig.~\ref{MAlgo2all}. In Fig.~\ref{MAlgo2a} and~Fig.~\ref{MAlgo2c}, we compare the valley-filling algorithm with the reference algorithm. The two figures show that the former is much more efficient than the latter under both the simplex and interior-point methods, for different row~or~column numbers, because the former has shorter average running times and standard deviations. In Fig.~\ref{MAlgo2b}, we see that the data labeled by circles in Fig.~\ref{MAlgo2a} can be fitted by a linear function. This observation means that the running time of Algorithm~\ref{MAlgo2} grows linearly as the row number increases. Similarly, in Fig.~\ref{MAlgo2d}, we see that the data labeled by circles in Fig.~\ref{MAlgo2c} can be fitted by a linear function. This means that the running time of Algorithm~\ref{MAlgo2} grows linearly as the column number increases. Such phenomena are also consistent with Theorem~\ref{algthm}.

\subsection{Generalized iPOP problems} \label{GiPOP}
Given the inherent symmetry revealed by our approach, it is natural to examine two more generalized iPOP problems:
\begin{equation}
\begin{split}\label{basicmajmatrix3}
\underset{A}{\text{~~minimize}_{\prec}} &~~~~\bm d-\sum_{i=1}^{m}\bm a_{i\bigcdot}'\ \\
\text{subject to~}&~~~~A\in \{0,1\}^{m\times n};\;\|\bm a_{i\bigcdot}\|_1=r_i, \forall i\in\underline{m};\\
&~~~~ \|\bm a_{\bigcdot j}\|_1\leq c_j, \forall j\in \underline{n}. 
\end{split}
\end{equation}


\noindent
\begin{equation}
\begin{split}\label{basicmajmatrix4}
\underset{A}{\text{~~minimize}_{\prec}} &~~~~\bm b+\sum_{i=1}^{m}\bm a_{i\bigcdot}'\ \\
\text{subject to~}&~~~~A\in \{0,1\}^{m\times n};\;\|\bm a_{i\bigcdot}\|_1=r_i, \forall i\in\underline{m};\\
&~~~~ \|\bm a_{\bigcdot j}\|_1\leq c_j, \forall j\in \underline{n}. 
\end{split}
\end{equation}
where~$\bm a,\bm b, \bm c \in \mathbb{N}^n$ and~$\bm r\in\mathbb{N}^m$. Clearly, Problem~(\ref{basicmajmatrix3}) extends Problem~(\ref{basicmajmatrix1}) by allowing~$\bm c\neq \bm d$ , and Problem~(\ref{basicmajmatrix4}) augments Problem~(\ref{basicmajmatrix2}) by adding inequality constraints~$\|\bm a_{\bigcdot j}\|_1\leq c_j$, for all~$j\in \underline{n}$. Meanwhile, Problem~(\ref{basicmajmatrix3}) and Problem~(\ref{basicmajmatrix4}) coincide when~\mbox{$\bm b=-\bm d$}, because~$\bm x\prec \bm y$ if and only if~$-\bm x\prec -\bm y$.

According to the two developed optimal matrix-completion algorithms, we find out that the elementwise inequality constraint does not complicate the problem solving. Thus, we can conclude that our peak-shaving or valley-filling approach can also help solve Problem~(\ref{basicmajmatrix3}) and Problem~(\ref{basicmajmatrix4}). The corresponding algorithms almost remain the same with Algorithm~\ref{MAlgo1} and Algorithm~\ref{MAlgo2}. Specifically, we still follow the rule of thumb ``peak shaving'' or ``valley filling'' to sequentially construct the rows. The difference lies in that if the column sum of the partial~$(0,1)$-matrix reaches its bound specified by an element in~$c$, then we will not assign one to the intersection between this column and every future row such that the elementwise inequality constraint is satisfied. Moreover, We can use similar techniques to attain an optimal objective value if we change the objective from ``$\text{minimize}_{\prec}$'' to ``$\text{maximize}_{\prec}$'' for the maximal elements of the attainable objective values.


Furthermore, Problem~(\ref{basicmajmatrix4}) with~$\bm b=\bm 0$ and~$\bm r=\bm 1$ is a discrete version of the problem in Theorem~2 of~\cite{kim1998simple}. This fact will be more clear if we rewrite Problem~(\ref{basicmajmatrix4}) in its equivalent compact formulation. 
Meanwhile, when our objective is to maximize~$\bm x$ in terms of majorization, the problem relates to Theorem~1 in~\cite{kim1998simple}. From this perspective, our results also generalize the study of~\cite{kim1998simple} on load assignment in crossbar switches with output queuing. Last but not least, our results on Problem~(\ref{basicmajmatrix4}) cover Lemma~1 in~\cite{mo2022Tensor}, where we restrict~$\bm r=\bm 1$. So far, we have shown that our iPOP problems extended existing studies by allowing general base, ceiling, and row sum vectors. These extensions also verify the advantages of our approach.
\section{Conclusions}\label{sec_conclusion}
In this paper, we propose and study two optimal~$(0,1)$-matrix completion problems with majorization ordered objectives, whose compact formulations are optimization problems over majorization-ordered lattices. We show their applications in electric vehicle charging, portfolio optimization, and secure data storage. Interestingly, we prove that the optimal objective values of each proposed iPOP problem are essentially unique in the sense that they share the same nonincreasing rearrangement. We also identify two particular optimal objective values characterized  by the order of the elements in the objective value or the corresponding column sum vector. Furthermore, we respectively propose a peak-shaving and valley-filling approach to obtain all optimal objective values and the associated optimal solutions for the two iPOP problems. Theoretical analysis and numerical simulations corroborate the strengths of our approach compared to the standard order-preservation scalarization method for POP.

In the future, a follow-up work is to advance the unstructured matrix completion with POP in this work to structured ones of staircase, banded, or even arbitrary patterns~\cite{fulkerson1960zero,chen1966realization,anstee1982properties,chen2016constrained,mo2020market,mo2022Tensor}. Also, we shall apply POP to more application scenarios that bring us more insightful analysis and efficient solution approaches. Moreover, it is interesting to study POP in an online fashion of real-time uncertainty.

\appendix\label{PMPOMapdA}
\subsection{Useful Definitions and Lemmas} \label{PMPOMapdA0}
The following lemma comes from~\cite[Section~5.A]{olkin2016inequalities} and will be useful in the proofs of our key results.
\begin{lemma}
 For $\bm x, \bm y \in \mathbb{R}^n$, the following are equivalent:
      \begin{enumerate}
        \item[$\mathit{1}$)] $\bm x \prec^w \bm y$;
        \item[$\mathit{2}$)] There exists~$\bm u\in \mathbb{R}^n$ with $\bm u \prec \bm y$ and $\bm u \leq \bm x$;
        \item[$\mathit{3}$)] There exists~$\bm v\in \mathbb{R}^n$ with $\bm x \prec \bm v$ and~$\bm y \leq \bm v$.
      \end{enumerate}
 \label{realfeacondcombineprecminusmaj}
\end{lemma}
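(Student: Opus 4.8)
The plan is to prove the three statements equivalent by establishing the two ``easy'' implications~$\mathit{2})\Rightarrow\mathit{1})$ and~$\mathit{3})\Rightarrow\mathit{1})$ together, and then the two ``constructive'' implications~$\mathit{1})\Rightarrow\mathit{3})$ and~$\mathit{1})\Rightarrow\mathit{2})$, which supply explicit witnesses~$\bm v$ and~$\bm u$. The single tool behind the easy direction is the monotonicity of tail partial sums: for any~$\bm a,\bm b\in\mathbb{R}^n$ with~$\bm a\le \bm b$, the sum of the~$j$ smallest entries satisfies~$\sum_{i=n-j+1}^n a_{[i]}=\min_{|S|=j}\sum_{i\in S}a_i\le \sum_{i=n-j+1}^n b_{[i]}$, since each such minimum is nondecreasing in every coordinate. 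Equivalently,~$\bm a\le\bm b$ implies~$\sum_{i=k}^n a_{[i]}\le \sum_{i=k}^n b_{[i]}$ for all~$k$. Throughout I set~$\delta=\sum_i x_i-\sum_i y_i$, which is nonnegative because~$\bm x\prec^w\bm y$ at~$k=1$ gives~$\sum_i x_{[i]}\ge\sum_i y_{[i]}$.

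For~$\mathit{2})\Rightarrow\mathit{1})$, I would chain two facts: from~$\bm u\prec\bm y$ (hence~$\bm u\prec^w\bm y$) I get~$\sum_{i=k}^n u_{[i]}\ge\sum_{i=k}^n y_{[i]}$, and from~$\bm u\le\bm x$ the monotonicity fact gives~$\sum_{i=k}^n u_{[i]}\le\sum_{i=k}^n x_{[i]}$; combining yields~$\sum_{i=k}^n x_{[i]}\ge\sum_{i=k}^n y_{[i]}$ for all~$k$, i.e.~$\bm x\prec^w\bm y$. Symmetrically, for~$\mathit{3})\Rightarrow\mathit{1})$, from~$\bm x\prec\bm v$ I get~$\sum_{i=k}^n x_{[i]}\ge\sum_{i=k}^n v_{[i]}$, and from~$\bm y\le\bm v$ the monotonicity fact gives~$\sum_{i=k}^n y_{[i]}\le\sum_{i=k}^n v_{[i]}$; chaining through~$\bm v$ gives~$\sum_{i=k}^n x_{[i]}\ge\sum_{i=k}^n y_{[i]}$, i.e.~$\bm x\prec^w\bm y$.

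The implication~$\mathit{1})\Rightarrow\mathit{3})$ turns out to be the simpler of the two constructions: I would place the entire excess mass~$\delta$ on a largest entry of~$\bm y$. Letting~$j^\ast$ satisfy~$y_{j^\ast}=\max_i y_i$, set~$\bm v=\bm y+\delta\,\bm{e}_{j^\ast}$. Clearly~$\bm y\le\bm v$. Since~$y_{j^\ast}$ remains the largest entry after the increase, the sorted head sums satisfy~$\sum_{i=1}^k v_{[i]}=\delta+\sum_{i=1}^k y_{[i]}$ for every~$k\ge1$, while~$\sum_{i=1}^n v_{[i]}=\delta+\sum_i y_i=\sum_i x_i$. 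The condition~$\bm x\prec^w\bm y$ is exactly~$\sum_{i=1}^k x_{[i]}\le \delta+\sum_{i=1}^k y_{[i]}$ for all~$k$ (rewriting the tail inequalities as head inequalities), so~$\sum_{i=1}^k x_{[i]}\le\sum_{i=1}^k v_{[i]}$ with equality at~$k=n$, which is precisely~$\bm x\prec\bm v$.

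The hard part will be~$\mathit{1})\Rightarrow\mathit{2})$, because~$\bm u$ must be \emph{less} spread than~$\bm y$, so one cannot simply subtract~$\delta$ from a single coordinate; the reduction must be distributed across the top to flatten~$\bm x$. I would cap~$\bm x$ from above: after sorting (solving for~$\bm x=\bm x^\downarrow$,~$\bm y=\bm y^\downarrow$ and applying the common permutation back at the end, which preserves both~$\bm u\le\bm x$ and~$\bm u\prec\bm y$), choose~$c$ with~$\sum_i\min(x_i,c)=\sum_i y_i$, which exists as the left side is continuous and nondecreasing in~$c$ with supremum~$\sum_i x_i\ge\sum_i y_i$. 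Set~$u_i=\min(x_i,c)$, so~$\bm u\le\bm x$,~$\bm u$ is nonincreasing, and~$\sum_i u_i=\sum_i y_i$. Writing~$p$ for the number of capped coordinates (those with~$u_i=c$), it remains to verify the tail inequalities~$\sum_{i=k}^n u_i\ge\sum_{i=k}^n y_i$ (equivalent to~$\bm u\prec\bm y$ given equal sums). For~$k>p$ these read~$\sum_{i=k}^n x_i\ge\sum_{i=k}^n y_i$, true by~$\bm x\prec^w\bm y$. For~$k\le p$ they reduce, after substituting~$\sum_{i=k}^n u_i=\sum_i y_i-(k-1)c$, to~$\sum_{i=1}^{k-1}y_i\ge(k-1)c$, i.e.\ the mean of the~$k-1$ largest entries of~$\bm y$ is at least~$c$. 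This is where~$\bm x\prec^w\bm y$ is used decisively: evaluating at~$k=p+1$ and using~$u_i=x_i$ for~$i>p$ gives~$pc=\sum_i y_i-\sum_{i=p+1}^n x_i\le\sum_{i=1}^p y_i$, so the top-$p$ mean of~$\bm y$ is at least~$c$; since top-$j$ means are nonincreasing in~$j$ for a nonincreasing vector, all top-$(k-1)$ means with~$k\le p$ are also at least~$c$, completing the verification. I expect this averaging step to be the crux of the whole lemma.
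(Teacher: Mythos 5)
The paper offers no proof of this lemma at all—it is imported verbatim from~\cite[Section~5.A]{olkin2016inequalities}—so there is no in-paper argument to compare against; I have therefore checked your proof on its own terms, and it is correct and self-contained. The two reverse implications via monotonicity of the sum of the $j$ smallest entries are sound; the witness $\bm v=\bm y+\delta\,\bm{e_{j^\ast}}$ for $\mathit{1})\Rightarrow\mathit{3})$ works exactly as you say, since adding the excess to a maximal entry shifts every sorted head sum by $\delta$; and the capping construction $u_i=\min(x_i,c)$ for $\mathit{1})\Rightarrow\mathit{2})$ is the classical water-filling argument, with the crux correctly identified: the single supermajorization inequality at $k=p+1$ yields $pc\le\sum_{i=1}^{p}y_{[i]}$, and the monotonicity of the top-$j$ means of a nonincreasing vector propagates this bound to all $j\le p$. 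Two cosmetic points, neither of which affects correctness: there need not be a ``common'' permutation sorting both $\bm x$ and $\bm y$, but only the permutation sorting $\bm x$ has to be undone on $\bm u$ (majorization of $\bm u$ against $\bm y$ is permutation-invariant); and one should pick $c\le x_{[1]}$, which is always possible because $\sum_i\min(x_i,x_{[1]})=\sum_i x_i\ge\sum_i y_i$, or else observe that $p=0$ forces $\bm u=\bm x$ with $\delta=0$, where the conclusion is immediate from $\bm x\prec^w\bm y$ together with equal totals.
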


Next, we define an operation that will be helpful in proving a critical lemma.
\begin{definition}
  An adjacent swap of a vector~$\bm x$ is an operation which exchanges two adjacent elements of~$\bm x$. An adjacent swap is said to be positive if it brings the bigger element of the two exchanged elements forward; otherwise, it is negative.
\end{definition}

The nonincreasing rearrangement of a vector can be obtained from the vector by a finite sequence of positive adjacent swaps, while the nondecreasing rearrangement can be obtained from it via a sequence of negative adjacent swaps. Such swap sequences for both cases may not necessarily be unique. With such observations, we are ready to state the following lemmas.

\begin{lemma}
    For $\bm x,\bm y\in \mathbb{R}^n$, if $\bm y\leq \bm x$, then $\bm y^\downarrow \leq \bm x^\downarrow$, and equivalently $\bm y^\uparrow\leq \bm x^\uparrow$. \label{comadjswap}
\end{lemma}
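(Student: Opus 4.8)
The plan is to exploit the adjacent-swap machinery just introduced, reducing the comparison of two sorted vectors to an elementwise inequality that is preserved step by step as one of the vectors is sorted.

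First I would normalize the problem so that $\bm x$ is already sorted. Given $\bm y \le \bm x$, choose a permutation $\sigma$ that sends $\bm x$ to $\bm x^\downarrow$ and apply $\sigma$ to both vectors simultaneously. A common permutation of the coordinates preserves the elementwise inequality and leaves both nonincreasing rearrangements unchanged, so it suffices to prove the claim under the assumption $\bm x = \bm x^\downarrow$, i.e.\ $x_1 \ge x_2 \ge \cdots \ge x_n$, with $\bm y$ arbitrary subject to $\bm y \le \bm x$.

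Second, I would sort $\bm y$ into $\bm y^\downarrow$ through a finite sequence of positive adjacent swaps, as guaranteed by the remark following the definition of adjacent swaps, and maintain the invariant $\bm y \le \bm x$ throughout. Consider a single positive swap acting at positions $j$ and $j+1$; it applies precisely when $y_j < y_{j+1}$ and replaces the pair $(y_j, y_{j+1})$ by $(y_{j+1}, y_j)$. Only these two coordinates change, so I need only verify two bounds. The new $j$-th entry $y_{j+1}$ satisfies $y_{j+1} \le x_{j+1} \le x_j$, using $\bm y \le \bm x$ together with the fact that $\bm x$ is nonincreasing; the new $(j+1)$-th entry $y_j$ satisfies $y_j < y_{j+1} \le x_{j+1}$, using the strict inequality that triggers the swap together with $\bm y \le \bm x$. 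Hence each positive swap preserves $\bm y \le \bm x$, and after exhausting the swap sequence we arrive at $\bm y^\downarrow \le \bm x = \bm x^\downarrow$, which is the desired conclusion.

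Finally, the nondecreasing version follows by reindexing: since $x^\uparrow_i = x_{[n+1-i]}$ and $y^\uparrow_i = y_{[n+1-i]}$, the statements $\bm y^\uparrow \le \bm x^\uparrow$ and $\bm y^\downarrow \le \bm x^\downarrow$ assert exactly the same family of inequalities $\{\, y_{[k]} \le x_{[k]} : k \in \underline{n} \,\}$, giving the claimed equivalence. I expect the only delicate point to be the bound on the new $(j+1)$-th coordinate: the trivial estimate $y_j \le x_j$ is insufficient, and one must route through $y_j < y_{j+1} \le x_{j+1}$, so the strictness of the swap-triggering inequality is essential. (An alternative avoiding swaps is the order-statistic identity $x_{[k]} = \max_{|S|=k}\min_{i\in S} x_i$, from which $\bm y \le \bm x$ yields $y_{[k]} \le x_{[k]}$ directly; I prefer the swap argument since it reuses the tools the paper has already assembled.)
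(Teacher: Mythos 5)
Your proof is correct and follows essentially the same route as the paper's: the paper also reduces to the case $\bm x = \bm x^\downarrow$, sorts $\bm y$ by a sequence of positive adjacent swaps, and verifies via the $n=2$ case that each swap preserves $\bm y \le \bm x$, using exactly the chain $y_j \le y_{j+1} \le x_{j+1} \le x_j$ that you identify as the delicate point. The only cosmetic difference is that you phrase the key step as a local two-coordinate check inside the induction rather than as a separately stated $n=2$ base case.
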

\begin{proof}
   We start with a simple case where~$n=2$ and assume~$\bm x=\bm x^\downarrow$. If~$\bm y=\bm y^\downarrow$, then $\bm y^\downarrow\leq \bm x^\downarrow$ clearly. Otherwise,~$\bm y=\begin{bmatrix}y_{[2]}&y_{[1]}\end{bmatrix}'$. By~$\bm y\leq \bm x$, we have~$y_{[1]}\leq x_{[2]}\leq x_{[1]}$ and $y_{[2]}\leq y_{[1]}\leq x_{[2]}$. Thus, we prove the case with~$n=2$.

  For a general~$n$, we again assume that $\bm x=\bm x^\downarrow$. Select an arbitrary sequence of positive adjacent swaps which transforms~$\bm y$ into~$\bm y^\downarrow$. Applying the previous result for~$n=2$ to each adjacent swap in the process from $\bm y$ to $\bm y^\downarrow$, we see that the resultant rearrangement of~$\bm y$ is always no greater than $\bm x$. Therefore, we have $\bm x^\downarrow \leq \bm y^\downarrow$. It follows that $\bm x^\uparrow\leq \bm y^\uparrow$. 
\end{proof}

Lemma~\ref{comadjswap} states that sorting will not change the elementwise inequality between two vectors. The following lemma further gives two majorization inequalities on sorting. The proof is similar to that of Lemma~\ref{comadjswap} and thus omitted for brevity.

\begin{lemma}\label{comadjswap2}
    For $\bm x,\bm y\in \mathbb{R}^n$, the following inequalities hold:
    $$\bm x^\downarrow-\bm y^\downarrow\prec \bm x^\downarrow-\bm y\text{ and }\bm x^\downarrow+\bm y^\uparrow\prec \bm x^\downarrow+\bm y.$$
\end{lemma}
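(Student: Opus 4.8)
The plan is to imitate the proof of Lemma~\ref{comadjswap}, replacing the elementwise comparison used there with a majorization comparison, and to exploit the fact (recorded just before Lemma~\ref{comadjswap}) that $\bm y^\downarrow$ is reachable from $\bm y$ through a finite sequence of positive adjacent swaps. Throughout I would fix $\bm z := \bm x^\downarrow$, so that $z_1 \ge \cdots \ge z_n$, and regard $\bm z - \bm y$ as a function of the arrangement of $\bm y$. It suffices to prove the first inequality: the second follows by applying the first to the vector $-\bm y$ in place of $\bm y$, since $(-\bm y)^\downarrow = -\bm y^\uparrow$ turns $\bm x^\downarrow - (-\bm y)^\downarrow \prec \bm x^\downarrow - (-\bm y)$ into exactly $\bm x^\downarrow + \bm y^\uparrow \prec \bm x^\downarrow + \bm y$.

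The core step is to isolate the effect of a single positive adjacent swap. Suppose $\bm y'$ is obtained from $\bm y$ by exchanging two adjacent entries at positions $k$ and $k+1$ with $y_k < y_{k+1}$. Then $\bm z - \bm y'$ and $\bm z - \bm y$ agree in every coordinate except $k$ and $k+1$, and they share the same total sum, so their majorization comparison reduces to the two changed coordinates. Writing $p := z_k - z_{k+1} \ge 0$ and $q := y_{k+1} - y_k > 0$, a direct computation shows the spread of the affected pair equals $p+q$ before the swap and $|p-q|$ after it. Since $|p-q| \le p+q$, the pair in $\bm z - \bm y'$ lies inside the pair in $\bm z - \bm y$ about their common mean; equivalently, $\bm z - \bm y'$ arises from $\bm z - \bm y$ by a single T-transform (a Robin--Hood transfer), whence $\bm z - \bm y' \prec \bm z - \bm y$. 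Chaining this estimate over a sequence of positive adjacent swaps carrying $\bm y$ to $\bm y^\downarrow$ and invoking transitivity of $\prec$ yields $\bm x^\downarrow - \bm y^\downarrow \prec \bm x^\downarrow - \bm y$; the second inequality then follows immediately from the substitution $\bm y \mapsto -\bm y$ described above, with no separate argument, just as Lemma~\ref{comadjswap} deduces its nondecreasing statement from the nonincreasing one.

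The main obstacle is the two-coordinate majorization step, namely justifying cleanly that shrinking a single adjacent pair toward its common mean produces a majorized vector. I would handle this either by quoting the classical characterization that $\bm u \prec \bm v$ whenever $\bm u = T\bm v$ for a T-transform $T$, or, to keep the argument self-contained, by verifying the partial-sum inequalities of Definition~\ref{defmaj} directly: because only coordinates $k$ and $k+1$ change while the sum is fixed, every leading partial sum of the sorted vectors is unaffected except possibly one, where the required inequality is precisely $|p-q| \le p+q$. A minor point to watch during the write-up is the bookkeeping: the swap acts on $\bm y$ while $\bm z = \bm x^\downarrow$ remains fixed with $z_k \ge z_{k+1}$ in \emph{position} order, which is exactly what forces $p \ge 0$ and drives the estimate.
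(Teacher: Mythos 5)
Your proposal is correct and follows exactly the route the paper intends: the paper omits this proof, stating only that it is ``similar to that of Lemma~\ref{comadjswap},'' i.e., reduce to a single positive adjacent swap and chain, which is precisely your argument, with the two-coordinate step correctly handled as a T-transform (the affected pair keeps its sum and its spread shrinks from $p+q$ to $|p-q|$). The only nitpick is that your self-contained alternative (``every leading partial sum of the sorted vectors is unaffected except possibly one'') is imprecise as stated, since sorting can interleave the unchanged coordinates; but quoting the classical T-transform characterization, as you primarily propose, makes the step rigorous.
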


The following lemma will be a critical component to prove our key results.
\begin{lemma} \label{deducadditionseq}
  Consider nonnegative integer vectors~\mbox{$\bm x, \bm y\in \mathbb{N}^n$} and~two sequences of indices, namely,~$(p_1,p_2,\ldots, p_\tau)$ and~$(q_1,q_2,\ldots,q_\tau)$, such that $p_i,q_i\leq n$ for~$i\in \underline{\tau}$ and the indices in the same sequence are distinct from each other.
  \begin{enumerate}
    \item[$\mathit{1}$)] If~$\bm x \prec \bm y$ and~$p_1\leq q_1, p_2\leq q_2, \dots, p_\tau\leq q_\tau$, then~$$\bm w \prec \bm z\text{, where }\bm w=\bm x^\downarrow- \sum_{l=1}^{\tau}\bm{e_{p_l}}\text{ and }\bm z=\bm y^\downarrow- \sum_{l=1}^{\tau}\bm{e_{q_l}}.$$
    \item[$\mathit{2}$)] If~$\bm x \prec \bm y$ and~$p_1\geq q_1, p_2\geq q_2, \dots, p_\tau\geq q_\tau$, then $$\bm w \prec \bm z\text{, where }\bm w=\bm x^\downarrow+ \sum_{l=1}^{\tau}\bm{e_{p_l}}\text{ and } \bm z=\bm y^\downarrow+\sum_{l=1}^{\tau}\bm{e_{q_l}}.$$
  \end{enumerate}
\end{lemma}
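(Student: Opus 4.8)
The plan is to prove part~1 and then obtain part~2 for free. Negating both sides preserves $\prec$ while turning the subtraction into an addition; applying the coordinate reversal $i\mapsto n+1-i$ (a permutation, hence majorization-invariant) sends the resulting nondecreasing vector $(-\bm x)^\uparrow$ to the nonincreasing $(-\bm x)^\downarrow$ and converts the hypothesis $p_l\ge q_l$ into $n+1-p_l\le n+1-q_l$. This is exactly the instance of part~1 applied to $-\bm x,-\bm y$ with the reversed index sequences, so I focus on part~1.

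First I would normalize the data. Since the indices inside each sequence are distinct I may sort them, and the hypothesis survives: if $p_l\le q_l$ for the given pairing then, after sorting both sequences increasingly, one still has $p_{(l)}\le q_{(l)}$ (for each $l$ the $l$ smallest $q$'s are matched to $p$'s that are no larger, so at least $l$ of the $p$'s lie at or below $q_{(l)}$). Hence I assume $p_1<\cdots<p_\tau$ and $q_1<\cdots<q_\tau$. Writing $\bm a=\bm x^\downarrow$, $\bm b=\bm y^\downarrow$ (both nonincreasing with $\bm a\prec\bm b$) and $\bm u=\sum_l\bm{e_{p_l}}$, $\bm v=\sum_l\bm{e_{q_l}}$, the equal-sum requirement for $\bm w=\bm a-\bm u\prec\bm z=\bm b-\bm v$ is immediate, since $\|\bm a\|_1=\|\bm b\|_1$ and each side loses exactly $\tau$.

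The crucial move is to avoid having to re-sort $\bm w$ and $\bm z$. I would use the standard characterization that, for equal-sum vectors, $\bm w\prec\bm z$ holds if and only if $\sum_i(w_i-t)_+\le\sum_i(z_i-t)_+$ for every $t\in\mathbb{R}$, where $(s)_+=\max\{s,0\}$; these functionals are order-independent. Splitting each sum according to whether a coordinate is subtracted and using $(s)_+-(s-1)_+=\min\{(s)_+,1\}=:g(s)$, the target becomes $\sum_{i\in Q}g(b_i-t)-\sum_{i\in P}g(a_i-t)\le\sum_{i=1}^n\big[(b_i-t)_+-(a_i-t)_+\big]$ for all $t$. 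The right-hand side is controlled by the hypothesis: since $\bm a\prec\bm b$ are nonincreasing, each leading subvector is weakly submajorized, so applying the convex increasing map $(\,\cdot-t)_+$ gives $\sum_{i=1}^k(a_i-t)_+\le\sum_{i=1}^k(b_i-t)_+$ for every $k$. The index hypothesis enters through $p_l\le q_l$: nonincreasingness forces $a_{p_l}\ge a_{q_l}$, and $g$ is nondecreasing, whence $-\sum_l g(a_{p_l}-t)\le-\sum_{i\in Q}g(a_i-t)$, which collapses the correction onto the single set $Q$.

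The hard part will be the final inequality $\sum_{i\in Q}\big[g(b_i-t)-g(a_i-t)\big]\le\sum_{i=1}^n\big[(b_i-t)_+-(a_i-t)_+\big]$, and I expect this to be the main obstacle precisely because majorization is a \emph{global} constraint: $\bm a\prec\bm b$ controls only leading partial sums, not the contribution of the arbitrary coordinate set $Q$, and at the large indices where $a_i>b_i$ the left-hand summand is negative and must be paid for by surplus accumulated at small indices. This is exactly why a naive coordinatewise bound or a single-transfer induction fails. My intended resolution is a summation-by-parts argument: writing $g(b_i-t)-g(a_i-t)=\delta_i(t)-\delta_i(t+1)$ with $\delta_i(s)=(b_i-s)_+-(a_i-s)_+$, and using that the leading partial sums $\sum_{i=1}^k\delta_i(s)$ are nonnegative for every $k$ and every $s$ (the same fact established above), I would reorganize the sum over $Q$ so that each negative late term is dominated by the nonnegative early prefix sums, which is where the sortedness and $p_l\le q_l$ pay off. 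I would check the tie configurations (equal adjacent values of $\bm a$ or $\bm b$, and integer thresholds $t$) separately, as these are the borderline cases.
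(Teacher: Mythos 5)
Your reductions are correct and take a genuinely different route from the paper (which, for $\tau=1$, shifts $p,q$ to the last positions $\hat p,\hat q$ of their tied value-blocks so that subtracting $\bm{e_{\hat p}},\bm{e_{\hat q}}$ preserves sortedness, verifies the partial sums directly with a small contradiction argument for the case $\hat p>\hat q$, and then iterates over $\tau$). In particular: deducing part~2 from part~1 by negation plus index reversal is fine (your proof of part~1 never uses nonnegativity, so applying it to $-\bm x,-\bm y$ is legitimate); sorting both index sequences while preserving $p_{(l)}\le q_{(l)}$ is a correct normalization; the $(\cdot-t)_+$ test-function characterization for equal-sum vectors is the right tool to avoid re-sorting $\bm w,\bm z$; and the step $\sum_{i\in P}g(a_i-t)\ge\sum_{i\in Q}g(a_i-t)$ is exactly where $p_l\le q_l$ is consumed.

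However, there is a genuine gap at the step you yourself flag as the main obstacle. After that reduction the hypothesis $p_l\le q_l$ is \emph{gone}: the remaining inequality
$$\sum_{i\in Q}\bigl[g(b_i-t)-g(a_i-t)\bigr]\;\le\;\sum_{i=1}^{n}\bigl[(b_i-t)_+-(a_i-t)_+\bigr]$$
must hold for an \emph{arbitrary} subset $Q$ (it is equivalent to $\bm a-\bm 1_Q\prec\bm b-\bm 1_Q$, i.e.\ the $P=Q$ case of the lemma), so your stated plan that ``the sortedness and $p_l\le q_l$ pay off'' here cannot be how it closes. Moreover the summation-by-parts reorganization is not carried out, and nonnegativity of the prefix sums $\sum_{i=1}^k\delta_i(s)$ alone does not suffice: one genuinely needs to couple the levels $t$ and $t+1$, since $\sum_{i\in Q}\delta_i(t)$ by itself can exceed the right-hand side. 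The missing idea is a short counting argument. By piecewise linearity with integer breakpoints it suffices to take $t\in\mathbb{Z}$; set $\alpha=\#\{i:a_i>t\}$ and $\beta=\#\{i:b_i>t\}$, so that $g(a_i-t)=\mathds{1}(i\le\alpha)$ and $g(b_i-t)=\mathds{1}(i\le\beta)$ by sortedness. The left-hand side equals $|Q\cap[1,\beta]|-|Q\cap[1,\alpha]|\le(\beta-\alpha)_+$, while the right-hand side equals $\sum_{i\le\beta}(b_i-t)-\sum_{i\le\alpha}(a_i-t)$, which is at least $(\beta-\alpha)_+$: when $\beta\ge\alpha$ the first $\alpha$ terms contribute nonnegatively by $\bm a\prec\bm b$ and each index in $(\alpha,\beta]$ contributes $b_i-t\ge1$; when $\beta<\alpha$ the left-hand side is $\le 0$ and the right-hand side is $\ge 0$ by weak submajorization. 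With this step supplied your argument is complete and, arguably, cleaner than the paper's tie bookkeeping; without it, the core difficulty of the lemma remains unresolved.
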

\begin{proof}
  We start with a simple case of the first part, namely, for~$\bm x, \bm y\in \mathbb{N}^n$ and~$p,q\in \mathbb{N}$, if $\bm x\prec \bm y$ and $1\leq p \leq q\leq n$, then it holds that~$\bm x^\downarrow - \bm{e_p}\prec \bm y^\downarrow- \bm{e_q}$. To show this, we let $\hat{p}$ denote the largest index such that $x_{[p]}=x_{\hat{p}}$. Similarly, let~$\hat{q}$ denote the largest index such that $y_{[q]}=y_{\hat{q}}$. Clearly,~$\bm x^\downarrow - \bm{e_p}\sim\bm x^\downarrow - \bm{e_{\hat{p}}}$ and $\bm y^\downarrow - \bm{e_q}\sim\bm y^\downarrow - \bm{e_{\hat{q}}}$. Thus, to show~$\bm x^\downarrow - \bm{e_p}\prec \bm y^\downarrow- \bm{e_q}$, it is enough to show that~$\bm x^\downarrow - \bm{e_{\hat{p}}}\prec \bm y^\downarrow - \bm{e_{\hat{q}}}$ when $\bm x\prec \bm y$ and~$p \leq q$. For brevity, we use $\bm w$ and $\bm z$ to denote $\bm x^\downarrow - \bm{e_{\hat{p}}}$ and~$\bm y^\downarrow - \bm{e_{\hat{q}}}$, respectively. Thus, we have~$\bm w=\bm w^\downarrow$ and $\bm z= \bm z^\downarrow$.

  For each $k<\min\{\hat{p},\hat{q}\}$ or $k\geq \max\{\hat{p},\hat{q}\}$, we have
  \begin{align*}
     & \sum\limits_{l=1}^k w_l=\sum\limits_{l=1}^k x_{[l]}\leq \sum\limits_{l=1}^k y_{[l]}=\sum\limits_{l=1}^k z_{l} \text{ and}\\
    & \sum\limits_{l=1}^k w_l=\sum\limits_{l=1}^k x_{[l]}-1\leq \sum\limits_{l=1}^k y_{[l]}-1=\sum\limits_{l=1}^k z_{l}.
  \end{align*}

  Moreover, the total sums of $\bm w$ and $\bm z$ are equal, since {\smallskip\begin{center}$\sum\limits_{l=1}^n w_l=\sum\limits_{l=1}^n x_{[l]}-1= \sum\limits_{l=1}^n y_{[l]}-1=\sum\limits_{l=1}^n z_{l}.$
  \end{center}\smallskip}

  If $\hat{p}=\hat{q}$, then~$\bm w\prec \bm z$.
  If $\hat{q}>\hat{p}$, for $\hat{p}\leq k<\hat{q}$, we have {\smallskip\begin{center}$\sum\limits_{l=1}^k w_l=\sum_{l=1}^k x_{[l]}-1< \sum\limits_{l=1}^k y_{[l]}=\sum\limits_{l=1}^k z_{l}.$
  \end{center}}

  If $\hat{p}>\hat{q}$, then we let~$\hat{k}$ be the smallest index such that~$\sum_{l=1}^{\hat{k}}w_l > \sum_{l=1}^{\hat{k}}z_l$, then we have~$p\leq q \leq \hat{q}\leq \hat{k}<\hat{p}$. As~$\sum_{l=1}^{\hat{k}}z_l=\sum_{l=1}^{\hat{k}}y_{[l]}-1\geq \sum_{l=1}^{\hat{k}}x_{[l]}-1= \sum_{l=1}^{\hat{k}}w_l-1$, it follows that
 \begin{equation*}
 \begin{split}
&\sum\limits_{l=1}^{\hat{k}}x_l = \sum\limits_{l=1}^{\hat{k}}y_l\text{ and }\\
  &  x_{[\hat{p}]}=x_{[\hat{p}-1]}=\cdots=x_{[\hat{k}]}\geq y_{[\hat{k}]}\geq y_{[\hat{k}+1]}\geq \cdots \geq y_{[\hat{p}]}.\end{split}
  \end{equation*}

  Together with the assumption $\sum_{l=1}^{\hat{k}}w_l > \sum_{l=1}^{\hat{k}}z_l$, we achieve $\sum_{l=1}^{\hat{p}}w_l > \sum_{l=1}^{\hat{p}}z_l$, which is a contradiction. Thus, such a $\hat{k}$ never exists, which completes the proof of the first part with~$\tau=1$. Repeatedly applying this result, we can easily prove the first part with a general~$\tau$. Moreover, the second part can be proven similarly.
\end{proof}


\subsection{Proof of Proposition~\ref{propcan}} \label{PMPOMapdA4}
   If $\bm v \in \mathcal{V}^\ominus$, then $\bm v \leq \bm c=\bm c^\downarrow $ and~$\bm c - \bm v \prec \bm t$. By Lemma~\ref{comadjswap} and Lemma~\ref{comadjswap2}, we show that~$\bm v^\downarrow \leq \bm c$ and~$\bm c - \bm v^\downarrow \prec \bm c - \bm v$, which implies~$\bm v^\downarrow \in \mathcal{V}^\ominus$. Next, if $\bm v \in \mathcal{V}^\oplus$, then~\mbox{$\bm v \geq \bm b=\bm b^\downarrow$} and~\mbox{$\bm v - \bm b \prec \bm t$}. It follows from~Lemma~\ref{comadjswap} that~$\bm v^\downarrow \geq \bm b$ and from~\cite[Section 1.A]{olkin2016inequalities} that~$\bm b - \bm v \prec -\bm t$. Moreover, by Lemma~\ref{comadjswap2}, we have~$\bm b - \bm v^\downarrow \prec \bm b - \bm v \prec -\bm t$. For similar arguments, we have~$\bm v^\downarrow - \bm b \prec \bm t$. Thus, we conclude that~$\bm v^\downarrow \in \mathcal{V}^\oplus$ and complete the proof.

\subsection{Proof of Theorem~\ref{leastthm}}  \label{PMPOMapdA6}
  We give the proof based on the optimal~$(0,1)$-matrix completion formulations, namely, Problem~(\ref{basicmajmatrix1}) and Problem~(\ref{basicmajmatrix2}).

  First of all, we shall prove a conjecture that there always exists an element $\bm w\in \mathcal{V}^\ominus$ such that $\bm w \prec \bm u$ and $\bm w \prec \bm v$ for an arbitrary pair of elements $\bm u, \bm v\in \mathcal{V}^\ominus$. We shall show this result by induction on $m$, which is the number of rows of a~$(0,1)$-matrix in Problem~(\ref{basicmajmatrix1}).

  If $m=1$, let $\bm a$ be $[1~\cdots~1~0~\cdots~0]'\in \mathcal{{X}}^\ominus$. 
  It follows from Lemma~\ref{deducadditionseq} that~$\bm w=\bm c- \bm a$ satisfies $\bm w \prec \bm u$ for every~$\bm u\in \mathcal{V}^\ominus$. By the hypothesis that the conjecture holds for $m\leq k-1$, we will show that such a vector $\bm w$ also exists when $m=k$.

  Let $P$ and $Q$ be two distinct feasible $(0,1)$-matrices in Problem~(\ref{basicmajmatrix1}) with the same size~$k\times n$ and they respectively give~$\bm u = \bm c-\sum_{i=1}^{k}\bm p_{i\bigcdot}'$ and $\bm v = \bm c-\sum_{i=1}^{k}\bm q_{i\bigcdot}'$. By hypothesis, there is a vector~$\bm{\hat{w}}$ and a $(k-1)\!\times n$ $(0,1)$-matrix~$A$ such that
  \begin{align*}
     & \sum\nolimits_{j=1}^{n}a_{ij}=r_i \text{, for } 1\leq i \leq k-1; \;\bm{\hat{w}}\prec \bm c-\sum\nolimits_{i=1}^{k-1}\bm p_{i\bigcdot}';\\
     & \bm{\hat{w}}\prec \bm c-\sum\nolimits_{i=1}^{k-1}\bm q_{i\bigcdot}';\;\bm{\hat{w}}=\bm c-\sum\nolimits_{i=1}^{k-1}\bm a_{i\bigcdot}'.
  \end{align*}
  We can obtain~$\bm u,\bm v\in\mathbb{N}^n$ by deducting ones from~$r_k$ distinct elements of a certain nonnegative vector which is no smaller than $\bm{\hat{w}}$ in majorization, so there are at least $r_k$ elements larger than zero in $\bm{\hat{w}}$. Specify the positions of the~$r_k$ largest elements in~$\bm{\hat{w}}$. In the case of ties, pick them randomly such that exactly~$r_k$ positions are identified. Then, define a $k\times n$ $(0,1)$-matrix~$\bar{A}$ as follows. The first $k-1$ rows of~$\bar{A}$ are given by~$A$, while there are $r_k$ ones in the last row of~$\bar{A}$, appearing at the~$r_k$ positions just identified. It follows from Lemma~\ref{deducadditionseq} that~\mbox{$\bm w \prec \bm u$} and~$\bm w \prec \bm v$, where~$\bm{{w}}=\bm c-\sum_{i=1}^{k}\bm{\bar{a}}_{i\bigcdot}'$. At this point, we have shown the proposed conjecture is true. Moreover, because~$\mathcal{V}^\ominus$ has finite elements, we conclude that all the minimal elements of $\mathcal{V}^\ominus$ lie in the same equivalent class. By Proposition~\ref{propcan}, we show that the canonical element of the equivalent class also belongs to~$\mathcal{V}^\ominus$. Finally, by definition, these minimal elements are also least elements in the preordered set. 

  So far, we have proven the case regarding~$\mathcal{V}^\ominus$, while the case regarding~$\mathcal{V}^\oplus$ can be proven similarly, by Proposition~\ref{propcan} and Lemma~\ref{deducadditionseq}. Thus, we complete the proof.

\subsection{Proof of Proposition~\ref{feacond}} \label{PMPOMapdA2}
\vspace{-1em}
\begin{algorithm}[h!]
\caption{Obtain~$\bm {x}\in \mathbb{N}^n$ with~$\bm {x} \prec \bm t$ and $\bm {x} \leq \bm c$}
\label{GethAlg}
    \KwIn {Two vectors $\bm c\in \mathbb{N}^n$ and $\bm t\in \mathbb{N}^n$ with $\bm c \prec^w \bm t$.}
    \KwOut  {A vector~$\bm {x}\in \mathbb{N}^n$ where $\bm {x} \prec \bm t$ and $\bm {x} \leq \bm c$.}
    {{\bf Initialization:} $\bm {x}=\bm c, k=n$\;}
    \Repeat{$k=0$}{
    { Decrease ${x}_k$ such that one of the inequalities $\left(\sum_{i=j}^{n}{x}_{i}\geq \sum_{i=j}^{n}t_{i}, j\in \underline{k}\right)$ is tight:}\newline 
    { If $\hat{p}$ is the smallest index such} {that~$\sum_{i=\hat{p}}^{n}{x}_{i}= \sum_{i=\hat{p}}^{n}t_{i}$, then $k=\hat{p}-1$\;}
    }
\end{algorithm}
\vspace{-1em}
The necessity follows from Lemma~\ref{realfeacondcombineprecminusmaj}. Set~$\bm t=\bm r^*$. To prove the sufficiency, we design Algorithm~\ref{GethAlg} to find an integer vector~$\bm {x}$ with~$\bm {x}\leq \bm c$ and~$\bm {x}\prec \bm t$. We prove its correctness as follows. First, we have ${x}_n\geq t_n \geq 0$ and~$\bm {x}\leq \bm c$ when implementing Algorithm~\ref{GethAlg}. Second, since~$\bm c \prec^w \bm t$, we can always find an index~$\hat{p}$ in each iteration of Algorithm~\ref{GethAlg}. Moreover, for every new~$k$, we see that $$\sum\limits_{i=k}^{n}{x}_{i}\geq \sum\limits_{i=k}^{n}t_{i},\;\sum\limits_{i=k+1}^{n}{x}_{i}= \sum\limits_{i=k+1}^{n}t_{i}\text{ and }\sum\limits_{i=k+2}^{n}{x}_{i}\geq \sum\limits_{i=k+2}^{n}t_{i}.$$
It follows that ${x}_k\geq t_k\geq t_{k+1}\geq {x}_{k+1}$. Therefore, the vector~$\bm {x}$ obtained by Algorithm~\ref{GethAlg} is in its nonincreasing form. {Then, we conclude by Definition~\ref{defmaj} that~$\bm {x}\prec \bm t$}. By simple algebra, we show that~$\bm c \prec^w \bm r^*$ if and only if~$\bm r \prec_w \bm c^*$, which completes the proof.

\subsection{Proof of Proposition~\ref{rescanon}} \label{PMPOMapdA3}
If $\bm {x} \in \mathcal{{X}}^\ominus$, then $\bm {x}\prec \bm t$ and $\bm {x}\leq \bm c=\bm c^\downarrow$. By Lemma~\ref{comadjswap}, we have $\bm {x}^{\downarrow} \prec \bm t$ and $\bm {x}^{\downarrow} \leq \bm c$, which implies~$\bm {x}^{\downarrow} \in \mathcal{{X}}^\ominus$. By Lemma~\ref{comadjswap2}, we prove~$\bm c -\bm {x}^{\downarrow} \prec \bm c - \bm {x}$. Next, if $\bm {x} \in \mathcal{{X}}^\oplus$, then~$\bm {x}^{\uparrow} \prec \bm t$. According to Lemma~\ref{comadjswap2} and~$\bm b=\bm b^\downarrow$, we further have $\bm b + \bm {x}^{\uparrow}\prec \bm b + \bm {x}$, which completes the proof.

\subsection{Proof of Proposition~\ref{feaslattice}} \label{PMPOMapdAf}
The part with~$\mathcal{{X}}^\oplus_\downarrow$ follows from the transitivity of majorization and the definition of sublattice. Next, we prove the part with~$\mathcal{{X}}^\ominus_\downarrow$. Set~$\bm t=\bm r^*$. Given~$\bm x,\bm y\in \mathcal{{X}}^\ominus_\downarrow$, it follows from the definition that~$\bm x\prec \bm t$, $\bm y\prec \bm t$, $\bm x\leq \bm c$, and~$\bm y\leq \bm c$. For notational convenience, we denote~$\inf_{\mathcal{N}_{\|\bm t\|_1}}\{\bm x, \bm y\}$ by $\bm z$. It follows from the transitivity that~$\bm z\prec \bm t$. In what follows, we shall show that~$\bm z \leq \bm c$. First, for all~$k\in \underline{n}$, we have
  $$z_k=\min\left\{\sum\limits_{i=1}^{k} x_{[i]},\sum\limits_{i=1}^{k} y_{[i]}\right\}-\min\left\{\sum\limits_{i=1}^{k-1} x_{[i]},\sum\limits_{i=1}^{k-1} y_{[i]}\right\}.$$
  If~$\sum_{i=1}^{k-1} x_{[i]}=\min\left\{\sum_{i=1}^{k-1} x_{[i]},\sum_{i=1}^{k-1} y_{[i]}\right\}$, then we have \begin{align*}z_k&=\min\left \{\sum\limits_{i=1}^{k} x_{[i]},\sum\limits_{i=1}^{k} y_{[i]}\right\}-\sum\limits_{i=1}^{k-1} x_{[i]}\\
  &\leq  \sum\limits_{i=1}^{k} x_{[i]}-\sum\limits_{i=1}^{k-1} x_{[i]}= x_{[k]}.
    \end{align*}
    Using a similar argument, for the other case, we can verify that~$z_k\leq y_{[k]}$. Thus, we show that~$z_k\leq \max\{x_{[k]},y_{[k]}\}\leq c_k$. It follows that~$\bm z$ belongs to~$\mathcal{{X}}^\ominus_\downarrow$.

  It remains to show that~$\sup_{\mathcal{{X}}^\ominus_\downarrow}\{\bm x, \bm y\}=\sup_{\mathcal{N}_{\|\bm t\|_1}}\{\bm x, \bm y\}$. We denote~$\sup_{\mathcal{N}_{\|\bm t\|_1}}\!\{\bm x, \bm y\}$ by $\bm z$. Since~$\bm x\prec \bm t$, and $\bm y \prec \bm t$, we have $\bm z\prec \bm t$. Next, let us show~$\bm z\leq \bm c$. We can obtain the elements of~$\bm z$ by the recursive formula below: for each~$k\in \underline{n}$,
  \begin{equation*}
    \begin{split}
        ~~~~~~~~~&z_k=\min_{\alpha\in \mathbb{N} }~~\alpha\ \\
        \text{subject to~~}&  \textstyle \sum\limits_{i=1}^{k-1}z_{i}+j\alpha \geq \sum\limits_{i=1}^{k-1+j}x_{[i]},\\
        &  \textstyle \sum\limits_{i=1}^{k-1}z_{i}+j\alpha \geq \sum\limits_{i=1}^{k-1+j}y_{[i]}, \text{for } 1\leq j \leq n-k+1.
    \end{split}
  \end{equation*}
   As a result, there exists a special index $p$ such that
   \begin{align*}&1\leq p\leq n-k+1 \text{ and }\\
   &\sum\limits_{i=1}^{k-1}z_{i}+p(z_k-1) < \max\left\{\sum\limits_{i=1}^{k-1+p}x_{[i]},\sum\limits_{i=1}^{k-1+p}y_{[i]}\right\}.
    \end{align*}
    Moreover, we have~$\sum_{i=1}^{k-1}z_{i}\geq \max\left\{\sum_{i=1}^{k-1}x_{[i]},\sum_{i=1}^{k-1}y_{[i]}\right\}$. Therefore, it follows that $$p(z_k-1) <\max\left\{\sum\limits_{i=k}^{k-1+p}x_{[i]},\sum\limits_{i=k}^{k-1+p}y_{[i]}\right\}.$$
It follows from $\bm x, \bm y\in \mathbb{R}^n_{\downarrow}$ that $z_k\leq \max\{x_{[k]},y_{[k]}\}\leq c_k$. Hence, $\bm z$ belongs to~$\mathcal{{X}}^\ominus_\downarrow$. 
Thus, we complete the proof.

\subsection{$(\mathcal{V}^\ominus_\downarrow,\prec)$ and $(\mathcal{V}^\oplus_\downarrow,\prec)$ Are in General Not Lattices} \label{PMPOMapdA5}
  It suffices to prove the stated result by presenting two counterexamples, where $(\mathcal{V}^\ominus_\downarrow,\prec)$ and $(\mathcal{V}^\oplus_\downarrow,\prec)$ are not lattices.
\begin{example}
  Consider a setup with~$\bm c=\begin{bmatrix}8~6~6~6~4~4~4\end{bmatrix}'$ and~$\bm t=\begin{bmatrix}2~2~1~1~0~0~0\end{bmatrix}'$. We first set~$\bm {x} \in \mathcal{{X}}^\ominus$ in order as
  \begin{equation*}
  \begin{split}
     & \begin{bmatrix}1~0~1~2~0~0~2\end{bmatrix}', \; \begin{bmatrix}2~0~0~1~0~1~2\end{bmatrix}', \\
     & \begin{bmatrix}2~0~0~2~0~1~1\end{bmatrix}',  \; \begin{bmatrix}2~0~1~1~0~0~2\end{bmatrix}'.
     \end{split}
  \end{equation*}
Then, we obtain four distinct elements~$\left(\bm u, \bm v, \bm w,\bm z\right)$ in $\mathcal{V}^\ominus_\downarrow$, which are stated below:
  \begin{equation*}
  \begin{split}
     & \begin{bmatrix}7~6~5~4~4~4~2\end{bmatrix}',  \; \begin{bmatrix}6~6~6~5~4~3~2\end{bmatrix}', \\
     & \begin{bmatrix}6~6~6~4~4~3~3\end{bmatrix}',  \; \begin{bmatrix}6~6~5~5~4~4~2\end{bmatrix}'.
     \end{split}
  \end{equation*}

  For $\bm x, \bm y$ in~$(\mathcal{N}_\tau,\prec)$, we say $\bm y$ covers $\bm x$ if $\bm x\neq \bm y$, $\bm x \prec \bm y$, and there exists no third element~$\bm z \in \mathcal{N}_\tau$ such that~\mbox{$\bm x \prec \bm z \prec \bm y$}. It follows from~\cite[Section 5.D]{olkin2016inequalities} that~$\bm x$ is covered by~$\bm y$ if and only if~$\bm x=\bm y -\bm {e_i} + \bm {e_j}$ where $y_i>y_j+1$. Therefore, in~$\left({\mathcal{N}_{\|\bm c\|_1-\|\bm t\|_1}},\prec\right)$, the two elements, $\bm{u}$ and $\bm v$, cover a common vector~$[6~6~6~4~4~4~2]'$, which also covers~$\bm w$ and~$\bm z$ both. However, the common vector does not belong to the canonical attainable set~$\mathcal{V}^\ominus_\downarrow$, which implies $\mathcal{V}^\ominus_\downarrow$ is not a lattice under majorization. 
\end{example}

\begin{example}
  Consider a setup with~$\bm b=\begin{bmatrix}4~4~4~2~2~2~0\end{bmatrix}'$ and~$\bm t=\begin{bmatrix}2~2~1~1~0~0~0\end{bmatrix}'$. We first set $\bm {x} \in \mathcal{{X}}^\oplus$ in order as
  \begin{equation*}
  \begin{split}
     & \begin{bmatrix}2~0~0~2~1~0~1\end{bmatrix}', \; \begin{bmatrix}2~1~0~1~0~0~2\end{bmatrix}', \\
     & \begin{bmatrix}2~0~0~1~1~0~2\end{bmatrix}',\;  \begin{bmatrix}1~1~0~2~0~0~2\end{bmatrix}'.
     \end{split}
  \end{equation*}
  Then, we obtain four distinct elements~$\{\bm u, \bm v, \bm w,\bm z\}$ in $\mathcal{V}^\oplus_\downarrow$, which are stated as follows:
  \begin{equation*}
  \begin{split}
     & \begin{bmatrix}6~4~4~4~3~2~1\end{bmatrix}', \; \begin{bmatrix}6~5~4~3~2~2~2\end{bmatrix}', \\
     & \begin{bmatrix}6~4~4~3~3~2~2\end{bmatrix}', \; \begin{bmatrix}5~5~4~4~2~2~2\end{bmatrix}'.
     \end{split}
  \end{equation*}
   In~$\left({\mathcal{N}_{\|\bm c\|_1+\|\bm t\|_1}},\prec\right)$, the elements~$\bm{u}$ and $\bm v$ cover a common vector~$[6~4~4~4~2~2~2]'$, which also covers~$\bm w$ and~$\bm z$ both. However, the common vector is not in the canonical attainable set~$\mathcal{V}^\oplus_\downarrow$.
Thus, we show $\mathcal{V}^\oplus_\downarrow$ is not a majorization lattice.
  \end{example}

\subsection{Proof of Theorem~\ref{algthm}} \label{PMPOMapdA7}
  Given~$\bm c\in \mathbb{N}^n$ and~$\bm r\in \mathbb{N}^m$, Algorithm~\ref{MAlgo1} has a loop of~$m$ iterations and the complexity of each iteration is~$\mathcal{O}(n)$. Thus, the complexity of Algorithm~\ref{MAlgo1} is~$\mathcal{O}(mn)$. Similarly, for~$\bm b \in \mathbb{N}$ and~$\bm r \in \mathbb{N}^m$, the complexity of Algorithm~\ref{MAlgo2} is~$\mathcal{O}(mn)$.

  We herein prove the correctness of Algorithm~\ref{MAlgo1}, while the part on Algorithm~\ref{MAlgo2} can be proven similarly. We do this by hypothesis induction in terms of~$m$ i.e.,~the number of rows. When~$m=1$, the correctness follows from Lemma~\ref{deducadditionseq}. By the hypothesis that the algorithm is correct for $m\leq k$, we move to~$m=k+1$. First, after the $k$th iteration of the second loop, we attain a minimal element $\bm{\bar{c}}^{(k)}$ of the following set:~$$\left\{\bm c-\bm{\hat{{x}}} \mid \bm{\hat{{x}}}\in \mathbb{N}^n, \bm{\hat{{x}}}\leq \bm c \text{, and } \bm{\hat{{x}}}\prec \left(\begin{bmatrix}r_1&r_2&\cdots&r_k\end{bmatrix}'\right)^{*}\right\}.$$

   According to Proposition~\ref{leastthm}, the element $\bm{\bar{c}}^{(k)}$ is equivalent to all other minimal elements of the above set. Note that the threshold vector $\bm t$ can be obtained by adding ones to the $r_{k+1}$ largest elements of $[r_1~r_2~\cdots~r_k]^{'*}$. In particular, we choose elements with the smaller indices in the case of ties such that exactly $r_{k+1}$ elements are increased by one. Then, we can conclude by Lemma~\ref{deducadditionseq} that~$\bm c-\bm{\bar{c}}^{(k+1)}\prec \bm t$.

  It follows from the equivalence between Problem~(\ref{basicmaj1}) and Problem~(\ref{basicmajmatrix1}) that each~$\bm {x}\in \mathcal{{X}}^\ominus$ can be decomposed into~$\bm{\hat{{x}}}+\bm y$, where $\bm{\hat{{x}}}\in \mathbb{N}^n$, $\bm c \geq \bm{\hat{{x}}}$, $\bm{\hat{{x}}}\prec [r_1~r_2~\cdots~r_k]'^{*}$, and~$\bm y$ is a~$(0,1)$-vector with exactly $r_{k+1}$ ones. Define~$\bm{\hat{y}}\in \mathbb{N}^n$ as a~$(0,1)$-vector with exactly $r_{k+1}$ ones appearing at positions of the~$r_{k+1}$ largest elements in~$\bm c-\bm{\hat{{x}}}$. Thus, we have~\mbox{$\bm{\hat{{x}}}+\bm{\hat{y}}\in \mathcal{{X}}^\ominus$}. Furthermore, by Lemma~\ref{deducadditionseq}, we conclude that~$\bm c-\bm{\hat{{x}}}-\bm{\hat{y}}\prec \bm c-\bm{\hat{{x}}}-\bm{y}$. To summarize, for every feasible matrix~$A$ in Problem~(\ref{basicmajmatrix1}), we can find another feasible matrix~$B$ whose first $k$ rows are the same as those of~$A$ and whose ones in the~$(k+1)$th row exactly appear in the positions corresponding to the $r_{k+1}$ largest elements of~$\bm c -\sum_{i=1}^{k}\bm a_{i\bigcdot}'$. Thus, the objective value of Problem~(\ref{basicmajmatrix1}) generated from~$A$ is no smaller than that generated from~$B$ under majorization.

  Based on the hypothesis for the case $m=k$ and Lemma~\ref{deducadditionseq}, we conclude that Algorithm~\ref{MAlgo1} gives a minimal element in~$\mathcal{V}^\ominus$ for $m=k+1$, which completes the proof.

\subsection{Proof of Theorem~\ref{alloptobj}}\label{PMPOMapdAall}
It suffices to focus on Algorithm~\ref{MAlgo1} and the part regarding Algorithm~\ref{MAlgo2} follows from similar arguments. Specifically, we prove this theorem by transforming an optimal feasible solution~$A$ into another~$\hat{A}$ that can be obtained by Algorithm~\ref{MAlgo1} and leads to the same optimal objective value~$\bm v$. If~$A$ cannot be generated by Algorithm~\ref{MAlgo1}, we can find the smallest possible index~$i\in \underline{m}$ and two distinct indices, $p,q\in\underline{n}$ such that
\begin{align*}
   a_{ip} = 0, a_{iq}=1,\text{ and } c_p-\sum\nolimits_{k=1}^{i-1} a_{kp}>  c_q-\sum\nolimits_{k=1}^{i-1} a_{kq}.
\end{align*}
By the optimality of~$A$, we conclude that~$v_p\leq v_q$; otherwise, we can reset~$a_{ip} = 1$ and~$a_{iq}=0$ to obtain another attainable objective value~$\hat{\bm v}$ satisfying~$\hat{\bm v} \prec \bm v$ and~$\hat{\bm v} \nsim \bm v$.

It follows that there exists an index~$j\in \underline{m}$ and~$j>i$ such that~$a_{jp} = 1$ and~$a_{jq}=0$. Then, we update the matrix~$A$ by performing the following interchange on the associated~$2\times 2$ submatrix and keeping other elements unchanged.
$$
\begin{tabular}{c|ccc}
     $\quad$&$p$&$\cdots$&$q$  \\\hline
     $i$&$0$&$\cdots$&$1$\\
     $\vdots$&$\vdots$&$\ddots$&$\vdots$\\
     $j$&$1$&$\cdots$&$0$
\end{tabular}
\Rightarrow
\begin{tabular}{c|ccc}
     $\quad$&$p$&$\cdots$&$q$  \\\hline
     $i$&$1$&$\cdots$&$0$\\
     $\vdots$&$\vdots$&$\ddots$&$\vdots$\\
     $j$&$0$&$\cdots$&$1$
\end{tabular}
$$
The new matrix is also an optimal solution and gives the same objective value. Repeat the above process and we finally obtain an admissible optimal matrix~$\hat{A}$, which completes the proof.

\subsection{Proof of Proposition~\ref{reorder}} \label{PMPOMapdA8}
  We will prove the part with regard to Algorithm~\ref{MAlgo1}, while the claim regarding Algorithm~\ref{MAlgo2} can be proven similarly.

  First, consider a simple case where~$\sigma$ corresponds to an adjacent swap defined in~Appendix~\ref{PMPOMapdA0}. Without loss of generality, we assume that $\sigma_{{p}}={p}+1$, $\sigma_{{p}+1}={p},$ and~$\sigma_{j}=j$, for~\mbox{$j=1,2,\dots,{p}-1,{p}+2,\dots,m$}. Without the permutation~$\sigma$, in the~$p$th iteration of the second loop, the largest~$r_p$ elements of~$\bar{\bm c}^{{p}-1}$ are all decreased by one. In the next iteration, the largest~$r_{p+1}$ elements of the newly obtained vector are reduced by one, which leads to~$\bar{\bm c}^{o1}$. However, with the permutation~$\sigma$, we should exchange~$r_{p}$ and~$r_{p+1}$, which leads to~$\bar{\bm c}^{o2}$. By carefully analyzing the possible ties during the two processes, we observe that $\bar{\bm c}^{o1}\sim \bar{\bm c}^{o2}$. In other words, the rearrangement of~$\bar{\bm c}^{{p}+1}$ remains unchanged after the adjacent swap. Thus, the optimal objective value generated by Algorithm~\ref{MAlgo1} remains equivalent as well.

  Recall that each general permutation can be written as a composition of a sequence of adjacent swaps. Repeatedly applying the above analysis to each involved adjacent-swap permutation, we finally verify this proposition.
\bibliographystyle{IEEEtran}


%
%
%

%

\newpage
\begin{IEEEbiography}[{\includegraphics[width=1in,height=1.25in,clip,keepaspectratio]{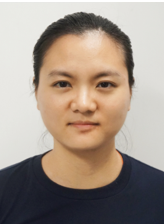}}]{Yanfang Mo}
received the B.E. degree in automation from Zhejiang University, Zhejiang, China, in~2014, and the Ph.D. degree in electronic and computer engineering from the Hong Kong University of Science and Technology, Hong Kong, China, in~2020. From February~2017 to July~2017, she was a Visiting Researcher at the University of California, Berkeley. She is currently a Research Associate in the School of Data Science, City University of Hong Kong.

Her research interests include smart grid, intelligent transportation, construction robotics, process monitoring and fault diagnosis, electricity product design, optimal resource allocation, online algorithm design, and partial order programming. She is the recipient of the best poster paper award at IEEE INFOCOM~2021 and a Hong Kong PhD Fellowship Scheme awardee.
\end{IEEEbiography}

\vskip 0pt plus -1fil
\begin{IEEEbiography}[{\includegraphics[width=1in,height=1.25in,clip,keepaspectratio]{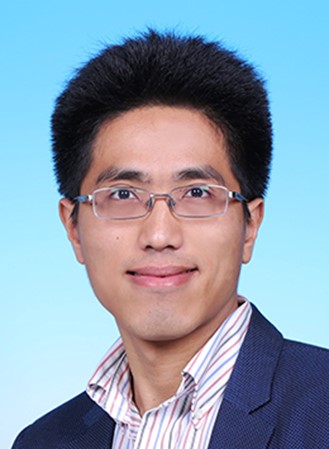}}]{Wei Chen}
received the B.S. degree in engineering and the double B.S. degree in economics from Peking University, Beijing, China, in 2008. He received the M.Phil. and Ph.D. degrees in electronic and computer engineering from the Hong Kong University of Science and Technology, Hong Kong S.A.R., China, in 2010 and 2014, respectively. He is currently an Assistant Professor in the Department of Mechanics and Engineering Science at Peking University. Prior to joining Peking University, he worked in the ACCESS Linnaeus Center of KTH Royal Institute of Technology and the EECS Department of University of California at Berkeley for postdoctoral research, and in the ECE Department of the Hong Kong University of Science and Technology as a Research Assistant Professor.

His research interests include linear systems and control, networked control systems, optimal control, smart grid, cyber physical security, and network science. He was the recipient of the best student paper award at the 2012 IEEE International Conference on Information and Automation.
\end{IEEEbiography}

\vskip 0pt plus -1fil
\begin{IEEEbiography}[{\includegraphics[width=1in,height=1.25in,clip,keepaspectratio]{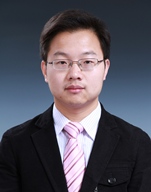}}]{Keyou You}
(SM’$17$) received the B.S. degree in Statistical Science from Sun Yat-sen University, Guangzhou, China, in~$2007$ and the Ph.D. degree in Electrical and Electronic Engineering from Nanyang Technological University (NTU), Singapore, in~$2012$. After briefly working as a Research Fellow at NTU, he joined Tsinghua University in Beijing, China where he is now a tenured Associate Professor in the Department of Automation. He held visiting positions at Politecnico di Torino, the Hong Kong University of Science and Technology, University of Melbourne and etc.

His current research interests include networked control systems, distributed optimization and learning, and their applications.
Dr. You received the Guan Zhaozhi award at the~$29$th Chinese Control Conference in~$2010$ and the ACA (Asian Control Association) Temasek Young Educator Award in~$2019$. He received the National Science Fund for Excellent Young Scholars in~$2017$. He is serving as an Associate Editor for the IEEE Transactions on Cybernetics, IEEE Transactions on Control of Network Systems, IEEE Control Systems Letters (L-CSS), Systems~$\&$ Control Letters.
\end{IEEEbiography}

\vskip 0pt plus -1fil
\begin{IEEEbiography}[{\includegraphics[width=1in,height=1.25in,clip,keepaspectratio]{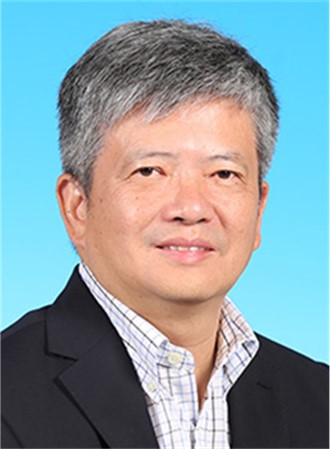}}]{Li Qiu}
(F’$07$) received his Ph.D. degree in electrical engineering from the
University of Toronto in 1990. After briefly working in the Canadian Space Agency, the Fields Institute for Research in Mathematical Sciences (Waterloo),
and the Institute of Mathematics and its Applications (Minneapolis), he joined the Hong Kong University of Science and Technology in 1993, where he is now a
Professor of Electronic and Computer Engineering.

Prof.\ Qiu's research interests include system, control,
optimization theory, and mathematics for information technology, as well as their
applications in manufacturing industry and energy systems.
He is also interested in control education and coauthored an undergraduate textbook
``Introduction to Feedback Control'' which was published by Prentice-Hall in 2009.
He served as an
associate editor of the {\em IEEE Transactions on Automatic Control} and an associate editor of {\em Automatica}. He was the general chair of the 7th Asian Control Conference, which was held in Hong Kong in 2009. He was a Distinguished Lecturer from 2007 to 2010 and was a member of the Board of Governors in 2012 and 2017 of the IEEE Control Systems Society. He is the founding chairperson of the Hong Kong Automatic Control Association and a vice president of Asian Control Association. He is a Fellow of IEEE and a Fellow of IFAC.
\end{IEEEbiography}








\end{document}